  \theoremstyle{definition}
  \newtheorem{defn}{Definition}
 \theoremstyle{plain}
\newtheorem{thm}{Theorem}
  \theoremstyle{remark}
  \newtheorem{rem}{Remark}
   \newenvironment{proof}[1][\proofname]{\par
     \normalfont\topsep6\p@\@plus6\p@\relax
     \trivlist
     \itemindent\parindent
     \item[\hskip\labelsep
           \scshape
       #1]\ignorespaces
   }{%
     \endtrivlist\@endpefalse
   }
   \providecommand{\proofname}{Proof}
 \theoremstyle{plain}
  \newtheorem{cor}{Corollary}
 \theoremstyle{plain}
  \newtheorem{lem}{Lemma}
\begin{document}
\IEEEoverridecommandlockouts

\title{The Lossy Common Information of Correlated Sources}

\author{Kumar Viswanatha$^{\ast}$, Emrah Akyol$^{\dagger}$, \IEEEmembership{Member,~IEEE}, and \\ Kenneth Rose$^{\ddagger}$, \IEEEmembership{Fellow,~IEEE,}
\thanks{The work was supported by the NSF under grants CCF-1016861, CCF-1118075 and CCF-1320599. At the time of this work, all authors were with the Department of Electrical and Computer Engineering, University of California - Santa Barbara, CA. The material in this paper was presented in part at the IEEE Information Theory Workshop (ITW) at Paraty, Brazil, Oct 2011 and the IEEE International Symposium on Information Theory (ISIT) at Boston, MA, USA, Jul 2012.} 
\thanks{$^{\ast}$K. Viswanatha is currently with the Qualcomm research center, San Diego, CA, USA. (e-mail: kumar@ece.ucsb.edu)}
\thanks{$^{\dagger}$E. Akyol is with the Electrical Engineering Department, University of Southern California, CA, USA, and with the Electrical and Computer Engineering Department, University of California - Santa Barbara, CA, USA. (e-mail: eakyol@usc.edu).}
\thanks{$^{\ddagger}$K. Rose is with the Electrical and Computer Engineering Department, University of California - Santa Barbara, CA, USA. (e-mail: rose@ece.ucsb.edu)}}



\maketitle
\begin{abstract}
The two most prevalent notions of common information (CI) are due
to Wyner and Gács-Körner and both the notions can be stated as two
different characteristic points in the \textit{lossless} Gray-Wyner
region. Although the information theoretic characterizations for these
two CI quantities can be easily evaluated for random variables with
infinite entropy (eg., continuous random variables), their operational
significance is applicable only to the lossless framework. The primary
objective of this paper is to generalize these two CI notions to the
lossy Gray-Wyner network, which hence extends the theoretical foundation
to general sources and distortion measures. We begin by deriving a
single letter characterization for the lossy generalization of Wyner's
CI, defined as the minimum rate on the shared branch of the Gray-Wyner
network, maintaining minimum sum transmit rate when the two decoders
reconstruct the sources subject to individual distortion constraints.
To demonstrate its use, we compute the CI of bivariate Gaussian random
variables for the entire regime of distortions. We then similarly
generalize Gács and Körner's definition to the lossy framework. The
latter half of the paper focuses on studying the tradeoff
between the total transmit rate and receive rate in the Gray-Wyner
network. We show that this tradeoff yields a contour of points on
the surface of the Gray-Wyner region, which passes through both the
Wyner and Gács-Körner operating points, and thereby provides a unified
framework to understand the different notions of CI. We further show
that this tradeoff generalizes the two notions of CI to the excess
sum transmit rate and receive rate regimes, respectively.\end{abstract}
\begin{IEEEkeywords}
Common information, Gray-Wyner network, Multiterminal source coding
\end{IEEEkeywords}

\section{Introduction\label{sec:Introduction}}

The quest for a meaningful and useful notion of common information
(CI) of two discrete random variables (denoted by $X$ and $Y$) has
been actively pursued by researchers in information theory for over
three decades. A seminal approach to quantify CI is due to Gács and
Körner \cite{Korner_CI} (denoted here by $C_{GK}(X,Y)$), who defined
it as the maximum amount of information relevant to both random variables,
one can extract from the knowledge of either one of them. Their result
was of considerable theoretical interest, but also fundamentally negative
in nature. They showed that $C_{GK}(X,Y)$ is usually much smaller
than the mutual information and is non-zero only when the joint distribution
satisfies certain unique properties. Wyner proposed an alternative
notion of CI \cite{Wyner_CI} (denoted here by $C_{W}(X,Y)$) inspired
by earlier work in multi-terminal source coding \cite{GW}. Wyner's
CI is defined as:

\begin{equation}
C_{W}(X,Y)=\inf I(X,Y;U)\label{eq:Wyner_CI}
\end{equation}
where the infimum is over all random variables, $U$, such that $X\leftrightarrow U\leftrightarrow Y$
form a Markov chain in that order. He showed that $C_{W}(X,Y)$ is
equal to the minimum rate on the shared branch of the lossless Gray-Wyner
network (described in section \ref{sub:Gray-Wyner-Network} and Fig.
\ref{fig:The-Gray-Wyner-network}), when the sum rate is constrained
to be the joint entropy. In other words, it is the minimum amount
of shared information that must be sent to both decoders, while restricting
the overall transmission rate to the minimum, $H(X,Y)$. 

\IEEEpubidadjcol

We note that although $C_{GK}(X,Y)$ and $C_{W}(X,Y)$ were defined
from theoretical standpoints, they play important roles in understanding
the performance limits in several practical networking and database
applications, see eg., \cite{ITW_CI}. We further note in passing
that several other definitions of CI, with applications in different
fields, have appeared in the literature \cite{dual_CI,Yamamoto_CI},
but are less relevant to us here.

\begin{figure}[!t]
\centering

\includegraphics[scale=0.4]{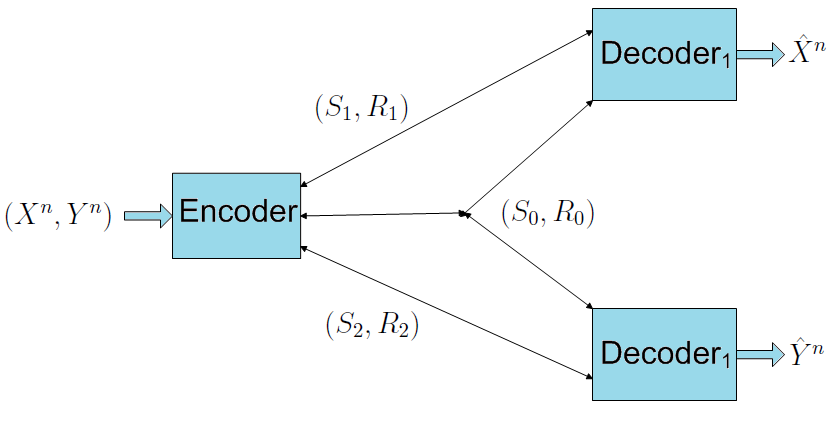}\caption{The Gray-Wyner network\label{fig:The-Gray-Wyner-network}}
\end{figure}

Although the quantity in (\ref{eq:Wyner_CI}) can be evaluated for
random variables with infinite entropy (eg. continuous random variables),
for such random variables it lacks the underlying theoretical interpretation,
namely, as a distinctive operating point in the Gray-Wyner region,
and thereby mismatches Wyner's original reasoning. This largely compromises
its practical significance and calls for a useful generalization which
can be easily extended to infinite entropy distributions. Our primary
step is to characterize a lossy coding extension of Wyner's CI (denoted
by $C_{W}(X,Y;D_{1},D_{2})$), defined as the minimum rate on the
shared branch of the Gray-Wyner network at minimum sum rate when the
sources are decoded at respective distortions of $D_{1}$ and $D_{2}$.
Note that the minimum sum rate at distortions $D_{1}$ and $D_{2}$
is given by Shannon's rate distortion function, hereafter denoted
by $R{}_{X,Y}(D_{1},D_{2})$. In this paper, our main objective is
to derive an information theoretic characterization for $C_{W}(X,Y;D_{1},D_{2})$
for general sources and distortion measures. Using this characterization,
we derive the lossy CI of two correlated Gaussian random variables
for the entire regime of distortions. This example highlights several
important characteristics of the lossy CI and challenges that underlie 
optimal encoding in the Gray-Wyner network. We note that although
there is no prior work on characterizing $C_{W}(X,Y;D_{1},D_{2})$,
in a recent work \cite{Lossy_CI_Xu}, Xu et al. provided an asymptotic
definition for $C_{W}(X,Y;D_{1},D_{2})$%
\footnote{They denote $C_{W}(X,Y;D_{1},D_{2})$ by $C_{3}(D_{1},D_{2})$%
} and showed that there exists a region of small enough distortions
where $C_{W}(X,Y;D_{1},D_{2})$ coincides with Wyner's single letter
characterization in (\ref{eq:Wyner_CI}). We further note that there have been other physical interpretations of both notions of CI, irrespective of the Gray-Wyner network, including already in \cite{Korner_CI,Wyner_CI}, whose connections with the lossy generalizations we consider herein are less direct and beyond the scope of this paper.

The last section of the paper focuses on the tradeoff between the
total transmit rate and receive rate in the Gray-Wyner network, which
directly relates the two notions of CI. Although it is well known that
the two definitions of CI can be characterized as two extreme points
in the Gray-Wyner region, no contour with operational significance
is known which connects them. We show that the tradeoff between transmit
and receive rates leads to a contour of points on the boundary of
the Gray-Wyner region, which passes through the operating points of
both Wyner and Gács-Körner. Hence, this tradeoff plays an important
role in gaining theoretical insight into more general notions of shared
information. Beyond theoretical insight, this tradeoff also plays
a role in understanding fundamental limits in many practical applications
including storage of correlated sources and minimum cost routing for
networks (see eg., \cite{ITW_CI}). Motivated by these applications,
we consider the problem of deriving a single letter characterization
for the optimal tradeoff between the total transmit versus receive
rate in the Gray-Wyner network. We provide a complete single letter
characterization for the lossless setting. We develop further insight
into this tradeoff by defining two quantities $C(X,Y;R^{'})$ and
$K(X,Y;R^{''})$, which quantify the shared rate as a function of
the total transmit and receive rates, respectively. These two quantities
generalize the lossless notions of CI to the excess sum transmit rate
and receive rate regimes, respectively. Finally, we use these properties
to derive alternate characterizations for the two definitions of CI
under a broader unified framework. We note that a preliminary version
of our results appeared in \cite{ITW_CI} and \cite{Lossy_CI}.

The rest of the paper is organized as follows. A summary of prior
results pertaining to the Gray-Wyner network and the two notions of
CI is given in Section \ref{sec:Prior-Results}. We define a lossy
generalization of Wyner's CI and derive a single letter information
theoretic characterization in Section \ref{sec:Main_result}. Next,
we specialize to the lossy CI of two correlated Gaussian random variables
using the information theoretic characterization. In Section \ref{sec:Gacs-Korner's-CI},
we extend the Gács and Körner CI definition to the lossy framework.
In Section \ref{sec:trade-off}, we study the tradeoff between the
sum rate and receive rate in the Gray-Wyner network and show that
a corresponding contour of points on the boundary of the Gray-Wyner
region emerges, which passes through both the CI operating points
of Wyner and Gács-Körner.

\section{Prior Results\label{sec:Prior-Results}}

\subsection{The Gray-Wyner Network \cite{GW}\label{sub:Gray-Wyner-Network}}

Let $(X,Y)$ be any two dependent random variables taking values in
the alphabets $\mathcal{X}$ and $\mathcal{Y}$, respectively. Let
$\hat{\mathcal{X}}$ and $\hat{\mathcal{Y}}$ be the respective reconstruction
alphabets. For any positive integer $M$, we denote the set $\{1,2\ldots M\}$
by $I_{M}$. A sequence of $n$ independent and identically distributed
(iid) random variables is denoted by $X^{n}$ and the corresponding
alphabet by $\mathcal{X}^{n}$. In what follows, for any pair of random
variables $X$ and $Y$, $R{}_{X}(\cdot)$, $R{}_{Y}(\cdot)$ and
$R{}_{X,Y}(\cdot,\cdot)$ denote the respective rate distortion functions.
With slight abuse of notation, we use $H(\cdot)$ to denote the entropy
of a discrete random variable or the differential entropy of a continuous
random variable. 


A rate-distortion tuple $(R_{0},R_{1},R_{2},D_{1},D_{2})$ is said
to be achievable by the Gray-Wyner network if for all $\epsilon>0$,
there exists encoder and decoder mappings: 
\begin{eqnarray}
f_{E}:\mathcal{X}^{n}\times\mathcal{Y}^{n} & \rightarrow & I_{M_{0}}\times I_{M_{1}}\times I_{M_{2}}\nonumber \\
f_{D}^{(X)}:I_{M_{0}}\times I_{M_{1}} & \rightarrow & \hat{\mathcal{X}}^{n}\nonumber \\
f_{D}^{(Y)}:I_{M_{0}}\times I_{M_{2}} & \rightarrow & \mathcal{\hat{Y}}^{n}
\end{eqnarray}
where $f_{E}(X^n,Y^n)=(S_0,S_1,S_2)$, $S_0 \in I_{M_0}$, $S_1 \in I_{M_1}$ and $S_2 \in I_{M_2}$, such that the following hold:
\begin{eqnarray}
M_{i} & \leq & 2^{n(R_{i}+\epsilon)},\,\,\, i\in\{0,1,2\}\nonumber \\
\Delta_{X} & \leq & D_{1}+\epsilon\nonumber \\
\Delta_{Y} & \leq & D_{2}+\epsilon
\end{eqnarray}
where $ $$\hat{X}^{n}=f_{D}^{(X)}(S_{0},S_{1}),\,\,\hat{Y}^{n}=f_{D}^{(Y)}(S_{0},S_{2})$
and
\begin{eqnarray}
\Delta_{X} & = & \frac{1}{n}\sum_{i=1}^{n}d_{X}(X_{i},\hat{X}_{i})\nonumber \\
\Delta_{Y} & = & \frac{1}{n}\sum_{i=1}^{n}d_{Y}(Y_{i},\hat{Y}_{i})\label{eq:GW_dist_defn}
\end{eqnarray}
for some well defined single letter distortion measures $d_{X}(\cdot,\cdot)$
and $d_{Y}(\cdot,\cdot)$. The convex closure over all such achievable
rate-distortion tuples is called the achievable region for the Gray-Wyner
network. The set of all achievable rate tuples for any given distortion
$D_{1}$ and $D_{2}$ is denoted here by $\mathcal{R}_{GW}(D_{1},D_{2})$.
We denote the lossless Gray-Wyner region, as defined in \cite{GW},
simply by $\mathcal{R}_{GW}$, when $X$ and $Y$ are random variables
with finite entropy. 

Gray and Wyner \cite{GW} gave the following complete characterization
for $\mathcal{R}_{GW}(D_{1},D_{2})$. Let $(U,\hat{X},\hat{Y})$ be
any random variables jointly distributed with $(X,Y)$ and taking
values in alphabets $\mathcal{U},\hat{\mathcal{X}}$ and $\hat{\mathcal{Y}}$,
respectively (for any arbitrary $\mathcal{U}$). Let the joint density
be $P(X,Y,U,\hat{X},\hat{Y})$. All rate-distortion tuples $(R_{0},R_{1},R_{2},D_{1},D_{2})$
satisfying the following conditions are achievable:
\begin{eqnarray}
R_{0} & \geq & I(X,Y;U)\nonumber \\
R_{1} & \geq & I(X;\hat{X}|U)\nonumber \\
R_{2} & \geq & I(Y;\hat{Y}|U)\nonumber \\
D_{1} & \geq & E(d_{X}(X,\hat{X}))\nonumber \\
D_{2} & \geq & E(d_{Y}(Y,\hat{Y}))\label{eq:GW_Lossy_Region}
\end{eqnarray}
The closure of the achievable rate distortion tuples over all such
joint densities is the complete rate-distortion region for the Gray-Wyner
network, $\mathcal{R}_{GW}(D_{1},D_{2})$. For the lossless framework,
the above characterization simplifies significantly. Let $U$ be any
random variable jointly distributed with $(X,Y)$. Then, all rate
tuples satisfying the following conditions belong to the lossless
Gray-Wyner region:

\begin{eqnarray}
R_{0} & \geq & I(X,Y;U)\nonumber \\
R_{1} & \geq & H(X|U)\nonumber \\
R_{2} & \geq & H(Y|U)\label{eq:GW_Lossless_Region}
\end{eqnarray}
The convex closure of achievable rates, over all such joint densities
is denoted by $\mathcal{R}_{GW}$.

\subsection{Wyner's Common Information}

Wyner's CI, denoted by $C_{W}(X,Y)$, is defined as:
\begin{equation}
C_{W}(X,Y)=\inf I(X,Y;U)
\end{equation}
where the infimum is over all random variables $U$ such that $X\leftrightarrow U\leftrightarrow Y$
form a Markov chain in that order. Wyner showed that $C_{W}(X,Y)$
is equal to the minimum rate on the shared branch of the GW network,
while the total sum rate is constrained to be the joint entropy. To
formally state the result, we first define the set $\mathcal{R}_{W}$.
A common rate $R_{0}$ is said to belong to $\mathcal{R}_{W}$ if
for any $\epsilon>0$, there exists a point $(R_{0},R_{1},R_{2})$
such that:
\[
(R_{0},R_{1},R_{2})\in\mathcal{R}_{GW}
\]
\begin{equation}
R_{0}+R_{1}+R_{2}\leq H(X,Y)+\epsilon\label{eq:Wyner_CI_Defn}
\end{equation}
Then, Wyner showed that:
\begin{equation}
C_{W}(X,Y)=\inf R_{0}\in\mathcal{R}_{W}
\end{equation}
It is worthwhile noting that, if the random variables $(X,Y)$ are
such that every point on the plane $R_{0}+R_{1}+R_{2}=H(X,Y)$ satisfies
(\ref{eq:GW_Lossless_Region}) with equality for some joint density
$P(X,Y,U)$, then (\ref{eq:Wyner_CI_Defn}) can be simplified by setting
$\epsilon=0$, without loss in optimality. We again note that Wyner
showed that the quantity $C_{W}(X,Y)$ has other operational interpretations,
besides the Gray-Wyner network. Their relations to the lossy generalization
we define in this paper are less obvious and will be considered as
part of our future work.

\subsection{The Gács and Körner Common Information}

Let $X$ and $Y$ be two dependent random variables taking values
on finite alphabets $\mathcal{X}$ and $\mathcal{Y}$, respectively.
Let $X^{n}$ and $Y^{n}$ be $n$ independent copies of $X$ and $Y$.
Gács and Körner defined CI of two random variables as follows:
\begin{equation}
C_{GK}(X,Y)=\sup\frac{1}{n}H(f_{1}(X^{n}))\label{eq:GC_defn-1}
\end{equation}
where sup is taken over all sequences of functions $f_{1}^{(n)}$,
$f_{2}^{(n)}$, such that $P(f_{1}(X^{n})\neq f_{2}(Y^{n}))\rightarrow0$.
It can be understood as the maximum rate of the codeword that can
be generated individually at two encoders observing $X^{n}$ and $Y^{n}$
separately. To describe their main result, we need the following definition.
\begin{defn}
\label{def:J}Without loss of generality, we assume $P(X=x)>0\,\,\forall x\in\mathcal{X}$
and $P(Y=y)>0,\,\,\forall y\in\mathcal{Y}$. Ergodic decomposition
of the stochastic matrix of conditional probabilities $P(X=x|Y=y)$,
is defined by a partition of the space $\mathcal{X}\times\mathcal{Y}$
into disjoint subsets, $\mathcal{X}\times\mathcal{Y}=\bigcup_{j}\mathcal{X}_{j}\times\mathcal{Y}_{j}$,
such that $\forall j$:
\begin{eqnarray}
P(X=x|Y=y) & = & 0\,\,\,\forall x\in\mathcal{X}_{j},y\notin\mathcal{Y}_{j}\nonumber \\
P(Y=y|X=x) & = & 0\,\,\,\forall y\in\mathcal{Y}_{j},x\notin\mathcal{X}_{j}\label{eq:Erg_Decomp_Defn}
\end{eqnarray}
Observe that, the ergodic decomposition will always be such that if
$x\in\mathcal{X}_{j}$, then $Y$ must take values in $\mathcal{Y}_{j}$
and vice-versa, i.e., if $y\in\mathcal{Y}_{j}$, then $X$ must take
values only in $\mathcal{X}_{j}$. Let us define the random variable
$J$ as:
\begin{equation}
J=j\,\,\mbox{iff}\,\, x\in\mathcal{X}_{j}\Leftrightarrow y\in\mathcal{Y}_{j}\label{eq:J_defn}
\end{equation}
Gács and Körner showed that $C_{GK}(X,Y)=H(J)$. 
\end{defn}
The original definition of CI, due to Gács and Körner, was naturally
unrelated to the Gray-Wyner network, which it predates. However, an
alternate and insightful characterization of $C_{GK}(X,Y)$ in terms
of $\mathcal{R}_{GW}$, was given by Ahlswede and Körner in \cite{Ahlswede74oncommon} (and also recently appeared in \cite{dual_CI}). 
To formally state the result, we define the set $\mathcal{R}_{GK}$.
A common rate $R_{0}$ is said to belong to $\mathcal{R}_{GK}$ if
for any $\epsilon>0$, there exists a point $(R_{0},R_{1},R_{2})$
such that:

\begin{equation}
(R_{0},R_{1},R_{2})\in\mathcal{R}_{GW}\label{eq:AK_GK-1}
\end{equation}
\begin{eqnarray}
R_{0}+R_{1} & \leq & H(X)+\epsilon\nonumber \\
R_{0}+R_{2} & \leq & H(Y)+\epsilon\label{eq:constraint_AK_GK}
\end{eqnarray}
Then:
\[
C_{GK}(X,Y)=\sup R_{0}\in\mathcal{R}_{GK}
\]
Specifically, Ahlswede and Körner showed that:
\begin{eqnarray}
H(J)=C_{GK}(X,Y)=\sup I(X,Y;U)
\end{eqnarray}
subject to,
\begin{equation}
Y\leftrightarrow X\leftrightarrow U\mbox{ and }X\leftrightarrow Y\leftrightarrow U\label{eq:GK_AK_mc}
\end{equation}
This characterization of $C_{GK}(X,Y)$ in terms of the Gray-Wyner
network offers a crisp understanding of the difference in objective
between the two approaches. It will become evident in Section \ref{sec:trade-off}
that the two are in fact instances of a more general objective, within
a broader unified framework. Again it is worthwhile noting that, if
every point on the intersection of the planes $R_{0}+R_{1}=H(X)$
and $R_{0}+R_{2}=H(Y)$ satisfies (\ref{eq:GW_Lossless_Region}) with
equality for some joint density $P(X,Y,U)$, then the above definition
can be simplified by setting $\epsilon=0$.

\section{Lossy Extension of Wyner's Common Information \label{sec:Main_result}}

\subsection{Definition}

We generalize Wyner's CI to the lossy framework and denote it by $C_{W}(X,Y;D_{1},D_{2})$.
Let $\mathcal{R}_{W}(D_{1},D_{2})$ be the set of all $R_{0}$ such
that, for any $\epsilon>0$, there exists a point $(R_{0},R_{1},R_{2})$
satisfying the following conditions:
\[
(R_{0},R_{1},R_{2})\in\mathcal{R}_{GW}(D_{1},D_{2})
\]
\[
R_{0}+R_{1}+R_{2}\leq R_{X,Y}(D_{1},D_{2})+\epsilon
\]
Then, the lossy generalization of Wyner's CI is defined as the infimum
over all such shared rates $R_{0}$, i.e., 
\[
C_{W}(X,Y;D_{1},D_{2})=\inf R_{0}\in\mathcal{R}_{W}(D_{1},D_{2})
\]
Note that, for any distortion pair $(D_{1},D_{2})$, if every point
on the plane $R_{0}+R_{1}+R_{2}=R_{X,Y}(D_{1},D_{2})$ satisfies (\ref{eq:GW_Lossy_Region})
with equality for some joint density $P(X,Y,U,\hat{X},\hat{Y})$,
then the above definition can be simplified by setting $\epsilon=0$.
We note that the plane $R_{0}+R_{1}+R_{2}=R{}_{X,Y}(D_{1},D_{2})$
is called the Pangloss plane in the literature \cite{GW}. We 
note that the above operational definition of $C_{W}(X,Y;D_{1},D_{2})$,
has also appeared recently in \cite{Lossy_CI_Xu}, albeit without
a single letter information theoretic characterization. The primary
objective of Section \ref{sub:Single-Letter-Characterization} is
to characterize $C_{W}(X,Y;D_{1},D_{2})$ for general sources and
distortion measures. Wyner gave the complete single letter characterization
of $C_{W}(X,Y;0,0)$ when $X$ and $Y$ have finite joint entropy
and the distortion measure is Hamming distortion ($d(x,\hat{x})=\boldsymbol{1}_{x\neq\hat{x}}$,
$d(y,\hat{y})=\boldsymbol{1}_{y\neq\hat{y}}$), i.e., 
\begin{equation}
C_{W}(X,Y)=C_{W}(X,Y;0,0)=\inf I(X,Y;U)
\end{equation}
where the infimum is over all $U$ satisfying $X\leftrightarrow U\leftrightarrow Y$.

\subsection{Single Letter Characterization of $C_{W}(X,Y;D_{1},D_{2})$\label{sub:Single-Letter-Characterization}}

To simplify the exposition and to make the proof more intuitive, in
the following theorem, we assume that for any distortion pair $(D_{1},D_{2})$,
every point on the Pangloss plane (i.e., $R_{0}+R_{1}+R_{2}=R_{X,Y}(D_{1},D_{2})$)
satisfies (\ref{eq:GW_Lossy_Region}) with equality for some joint
density $P(X,Y,U,\hat{X},\hat{Y})$. We handle the more general setting
in Appendix A. 
\begin{thm}
\label{thm:main}A single letter characterization of $C_{W}(X,Y;D_{1},D_{2})$
is given by:
\begin{equation}
C_{W}(X,Y;D_{1},D_{2})=\inf I(X,Y;U)\label{eq:mt_1}
\end{equation}
where the infimum is over all joint densities $P(X,Y,\hat{X},\hat{Y},U)$
such that the following Markov conditions hold:
\begin{eqnarray}
\hat{X}\leftrightarrow & U & \leftrightarrow\hat{Y}\label{eq:mt_2}\\
(X,Y)\leftrightarrow & (\hat{X},\hat{Y}) & \leftrightarrow U\label{eq:mt_3}
\end{eqnarray}
and where $P(\hat{X},\hat{Y}|X,Y)\in\mathcal{P}_{D_{1},D_{2}}^{X,Y}$
is any joint distribution which achieves the rate distortion function
at $(D_{1},D_{2})$, i.e., $I(X,Y;\hat{X},\hat{Y})=R_{X,Y}(D_{1},D_{2})$,
$E(d_{X}(X,\hat{X}))\leq D_{1}$ and $E(d_{Y}(Y,\hat{Y}))\leq D_{2}$,
$\forall P(\hat{X},\hat{Y}|X,Y)\in\mathcal{P}_{D_{1},D_{2}}^{X,Y}$. \end{thm}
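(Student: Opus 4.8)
The plan is to prove the two inequalities $C_W(X,Y;D_1,D_2)\le\inf I(X,Y;U)$ and $C_W(X,Y;D_1,D_2)\ge\inf I(X,Y;U)$ separately. Throughout I exploit the simplifying assumption, which lets me work directly with the single-letter region \eqref{eq:GW_Lossy_Region} and take $\epsilon=0$: a shared rate $R_0$ lies in $\mathcal{R}_W(D_1,D_2)$ iff there is an auxiliary triple $(U,\hat X,\hat Y)$ with $E(d_X(X,\hat X))\le D_1$ and $E(d_Y(Y,\hat Y))\le D_2$ such that $R_0\ge I(X,Y;U)$ and the induced rate triple lies on the Pangloss plane $R_0+R_1+R_2=R_{X,Y}(D_1,D_2)$. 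I will use the Gray--Wyner Pangloss bound, namely that the minimum of $R_0+R_1+R_2$ over $\mathcal{R}_{GW}(D_1,D_2)$ equals $R_{X,Y}(D_1,D_2)$, so that ``lying on the Pangloss plane'' is the same as attaining the minimum sum rate.

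\emph{Achievability.} Fix any triple $(U,\hat X,\hat Y)$ admissible in the claimed characterization, i.e.\ satisfying \eqref{eq:mt_2}--\eqref{eq:mt_3} with $(\hat X,\hat Y)$ rate-distortion optimal, and set $R_0=I(X,Y;U)$, $R_1=I(X;\hat X|U)$, $R_2=I(Y;\hat Y|U)$. By \eqref{eq:GW_Lossy_Region} this triple lies in $\mathcal{R}_{GW}(D_1,D_2)$, so it suffices to show that its sum rate does not exceed $R_{X,Y}(D_1,D_2)$. The Markov chain \eqref{eq:mt_3} gives $I(X,Y;\hat X,\hat Y,U)=I(X,Y;\hat X,\hat Y)$, hence $I(X,Y;U)=I(X,Y;\hat X,\hat Y)-I(X,Y;\hat X,\hat Y|U)$, and rate-distortion optimality replaces $I(X,Y;\hat X,\hat Y)$ by $R_{X,Y}(D_1,D_2)$. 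The Markov chain \eqref{eq:mt_2} gives $H(\hat X,\hat Y|U)=H(\hat X|U)+H(\hat Y|U)$, which together with the elementary bound $H(\hat X,\hat Y|X,Y,U)\le H(\hat X|X,U)+H(\hat Y|Y,U)$ yields $I(X;\hat X|U)+I(Y;\hat Y|U)\le I(X,Y;\hat X,\hat Y|U)$. Adding the three rates, the $\pm I(X,Y;\hat X,\hat Y|U)$ terms cancel and I obtain $R_0+R_1+R_2\le R_{X,Y}(D_1,D_2)$. Thus $R_0=I(X,Y;U)\in\mathcal{R}_W(D_1,D_2)$, and taking the infimum over all admissible triples gives $C_W(X,Y;D_1,D_2)\le\inf I(X,Y;U)$.

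\emph{Converse.} Here I argue from an arbitrary length-$n$ code attaining a point with sum rate at most $R_{X,Y}(D_1,D_2)+\epsilon$ and distortions within $(D_1,D_2)$, with messages $(S_0,S_1,S_2)$ and reconstructions $\hat X^n=f_D^{(X)}(S_0,S_1)$, $\hat Y^n=f_D^{(Y)}(S_0,S_2)$. The chain $n(R_{X,Y}(D_1,D_2)+\epsilon)\ge H(S_0,S_1,S_2)\ge I(X^n,Y^n;S_0,S_1,S_2)\ge I(X^n,Y^n;\hat X^n,\hat Y^n)\ge\sum_i I(X_i,Y_i;\hat X_i,\hat Y_i)\ge nR_{X,Y}(D_1,D_2)$ must be tight to within $O(n\epsilon)$. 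Tightness of the last two steps forces each letter to be asymptotically rate-distortion optimal (the optimality condition $I(X,Y;\hat X,\hat Y)=R_{X,Y}(D_1,D_2)$ built into the theorem), while tightness of $I(X^n,Y^n;S_0,S_1,S_2)\approx I(X^n,Y^n;\hat X^n,\hat Y^n)$ forces the single-letterized Markov chain \eqref{eq:mt_3}, $(X,Y)\leftrightarrow(\hat X,\hat Y)\leftrightarrow U$, for the auxiliary $U$ assembled from $S_0$ and the usual past/future side information. Finally I lower bound the shared rate by $nR_0\ge H(S_0)\ge I(X^n,Y^n;S_0)$ and single-letterize to obtain $R_0\ge I(X,Y;U)$; the remaining task is to show that this $U$ can be taken to satisfy the conditional-independence chain \eqref{eq:mt_2}.

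\emph{Main obstacle.} The crux is exactly this last point: showing that the shared-rate-minimizing operating point admits an auxiliary $U$ for which the two private reconstructions become conditionally independent, $\hat X\leftrightarrow U\leftrightarrow\hat Y$. On the Pangloss plane, minimizing $R_0$ is equivalent to maximizing $R_1+R_2=I(X;\hat X|U)+I(Y;\hat Y|U)$, and the achievability identity above shows that imposing \eqref{eq:mt_2} costs nothing in sum rate; the converse assertion---that no operating point can do strictly better by letting $\hat X$ and $\hat Y$ remain correlated given $U$---is the analogue of Wyner's original lossless argument and is where the real work lies. I expect the care needed to make this conditional-independence step rigorous, together with the measurability and discretization issues that arise because $(X,Y)$ may be continuous, to be the principal difficulty, which is also why the fully general case (without the Pangloss-plane assumption) is deferred to the appendix.
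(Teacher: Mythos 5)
Your achievability direction is correct and complete under the standing assumption: choosing $(R_{0},R_{1},R_{2})=\left(I(X,Y;U),I(X;\hat{X}|U),I(Y;\hat{Y}|U)\right)$ in the original Gray--Wyner region \eqref{eq:GW_Lossy_Region} and combining \eqref{eq:mt_3} with your inequality $I(X;\hat{X}|U)+I(Y;\hat{Y}|U)\leq I(X,Y;\hat{X},\hat{Y}|U)$ (valid thanks to \eqref{eq:mt_2}) does place a point with shared rate $I(X,Y;U)$ at sum rate $R_{X,Y}(D_{1},D_{2})$. This is a legitimate variant of the paper's construction, which instead exhibits the point $\left(I(X,Y;U),I(X,Y;\hat{X}^{*}|U),I(X,Y;\hat{Y}^{*}|U,\hat{X}^{*})\right)$ inside the characterization \eqref{eq:VKG_characterization} of Venkataramani et al.

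The converse, however, contains a genuine gap, and you flag it yourself: you never establish that the shared-rate-minimizing operating point admits an auxiliary $U$ with $I(\hat{X};\hat{Y}|U)=0$; you only announce this as ``where the real work lies.'' Your chosen route compounds the difficulty: a fresh $n$-letter converse with $U$ assembled from $S_{0}$ and past/future side information is both unnecessary and incomplete here (per-letter RD optimality from tightness of $\sum_{i}I(X_{i},Y_{i};\hat{X}_{i},\hat{Y}_{i})\geq nR_{X,Y}(D_{1},D_{2})$ holds only asymptotically after convexification, and nothing in your chain of inequalities produces the conditional independence \eqref{eq:mt_2} for the single-letterized $U$). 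The paper's key idea, absent from your proposal, is that no $n$-letter argument is needed at all: under the standing assumption, every Pangloss-plane point of $\mathcal{R}_{GW}(D_{1},D_{2})$ is already represented by some single-letter triple $(U,\hat{X},\hat{Y})$ in the \emph{complete} characterization \eqref{eq:VKG_characterization}, whose sum-rate inequality carries the extra term $I(\hat{X};\hat{Y}|U)$. Feeding the Pangloss constraint into that inequality yields the sandwich
\begin{equation*}
R_{X,Y}(D_{1},D_{2})\geq I(X,Y;U,\hat{X},\hat{Y})+I(\hat{X};\hat{Y}|U)\geq R_{X,Y}(D_{1},D_{2})+I(\hat{X};\hat{Y}|U)\geq R_{X,Y}(D_{1},D_{2}),
\end{equation*}
which simultaneously forces $I(\hat{X};\hat{Y}|U)=0$ (i.e., \eqref{eq:mt_2}), $I(X,Y;U|\hat{X},\hat{Y})=0$ (i.e., \eqref{eq:mt_3}), and $I(X,Y;\hat{X},\hat{Y})=R_{X,Y}(D_{1},D_{2})$; the bound $R_{0}\geq I(X,Y;U)$ then comes directly from the first inequality of \eqref{eq:VKG_characterization}. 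In short, the conditional independence you correctly identify as the crux is obtained for free by working with a complete single-letter region in which $I(\hat{X};\hat{Y}|U)$ appears explicitly in the sum-rate bound; had you insisted on the original characterization \eqref{eq:GW_Lossy_Region}, you would additionally need Kuhn--Tucker-type arguments to eliminate extra Markov constraints, which is exactly why the paper's proof starts from \eqref{eq:VKG_characterization}.
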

\begin{rem}
If we set $\hat{\mathcal{X}}=\mathcal{X}$, $\hat{\mathcal{Y}}=\mathcal{Y}$
and consider the Hamming distortion measure, at $(D_{1},D_{2})=(0,0)$,
it is easy to show that Wyner's CI is obtained as a special case,
i.e., $C_{W}(X,Y;0,0)=C_{W}(X,Y)$. \end{rem}
\begin{proof}
We note that, although there are arguably simpler methods to prove
this theorem, we choose the following approach as it uses only the
Gray-Wyner theorem without recourse to any supplementary results.
Further, we assume that there exists a unique encoder $P^{*}(\hat{X}^{*},\hat{Y}^{*}|X,Y)\in\mathcal{P}_{D_{1},D_{2}}^{X,Y}$
which achieves $R{}_{X,Y}(D_{1},D_{2})$. The proof of the theorem
when there are multiple encoders in $\mathcal{P}_{D_{1},D_{2}}^{X,Y}$
follows directly. 

Our objective is to show that every point in the intersection of $\mathcal{R}_{GW}(D_{1},D_{2})$
and the Pangloss plane has $R_{0}=I(X,Y;U)$ for some $U$ jointly
distributed with $(X,Y,\hat{X}^{*},\hat{Y}^{*})$ and satisfying conditions
(\ref{eq:mt_2}) and (\ref{eq:mt_3}). We first prove that every point
in the intersection of the Pangloss plane and $\mathcal{R}_{GW}(D_{1},D_{2})$
is achieved by a joint density satisfying (\ref{eq:mt_2}) and (\ref{eq:mt_3}).
Towards showing this, we begin with an alternate characterization
of $\mathcal{R}_{GW}(D_{1},D_{2})$ (which is also complete) due to
Venkataramani et al. (see section III.B in \cite{VKG})%
\footnote{We note that the theorem can be proved even using the original Gray-Wyner
characterization. However, if we begin with that characterization,
we would require the random variables to satisfy two additional Markov
conditions beyond (\ref{eq:mt_2}) and (\ref{eq:mt_3}). These Markov
conditions can in fact be shown to be redundant from the Kuhn-Tucker
conditions. The alternate approach we choose circumvents these supplementary
arguments. %
}. Let $(U,\hat{X},\hat{Y})$ be any random variables jointly distributed
with $(X,Y)$ such that $E(d_{X}(X,\hat{X}))\leq D_{1}$ and $E(d_{Y}(Y,\hat{Y}))\leq D_{2}$.
Then any rate tuple $(R_{0},R_{1},R_{2})$ satisfying the following
conditions belongs to $\mathcal{R}_{GW}(D_{1},D_{2})$:
\begin{eqnarray}
R_{0} & \geq & I(X,Y;U)\nonumber \\
R_{1}+R_{0} & \geq & I(X,Y;U,\hat{X})\nonumber \\
R_{2}+R_{0} & \geq & I(X,Y;U,\hat{Y})\nonumber \\
R_{0}+R_{1}+R_{2} & \geq & I(X,Y;U,\hat{X},\hat{Y})+I(\hat{X};\hat{Y}|U)\label{eq:VKG_characterization}
\end{eqnarray}
It is easy to show that the above characterization is equivalent to
(\ref{eq:GW_Lossy_Region}). As the above characterization is complete,
this implies that, if a rate-distortion tuple $(R_{0},R_{1},R_{2},D_{1},D_{2})$
is achievable for the Gray-Wyner network, then we can always find
random variables $(U,\hat{X},\hat{Y})$ such that $E(d_{X}(X,\hat{X}))\leq D_{1}$,
$E(d_{Y}(Y,\hat{Y}))\leq D_{2}$ and satisfying (\ref{eq:VKG_characterization}).
We are further interested in characterizing the points in $\mathcal{R}_{GW}(D_{1},D_{2})$
that lie on the Pangloss plane, i.e., $R_{0}+R_{1}+R_{2}=R{}_{X,Y}(D_{1},D_{2})$.
Therefore, for any rate tuple $(R_{0},R_{1},R_{2})$ on the Pangloss
plane in $\mathcal{R}_{GW}(D_{1},D_{2})$, we have the following series
of inequalities:

\begin{eqnarray}
R{}_{X,Y}(D_{1},D_{2}) & = & R_{0}+R_{1}+R_{2}\nonumber \\
 & \geq & I(X,Y;U,\hat{X},\hat{Y})+I(\hat{X};\hat{Y}|U)\nonumber \\
 & \geq & I(X,Y;\hat{X},\hat{Y})+I(\hat{X};\hat{Y}|U)\nonumber \\
 & \geq^{(a)} & R{}_{X,Y}(D_{1},D_{2})+I(\hat{X};\hat{Y}|U)\nonumber \\
 & \geq & R{}_{X,Y}(D_{1},D_{2})
\end{eqnarray}
where $(a)$ follows because $(\hat{X},\hat{Y})$ satisfy the distortion
constraints. Since the above chain of inequalities start and end with
the same quantity, they must all be identities and we have:
\begin{eqnarray}
I(\hat{X};\hat{Y}|U) & = & 0\nonumber \\
I(X,Y;U,\hat{X},\hat{Y}) & = & I(X,Y;\hat{X},\hat{Y})\nonumber \\
I(X,Y;\hat{X},\hat{Y}) & = & R{}_{X,Y}(D_{1},D_{2})
\end{eqnarray}
By assumption, there is a unique encoder, $P^{*}(\hat{X}^{*},\hat{Y}^{*}|X,Y)$,
which achieves $I(X,Y;\hat{X},\hat{Y})=R{}_{X,Y}(D_{1},D_{2})$. It
therefore follows that every point in $\mathcal{R}_{GW}(D_{1},D_{2})$
that lies on the Pangloss plane satisfies (\ref{eq:VKG_characterization})
for some joint density satisfying (\ref{eq:mt_2}) and (\ref{eq:mt_3}). 

It remains to be shown that any joint density $(X,Y,\hat{X}^{*},\hat{Y}^{*},U)$
satisfying (\ref{eq:mt_2}) and (\ref{eq:mt_3}) leads to a sub-region
of $\mathcal{R}_{GW}(D_{1},D_{2})$ which has at least one point on
the Pangloss plane with $R_{0}=I(X,Y;U)$. Formally, denote by $\mathcal{R}(U)$,
the region (\ref{eq:VKG_characterization}) achieved by a joint density
$(X,Y,\hat{X}^{*},\hat{Y}^{*},U)$ satisfying (\ref{eq:mt_2}) and
(\ref{eq:mt_3}). We need to show that $\exists(R_{0},R_{1},R_{2})\in\mathcal{R}(U)$
such that:
\begin{eqnarray}
R_{0}+R_{1}+R_{2} & = & R{}_{X,Y}(D_{1},D_{2})\nonumber \\
R_{0} & = & I(X,Y;U)\label{eq:achievability_to_show}
\end{eqnarray}
Consider the point, $(R_{0},R_{1},R_{2})=\left(I(X,Y;U),I(X,Y;\hat{X}^{*}|U),I(X,Y;\hat{Y}^{*}|U,\hat{X}^{*})\right)$
for any joint density $(X,Y,\hat{X}^{*},\hat{Y}^{*},U)$ satisfying
(\ref{eq:mt_2}) and (\ref{eq:mt_3}). Clearly the point satisfies
the first two conditions in (\ref{eq:VKG_characterization}). Next,
we note that:
\begin{eqnarray}
R_{0}+R_{2} & = & I(X,Y;U)+I(X,Y;\hat{Y}^{*}|U,\hat{X}^{*})\nonumber \\
 & \geq^{(b)} & I(X,Y;U)+I(X,Y;\hat{Y}^{*}|U)\nonumber \\
 & = & I(X,Y;\hat{Y}^{*},U)
\end{eqnarray}
\begin{eqnarray}
R_{0}+R_{1}+R_{2} & =^{(c)} & I(X,Y;\hat{X}^{*},\hat{Y}^{*},U)\\
 & = & I(X,Y;\hat{X}^{*},\hat{Y}^{*})=R{}_{X,Y}(D_{1},D_{2})\nonumber 
\end{eqnarray}
where $(b)$ and $(c)$ follow from the fact that the joint density
satisfies (\ref{eq:mt_2}) and (\ref{eq:mt_3}). Hence, we have shown
the existence of one point in $\mathcal{R}(U)$ satisfying (\ref{eq:achievability_to_show})
for every joint density $(X,Y,\hat{X}^{*},\hat{Y}^{*},U)$ satisfying
(\ref{eq:mt_2}) and (\ref{eq:mt_3}), proving the theorem.
\end{proof}
The following corollary sheds light on several properties related
to the optimizing random variables $U$ in Theorem \ref{thm:main}.
These properties significantly simplify the computation of lossy CI.
\begin{cor}
The joint distribution that optimizes (\ref{eq:mt_1}) in Theorem
\ref{thm:main} satisfies the following properties:
\begin{eqnarray}
\hat{X}^{*}\leftrightarrow & (X,Y,U) & \leftrightarrow\hat{Y}^{*}\nonumber \\
\hat{X}^{*}\leftrightarrow & (X,U) & \leftrightarrow Y\nonumber \\
\hat{Y}^{*}\leftrightarrow & (Y,U) & \leftrightarrow X\label{eq:Lem_add_Mark_Prop}
\end{eqnarray}
i.e., the conditional density can always be written as:
\vfill\eject
\begin{equation}
P(\hat{X}^{*},\hat{Y}^{*},U|X,Y)=P(\hat{X}^{*},U|X)P(\hat{Y}^{*},U|Y)
\end{equation}
\end{cor}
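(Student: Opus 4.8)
The plan is to avoid any perturbation argument on $U$ itself and instead exploit the structure that the Kuhn--Tucker (stationarity) conditions of the \emph{rate--distortion} problem impose on the optimal test channel $P^{*}(\hat{X}^{*},\hat{Y}^{*}|X,Y)$. The crucial observation is that, because the two distortion measures decouple ($d_{X}$ depends only on $(X,\hat{X})$ and $d_{Y}$ only on $(Y,\hat{Y})$), the Lagrangian stationarity conditions of $\min I(X,Y;\hat{X},\hat{Y})$ subject to $E(d_{X})\le D_{1}$ and $E(d_{Y})\le D_{2}$ force the optimal test channel into the product--exponential form
\[
P^{*}(\hat{x},\hat{y}|x,y)=\frac{P^{*}(\hat{x},\hat{y})\,g(x,\hat{x})\,h(y,\hat{y})}{\lambda(x,y)},
\]
where $g(x,\hat{x})=e^{-s_{1}d_{X}(x,\hat{x})}$, $h(y,\hat{y})=e^{-s_{2}d_{Y}(y,\hat{y})}$, and $\lambda(x,y)$ is the normalizer. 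The essential feature is that the coupling between the two reconstructions enters only through the joint marginal $P^{*}(\hat{x},\hat{y})$, while the dependence on the sources factors cleanly across the two sides.

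First I would combine this with the two Markov conditions that the optimizing $U$ already obeys. By (\ref{eq:mt_3}) we may write $P(\hat{x},\hat{y},u|x,y)=P^{*}(\hat{x},\hat{y}|x,y)\,P(u|\hat{x},\hat{y})$, and by (\ref{eq:mt_2}) the constraint $I(\hat{X}^{*};\hat{Y}^{*}|U)=0$ gives $P^{*}(\hat{x},\hat{y})\,P(u|\hat{x},\hat{y})=P(\hat{x},\hat{y},u)=P(u)\,P(\hat{x}|u)\,P(\hat{y}|u)$. Substituting the product--exponential form together with this identity yields
\[
P(\hat{x},\hat{y}|x,y,u)\ \propto\ \bigl[g(x,\hat{x})\,P(\hat{x}|u)\bigr]\,\bigl[h(y,\hat{y})\,P(\hat{y}|u)\bigr],
\]
where the proportionality is in $(\hat{x},\hat{y})$ for fixed $(x,y,u)$ and the normalizer itself factors across the two brackets.

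The three asserted Markov relations are then read directly off this factorization. Marginalizing over $\hat{y}$ shows that $P(\hat{x}|x,y,u)$ depends on $(x,u)$ only, which is $\hat{X}^{*}\leftrightarrow(X,U)\leftrightarrow Y$; the symmetric marginalization gives $\hat{Y}^{*}\leftrightarrow(Y,U)\leftrightarrow X$; and the product split of the conditional law of $(\hat{X}^{*},\hat{Y}^{*})$ given $(X,Y,U)$ gives $\hat{X}^{*}\leftrightarrow(X,Y,U)\leftrightarrow\hat{Y}^{*}$. Together these are exactly (\ref{eq:Lem_add_Mark_Prop}), and the stated product form of the conditional density follows. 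I note that this argument invokes only the feasibility of $U$ (i.e., that it satisfies (\ref{eq:mt_2}) and (\ref{eq:mt_3})) together with the Kuhn--Tucker form of the test channel; the minimality of $I(X,Y;U)$ is not used beyond guaranteeing feasibility, which explains why the earlier footnote can call these conditions ``redundant.''

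The main obstacle is establishing and correctly invoking the product--exponential form of the optimal test channel. This requires the Kuhn--Tucker stationarity conditions for the joint rate--distortion problem and care regarding the support of $P^{*}(\hat{x},\hat{y})$, since the identity holds only on that support and points outside it must be shown to carry no probability. A related difficulty is the passage from finite alphabets to the continuous (e.g., Gaussian) setting, where the exponential-family form should be justified by a variational argument rather than a finite-dimensional Kuhn--Tucker system. A secondary point to verify is the boundary behaviour as $D_{1}$ or $D_{2}\to 0$, where $s_{1}$ or $s_{2}$ diverges and the degenerate reconstruction must be handled so that the factorization survives in the limit. Once the test-channel form is in hand, the remaining manipulations are the routine substitutions sketched above.
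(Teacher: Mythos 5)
Your proposal follows essentially the same route as the paper's Appendix B proof: both derive the product--exponential (Kuhn--Tucker) form of the RD-optimal test channel from Lagrangian stationarity, substitute it together with the two Markov constraints (\ref{eq:mt_2}) and (\ref{eq:mt_3}) to factor the conditional law as a product of a term in $(x,\hat{x},u)$ and a term in $(y,\hat{y},u)$ times a normalizer in $(x,y)$, and read the three Markov chains off that factorization. Your added remarks --- that only feasibility of $U$ is used (not minimality), and the caveats about support, continuous alphabets, and the $D_{i}\to 0$ boundary --- are correct and in fact slightly more careful than the paper's ``routine steps and simplifications.''
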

\begin{proof}

We relegate the proof to Appendix B as it is quite orthogonal to the
main flow of the paper. 
\end{proof}

We note that, in general, $C_{W}(X,Y;D_{1},D_{2})$ is neither convex/concave
nor monotonic with respect to $(D_{1},D_{2})$. As we will see later,
$C_{W}(X,Y;D_{1},D_{2})$ is non-monotonic even for two correlated
Gaussian random variables under mean squared distortion measure. This
makes it hard to establish conclusive inequality relations between
$C_{W}(X,Y;D_{1},D_{2})$ and $C_{W}(X,Y)$ for all distortions. However,
in the following lemma, we establish sufficient conditions on $(D_{1},D_{2})$
for $C_{W}(X,Y;D_{1},D_{2})\lessgtr C_{W}(X,Y)$. In Appendix C, we 
review some of the results pertinent to Shannon lower bounds for vectors
of random variables, that will be useful in the following Lemma. 
\begin{lem}
\label{lem:ESLB}For any pair of random variables $(X,Y)$, \end{lem}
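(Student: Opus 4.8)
The plan is to follow the classical Shannon-lower-bound derivation, treating $(X,Y)$ as a single vector source subject to the pair of per-component distortion constraints. I would begin from the characterization $R_{X,Y}(D_1,D_2)=\inf I(X,Y;\hat X,\hat Y)$, where the infimum is over all test channels $P(\hat X,\hat Y|X,Y)$ with $E(d_X(X,\hat X))\le D_1$ and $E(d_Y(Y,\hat Y))\le D_2$, and rewrite the mutual information as
\begin{equation}
I(X,Y;\hat X,\hat Y)=H(X,Y)-H(X,Y\mid\hat X,\hat Y).\nonumber
\end{equation}
Since $H(X,Y)$ is fixed by the source, the task reduces entirely to \emph{upper} bounding the conditional entropy $H(X,Y\mid\hat X,\hat Y)$ uniformly over all admissible test channels.

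For the difference-distortion measures under which the Shannon lower bound is stated, write $d_X(x,\hat x)=\rho_X(x-\hat x)$ and $d_Y(y,\hat y)=\rho_Y(y-\hat y)$, and let $(N_1,N_2)=(X-\hat X,\,Y-\hat Y)$ be the reconstruction-error vector. Translation invariance of differential entropy gives $H(X,Y\mid\hat X,\hat Y)=H(N_1,N_2\mid\hat X,\hat Y)$, since conditioned on $(\hat X,\hat Y)$ the map to the error vector is a deterministic bijection; conditioning-reduces-entropy then yields $H(N_1,N_2\mid\hat X,\hat Y)\le H(N_1,N_2)$. The distortion constraints become the marginal moment constraints $E(\rho_X(N_1))\le D_1$ and $E(\rho_Y(N_2))\le D_2$, so $H(N_1,N_2)\le\Phi(D_1,D_2)$, where
\begin{equation}
\Phi(D_1,D_2)=\sup\{H(N_1,N_2):E(\rho_X(N_1))\le D_1,\;E(\rho_Y(N_2))\le D_2\}.\nonumber
\end{equation}
Chaining these inequalities and taking the infimum over test channels gives the lemma in the form $R_{X,Y}(D_1,D_2)\ge H(X,Y)-\Phi(D_1,D_2)$.

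To reach the single-letter form that the remainder of the section uses, I would further decouple $\Phi$ via subadditivity $H(N_1,N_2)\le H(N_1)+H(N_2)$, which splits the joint maximum-entropy problem into two scalar problems and is attained with equality by taking $N_1$ and $N_2$ independent, each drawn from the maximum-entropy (Gibbs) density for its own distortion kernel, as reviewed in Appendix C. This yields $\Phi(D_1,D_2)=\phi_X(D_1)+\phi_Y(D_2)$, with $\phi_X(D_1)=\sup\{H(N_1):E(\rho_X(N_1))\le D_1\}$ and the analogous definition of $\phi_Y$, so that $R_{X,Y}(D_1,D_2)\ge H(X,Y)-\phi_X(D_1)-\phi_Y(D_2)$.

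The information inequalities themselves are routine; the substantive work lies in justifying the two structural ingredients at the claimed level of generality. First, the translation-invariance step presumes genuine difference distortions on an alphabet with group (e.g., Euclidean) structure, so I would state this hypothesis explicitly rather than for an arbitrary pair $(\mathcal X,\mathcal Y)$. Second, the closed-form evaluation of $\phi_X$ and $\phi_Y$ requires that the maximum-entropy densities exist and remain admissible at the prescribed distortion levels, for which I would lean on the vector Shannon-lower-bound results collected in Appendix C to certify existence and to identify the exponential-family optimizer. I expect this maximum-entropy characterization, rather than the entropy inequalities, to be the main obstacle, since it is precisely what makes $\Phi$ computable and hence what renders the bound a usable tool for the subsequent comparison with $C_W(X,Y)$.
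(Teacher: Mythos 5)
Your proposal proves the vector Shannon lower bound itself, namely $R_{X,Y}(D_1,D_2)\geq H(X,Y)-\phi_X(D_1)-\phi_Y(D_2)$, but that is background material (it is exactly what the paper collects in Appendix C), not the content of this lemma. The lemma asserts two comparisons between the lossy and lossless Wyner common information: (i) $C_W(X,Y;D_1,D_2)\leq C_W(X,Y)$ whenever some dominated distortion pair $(\tilde D_1,\tilde D_2)$ with $\tilde D_1\leq D_1$, $\tilde D_2\leq D_2$ satisfies $R_{X,Y}(\tilde D_1,\tilde D_2)=C_W(X,Y)$, and (ii) $C_W(X,Y;D_1,D_2)\geq C_W(X,Y)$ for difference distortion measures whenever the Shannon lower bound is \emph{tight} at $(D_1,D_2)$. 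Your argument never mentions $C_W$, the auxiliary variable $U$, or the Markov structure from Theorem \ref{thm:main}, so it cannot yield either inequality; tightness of the bound you derive appears in the lemma only as a hypothesis, not as something to be proved.

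The missing idea for (ii) is structural rather than an entropy inequality: when the Shannon lower bound is tight, the RD-optimal backward channel factors as $P(X,Y\mid\hat X^*,\hat Y^*)=Q(X\mid\hat X^*)\,Q(Y\mid\hat Y^*)$ --- this is the precise consequence of the additive maximum-entropy backward channel you allude to at the end, made explicit in Appendix C. Given this factorization, any $U$ feasible in the characterization of $C_W(X,Y;D_1,D_2)$ from Theorem \ref{thm:main}, i.e., satisfying $\hat X^*\leftrightarrow U\leftrightarrow\hat Y^*$ and $(X,Y)\leftrightarrow(\hat X^*,\hat Y^*)\leftrightarrow U$, automatically satisfies $X\leftrightarrow U\leftrightarrow Y$: conditioned on $U$ the reproductions $\hat X^*$ and $\hat Y^*$ are independent, and $X$, $Y$ are generated from them through separate channels. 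Hence the feasible set defining the lossy quantity is contained in the feasible set of Wyner's original optimization, and the infimum over the smaller set dominates, giving $C_W(X,Y;D_1,D_2)\geq C_W(X,Y)$. Part (i), an easy monotonicity observation (any shared-rate point achieving $C_W(X,Y)$ at the smaller distortions remains achievable, and at $(D_1,D_2)$ the whole rate $R_{X,Y}(\tilde D_1,\tilde D_2)=C_W(X,Y)$ can be placed on the common branch), is also left unaddressed. In short, your entropy chain is a correct derivation of a different, already-assumed statement, and it cannot be repaired into a proof of the lemma without the backward-channel factorization and the Markov-containment argument above.
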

\begin{itemize}
\item (i) $C_{W}(X,Y;D_{1},D_{2})\leq C_{W}(X,Y)$ at $(D_{1},D_{2})$ if
$\exists(\tilde{D}_{1},\tilde{D}_{2})$ such that $\tilde{D}_{1}\leq D_{1}$,
$\tilde{D}_{2}\leq D_{2}$ and $R{}_{X,Y}(\tilde{D}_{1},\tilde{D}_{2})=C_{W}(X,Y)$.
\item (ii) For any difference distortion measures, $C_{W}(X,Y;D_{1},D_{2})\geq C_{W}(X,Y)$,
if the Shannon lower bound for $R{}_{XY}(D_{1},D_{2})$ is tight at
$(D_{1},D_{2})$.\end{itemize}
\begin{proof}
The proof of (i) is straightforward and hence omitted. Towards proving
(ii), we appeal to standard techniques \cite{Berger_book,cond_rd}
(also refer to Appendix C) which immediately show that the conditional
distribution $P(\hat{X}^{*},\hat{Y}^{*}|X,Y)$ that achieves $R{}_{X,Y}(D_{1},D_{2})$
when Shannon lower bound is tight, has independent backward channels,
i.e.:
\begin{equation}
P_{X,Y|\hat{X}^{*},\hat{Y}^{*}}(x,y|\hat{x}^{*},\hat{y}^{*})=Q_{X|\hat{X}^{*}}(x|\hat{x}^{*})Q_{Y|\hat{Y}^{*}}(y|\hat{y}^{*})\nonumber\label{eq:slb_cond-1}
\end{equation}

Let us consider any $U$ that satisfies $(\hat{X}^{*}\leftrightarrow U\leftrightarrow\hat{Y}^{*})$
and $(X,Y)\leftrightarrow(\hat{X}^{*},\hat{Y}^{*})\leftrightarrow U$.
It is easy to verify that any such joint density also satisfies $X\leftrightarrow U\leftrightarrow Y$.
As the infimum for $C_{W}(X,Y)$ is taken over a larger set of joint
densities, we have $C_{W}(X,Y;D_{1},D_{2})\geq C_{W}(X,Y)$.
\end{proof}
The above lemma highlights the anomalous behavior of $C_{W}(X,Y;D_{1},D_{2})$
with respect to the distortions. Determining the conditions for equality
in Lemma \ref{lem:ESLB}.(ii) is an interesting problem in its own
right. It was shown in \cite{Lossy_CI_Xu} that there always exists
a region of distortions around the origin such that $C_{W}(X,Y;D_{1},D_{2})=C_{W}(X,Y)$.
We will further explore the underlying connections between these results
as part of future work.

\subsection{Bivariate Gaussian Example}

Let $X$ and $Y$ be jointly Gaussian random variables with zero mean,
unit variance and a correlation coefficient of $\rho$. Let the distortion measure be the mean squared error (MSE), i.e., $D_{1}(X,\hat{X})=(X-\hat{X})^{2}$ and $D_{2}(Y,\hat{Y})=(Y-\hat{Y})^{2}$.

\begin{figure*}
\begin{equation}
R{}_{X,Y}(D_{1},D_{2})=\begin{cases}
\frac{1}{2}\log\left(\frac{1-\rho^{2}}{D_{1}D_{2}}\right) & \mbox{if }\bar{D}_{1}\bar{D}_{2}\geq\rho^{2}\\
\frac{1}{2}\log\left(\frac{1-\rho^{2}}{D_{1}D_{2}-\left(\rho-\sqrt{\bar{D}_{1}\bar{D}_{2}}\right)^{2}}\right) & \mbox{if }\bar{D}_{1}\bar{D}_{2}\leq\rho^{2},\min\left\{ \frac{\bar{D}_{1}}{\bar{D}_{2}},\frac{\bar{D}_{2}}{\bar{D}_{1}}\right\} \geq\rho^{2}\\
\frac{1}{2}\log\left(\frac{1}{\min(D_{1},D_{2})}\right) & \mbox{if }\bar{D}_{1}\bar{D}_{2}\leq\rho^{2},\min\left\{ \frac{\bar{D}_{1}}{\bar{D}_{2}},\frac{\bar{D}_{2}}{\bar{D}_{1}}\right\} <\rho^{2}
\end{cases}\label{eq:RD_Func_Gauss}
\end{equation}

\begin{eqnarray}
C_{W}(X,Y,D_{1},D_{2}) & = & \begin{cases}
C_{W}(X,Y)=\frac{1}{2}\log\frac{1+\rho}{1-\rho} & \mbox{if \ensuremath{\max\{D_{1},D_{2}\}\leq1-\rho}}\\
R_{X,Y}(D_{1},D_{2}) & \mbox{if }\bar{D}_{1}\bar{D}_{2}\leq\rho^{2}\\
\frac{1}{2}\log\frac{1-\rho^{2}}{\left(1-\frac{\rho^{2}}{\bar{D}_{1}}\right)D_{1}} & \mbox{if }D_{1}>1-\rho,\bar{D}_{1}\bar{D}_{2}>\rho^{2}\\
\frac{1}{2}\log\frac{1-\rho^{2}}{\left(1-\frac{\rho^{2}}{\bar{D}_{2}}\right)D_{2}} & \mbox{if }D_{2}>1-\rho,\bar{D}_{1}\bar{D}_{2}>\rho^{2}
\end{cases}\label{eq:Gauss_Lossy_CI}
\end{eqnarray}

\end{figure*}


\vfill\eject

Hereafter, for simplicity, we assume that $\rho \in [0,1]$, noting that all results can be easily extended to negative values of $\rho$ with appropriate modifications. The joint rate distortion function is given by (\ref{eq:RD_Func_Gauss}) at the top of the page (see \cite{Berger_book}), where $\bar{D}_{i}=1-D_{i}$.

Let us first consider the range of distortions such that $\bar{D}_{1}\bar{D}_{2}\geq\rho^{2}$.
The RD-optimal random encoder is such that $P(X|\hat{X}^{*})$ and
$P(Y|\hat{Y}^{*})$ are two independent zero mean Gaussian channels
with variances $D_{1}$ and $D_{2}$, respectively. It is easy to
verify that the optimal reproduction distribution (for $(\hat{X}^{*},\hat{Y}^{*})$)
is jointly Gaussian with zero mean. The covariance matrix for $(\hat{X}^{*},\hat{Y}^{*})$
is: 
\begin{equation}
\Sigma_{\hat{X}^{*}\hat{Y}^{*}}=\left[\begin{array}{cc}
\bar{D}_{1} & \rho\\
\rho & \bar{D}_{2}
\end{array}\right]\label{eq:Xs_Ys_Cov_Mat}
\end{equation}

Observe that, at these distortions, the Shannon lower bound is tight.
Next, let us consider the range of distortions such that $\bar{D}_{1}\bar{D}_{2}\leq\rho^{2}$.
The RD-optimal random encoder in this distortion range is such that
$\hat{X}^{*}=a\hat{Y}^{*}$ and the conditional distribution $P(X,Y|\hat{X}^{*}=a\hat{Y}^{*})$
is jointly Gaussian with correlated components, where: 
\begin{equation}
a=\begin{cases}
\sqrt{\frac{\bar{D}_{1}}{\bar{D}_{2}}} & \min\left\{ \frac{\bar{D}_{1}}{\bar{D}_{2}},\frac{\bar{D}_{2}}{\bar{D}_{1}}\right\} \geq\rho^{2}\\
\rho & \min\left\{ \frac{\bar{D}_{1}}{\bar{D}_{2}},\frac{\bar{D}_{2}}{\bar{D}_{1}}\right\} <\rho^{2}
\end{cases}
\end{equation}

Clearly, the Shannon lower bound is not tight in this regime of distortions.
Note that the regime of distortions where $\bar{D}_{1}\bar{D}_{2}\leq\rho^{2}$
and $\min\left\{ \frac{\bar{D}_{1}}{\bar{D}_{2}},\frac{\bar{D}_{2}}{\bar{D}_{1}}\right\} <\rho^{2}$,
is degenerate in the sense that one of the two distortions can be
reduced without incurring any excess sum-rate \cite{Jayanth_Vector_SR}. 

It was shown in \cite{Lossy_CI_Xu} that, for two correlated Gaussian
random variables, $C_{W}(X,Y)=\frac{1}{2}\log\frac{1+\rho}{1-\rho}$
and the infimum achieving $U^{*}$ is a standard Gaussian random variable
jointly distributed with $(X,Y)$ as: 
\begin{eqnarray}
X & = & \sqrt{\rho}U^{*}+\sqrt{1-\rho}N_{1}\nonumber \\
Y & = & \sqrt{\rho}U^{*}+\sqrt{1-\rho}N_{2}\label{eq:Gauss_Lossless_U}
\end{eqnarray}
where $N_{1}$, $N_{2}$ and $U^{*}$ are independent standard Gaussian
random variables. Equipped with these results, we derive in the following
theorem, the lossy CI of two correlated Gaussian random variables,
and then demonstrate its anomalous behavior.

\begin{thm}
The lossy CI of two correlated zero-mean Gaussian random variables
with unit variance and correlation coefficient $\rho$, is given by
(\ref{eq:Gauss_Lossy_CI}) at the top of the page. 
\end{thm}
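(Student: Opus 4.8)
The plan is to evaluate the single-letter characterization of Theorem~\ref{thm:main} directly for the jointly Gaussian source, treating separately the two distortion regimes that appear in the rate--distortion function (\ref{eq:RD_Func_Gauss}). In each regime I would substitute the RD-optimal reproduction law $P^*(\hat X^*,\hat Y^*|X,Y)$ recalled in the text and then minimize $I(X,Y;U)$ over all $U$ obeying the Markov constraints (\ref{eq:mt_2}) and (\ref{eq:mt_3}). The three non-trivial cases of (\ref{eq:Gauss_Lossy_CI}) will all emerge from the single regime $\bar D_1\bar D_2>\rho^2$, while the regime $\bar D_1\bar D_2\le\rho^2$ accounts for the second case.

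I would start with the \emph{degenerate} regime $\bar D_1\bar D_2\le\rho^2$, where the RD-optimal reproductions satisfy $\hat X^*=a\hat Y^*$ for a nonzero constant $a$ and are therefore deterministically (invertibly) related. The constraint $I(\hat X^*;\hat Y^*|U)=0$ imposed by (\ref{eq:mt_2}) then forces $\hat Y^*$, hence $\hat X^*$, to be a deterministic function of $U$: conditioned on $U$ the two variables remain invertibly related, so their conditional mutual information can vanish only if both are a.s.\ determined by $U$. Consequently $I(X,Y;U)\ge I(X,Y;\hat X^*,\hat Y^*)=R_{X,Y}(D_1,D_2)$ by data processing, and the choice $U=\hat X^*$ attains this value while trivially satisfying (\ref{eq:mt_2})--(\ref{eq:mt_3}). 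This yields $C_W(X,Y;D_1,D_2)=R_{X,Y}(D_1,D_2)$.

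Next I would treat $\bar D_1\bar D_2>\rho^2$, where the Shannon lower bound is tight and the backward channels decouple as $X=\hat X^*+W_1$, $Y=\hat Y^*+W_2$ with $W_1,W_2$ independent zero-mean Gaussians of variances $D_1,D_2$, independent of the jointly Gaussian pair $(\hat X^*,\hat Y^*)$ whose covariance is (\ref{eq:Xs_Ys_Cov_Mat}). For achievability I restrict to jointly Gaussian $U$, which by (\ref{eq:mt_2}) may be written with a scalar standard-Gaussian $U$ as $\hat X^*=b_1U+V_1$, $\hat Y^*=b_2U+V_2$ with $V_1,V_2,U$ mutually independent. Matching (\ref{eq:Xs_Ys_Cov_Mat}) gives $b_1b_2=\rho$, $b_1^2\le\bar D_1$, $b_2^2\le\bar D_2$, and since $X,Y$ are then conditionally independent given $U$ through channels of noise variances $1-b_1^2$ and $1-b_2^2$, the objective collapses to
\[
I(X,Y;U)=\tfrac12\log\frac{1-\rho^2}{(1-b_1^2)(1-b_2^2)}.
\]
Minimizing this is equivalent to minimizing $b_1^2+b_2^2$ subject to $b_1^2b_2^2=\rho^2$ and the box constraints. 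By AM--GM the unconstrained minimizer is $b_1^2=b_2^2=\rho$, feasible exactly when $\max\{D_1,D_2\}\le1-\rho$ and returning $\tfrac12\log\frac{1+\rho}{1-\rho}=C_W(X,Y)$ (the first case); when $D_1>1-\rho$ the constraint $b_1^2\le\bar D_1$ binds, forcing $b_1^2=\bar D_1$, $b_2^2=\rho^2/\bar D_1$ and giving the third case, with the fourth following by the $X\leftrightarrow Y$ symmetry.

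The hard part will be the converse, i.e.\ showing Gaussian $U$ is optimal so that no other $U$ improves on the value above. For the first case this is already delivered by Lemma~\ref{lem:ESLB}(ii), since $\bar D_1\bar D_2>\rho^2$ makes the Shannon lower bound tight and hence $C_W(X,Y;D_1,D_2)\ge C_W(X,Y)$, matching achievability. For the third and fourth cases Lemma~\ref{lem:ESLB}(ii) is too loose, and I would argue directly: any feasible $U$ satisfies $X\leftrightarrow U\leftrightarrow Y$ (as established in the proof of Lemma~\ref{lem:ESLB}), so $I(X,Y;U)=h(X,Y)-h(X|U)-h(Y|U)$ and the converse reduces to upper bounding $h(X|U)+h(Y|U)$ subject to $U$ decorrelating the reproductions, i.e.\ $I(\hat X^*;\hat Y^*|U)=0$. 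I expect the tight bound to follow from a maximum-entropy / conditional entropy-power argument that exploits this decorrelation constraint together with the fixed covariance (\ref{eq:Xs_Ys_Cov_Mat}); making this extremal step rigorous for arbitrary non-Gaussian $U$ is the crux of the proof.
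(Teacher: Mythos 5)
Your treatment of regimes I and II is sound and essentially matches the paper: for $\bar{D}_{1}\bar{D}_{2}\leq\rho^{2}$ the degeneracy $\hat{X}^{*}=a\hat{Y}^{*}$ forces any $U$ satisfying (\ref{eq:mt_2}) to determine the reproductions, yielding $C_{W}(X,Y;D_{1},D_{2})=R_{X,Y}(D_{1},D_{2})$, and for $\max\{D_{1},D_{2}\}\leq1-\rho$ Lemma \ref{lem:ESLB}(ii) plus a Gaussian construction closes the gap exactly as in the paper. Your Gaussian achievability parametrization ($b_{1}b_{2}=\rho$, $b_{i}^{2}\leq\bar{D}_{i}$, minimize $b_{1}^{2}+b_{2}^{2}$) is in fact a cleaner, unified derivation of all three non-degenerate cases than the paper's, which simply verifies that $U=\hat{X}^{*}$ (resp.\ $U=\hat{Y}^{*}$) attains (\ref{eq:Gauss_Lossy_CI}) in regime III (resp.\ IV).

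However, the converse in regimes III and IV --- which you correctly identify as the crux --- is left unproven, and the plan you sketch would fail as stated. You propose to bound $h(X|U)+h(Y|U)$ using only the chain $X\leftrightarrow U\leftrightarrow Y$, the decorrelation constraint $I(\hat{X}^{*};\hat{Y}^{*}|U)=0$, and the covariance (\ref{eq:Xs_Ys_Cov_Mat}). The missing ingredient is the pair of MMSE floors: every feasible density also satisfies $X\leftrightarrow\hat{X}^{*}\leftrightarrow U$ and $Y\leftrightarrow\hat{Y}^{*}\leftrightarrow U$ (cf.\ (\ref{eq:add_Mark_cond})), and the data-processing inequality for MMSE then gives $\Phi(X|U)\geq\Phi(X|\hat{X}^{*})=D_{1}$ and $\Phi(Y|U)\geq D_{2}$. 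The paper's converse relaxes to maximizing $h(X|U)+h(Y|U)$ subject to these floors together with the Cauchy--Schwarz consequence of $X\leftrightarrow U\leftrightarrow Y$, namely $(1-\Phi(X|U))(1-\Phi(Y|U))\geq\rho^{2}$ as in (\ref{eq:MMSE_lem}), and then applies the Gaussian maximum-entropy bound (\ref{eq:bound_H_MMSE}). Without the floors, the resulting optimization of $\Phi(X|U)\Phi(Y|U)$ subject only to $(1-\Phi(X|U))(1-\Phi(Y|U))\geq\rho^{2}$ is maximized at $\Phi(X|U)=\Phi(Y|U)=1-\rho$, so the relaxation you describe collapses to the value $\frac{1}{2}\log\frac{1+\rho}{1-\rho}=C_{W}(X,Y)$, which is strictly smaller than the claimed values in regimes III/IV; i.e., your bound would not be tight, and in region III it is precisely the binding constraint $\Phi(X|U)=D_{1}$ that produces the third case of (\ref{eq:Gauss_Lossy_CI}). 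Once the MMSE constraints are added, the corner analysis ($\Phi(X|U)=D_{1}$, $\Phi(Y|U)=1-\rho^{2}/\bar{D}_{1}$ when $D_{1}>1-\rho$, and symmetrically for regime IV) completes the converse exactly as in the paper.
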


\begin{rem}
The different regimes of distortions indicated in (\ref{eq:Gauss_Lossy_CI})
are depicted in Fig. \ref{fig:curve}(a). The lossy CI is equal to
the corresponding lossless information theoretic characterization
in the regime where both $D_{1}$ and $D_{2}$ are smaller than $1-\rho$.
In the regime where $\bar{D}_{1}\bar{D}_{2}\leq\rho^{2}$, the lossy
CI is equal to the joint rate-distortion function, i.e., all the bits
are sent on the shared branch. In the other two regimes, the lossy
CI is strictly greater than the lossless characterization, i.e., $C_{W}(X,Y;D_{1},D_{2})>C_{W}(X,Y)$.
To illustrate, we fix $\rho=0.5$ and $D_{2}=0.2$, and plot the $C_{W}(X,Y;0.2,D_{2})$
as a function of $D_{1}$, as shown in Fig. \ref{fig:curve}(b). Observe
that the lossy CI remains a constant till point $A$, where $D_{1}=1-\rho$.
It is strictly greater than the lossless characterization between
points $A$ and $B$, i.e., between $D_{1}=1-\rho$ and $D_{1}=1-\frac{\rho}{1-D_{1}}$
and finally is equal to the joint rate-distortion function for all
points to the right of $B$. This result is quite counter-intuitive
and reveals a surprising property of the Gray-Wyner network. Note
that while traversing from the origin to point $A$, the lossy CI
is a constant, and decreasing only the rates of the side branches
is optimal in achieving the minimum sum rate at the respective distortions.
However, between points $A$ and $B$, the rate on the common branch
\textit{increases}, while the rates on the side branches continue
to decrease in order to maintain sum rate optimality at the respective
distortions. This implies that, even a Gaussian source under mean
squared error distortion measure, one of the simplest successively
refinable examples in point to point settings, is not successively
refinable on the Gray-Wyner network, i.e, the bit streams on the three
branches, when $D_{1}$ is set to a value in between $(A,B)$, are
not subsets of the respective bit-streams required to achieve a distortion
in between the origin and point $A$. Finally note that, for all points
to the right of $B$, all the information is carried on the common
branch and the side branches are left unused, i.e., to achieve minimum
sum rate, all the bits must be transmitted on the shared branch. This
example clearly demonstrates that the lossy CI is neither convex/concave
nor is monotonic in general. 

Before stating the formal proof, we provide a high level intuitive
argument to justify the non-monotone behavior of $C_{W}(X,Y;D_{1},D_{2})$.
First, it follows from results in \cite{Lossy_CI_Xu} that there exists
a region of distortions around origin such that $C_{W}(X,Y;D_{1},D_{2})=C_{W}(X,Y)$,
where the optimizing $U$ in Theorem \ref{thm:main} is equal to $U^{*}$.
Next, recall that, if $\bar{D}_{1}\bar{D}_{2}\geq\rho^{2}$, Shannon
lower bound is tight and hence from Lemma \ref{lem:ESLB}, $C_{W}(X,Y;D_{1},D_{2})\geq C_{W}(X,Y)$.
For $C_{W}(X,Y;D_{1},D_{2})$ to be equal to $C_{W}(X,Y)$, there
should exist a joint density over $(X,Y,\hat{X}^{*},\hat{Y}^{*},U^{*})$,
such that, $(X,Y,U^{*})$ is distributed according to (\ref{eq:Gauss_Lossless_U})
and the Markov chain condition $X\leftrightarrow\hat{X}^{*}\leftrightarrow U^{*}\leftrightarrow\hat{Y}^{*}\leftrightarrow Y$
is satisfied. However, if $D_{1}>1-\rho$ and $\bar{D}_{1}\bar{D}_{2}\geq\rho^{2}$,
$I(X;\hat{X}^{*})=\frac{1}{2}\log\frac{1}{D_{1}}<\frac{1}{2}\log\frac{1}{1-\rho}=I(X;U^{*})$.
Therefore, it is impossible to find such a joint density and hence
$C_{W}(X,Y;D_{1},D_{2})>C_{W}(X,Y)$. Finally, note that if $\bar{D}_{1}\bar{D}_{2}\leq\rho^{2}$,
$\hat{X}^{*}=a\hat{Y}^{*}$ and hence $U^{*}$ must be equal to $X^{*}$
to satisfy $\hat{X}^{*}\leftrightarrow U^{*}\leftrightarrow\hat{Y}^{*}$.
This implies that $C_{W}(X,Y;D_{1},D_{2})=R_{X,Y}(D_{1},D_{2})$,
which is a monotonically decreasing function. The above arguments
clearly demonstrate the non-monotone behavior of $C_{W}(X,Y;D_{1},D_{2})$. 
\end{rem}
\begin{figure*}[!t]
\centering\includegraphics[scale=0.22]{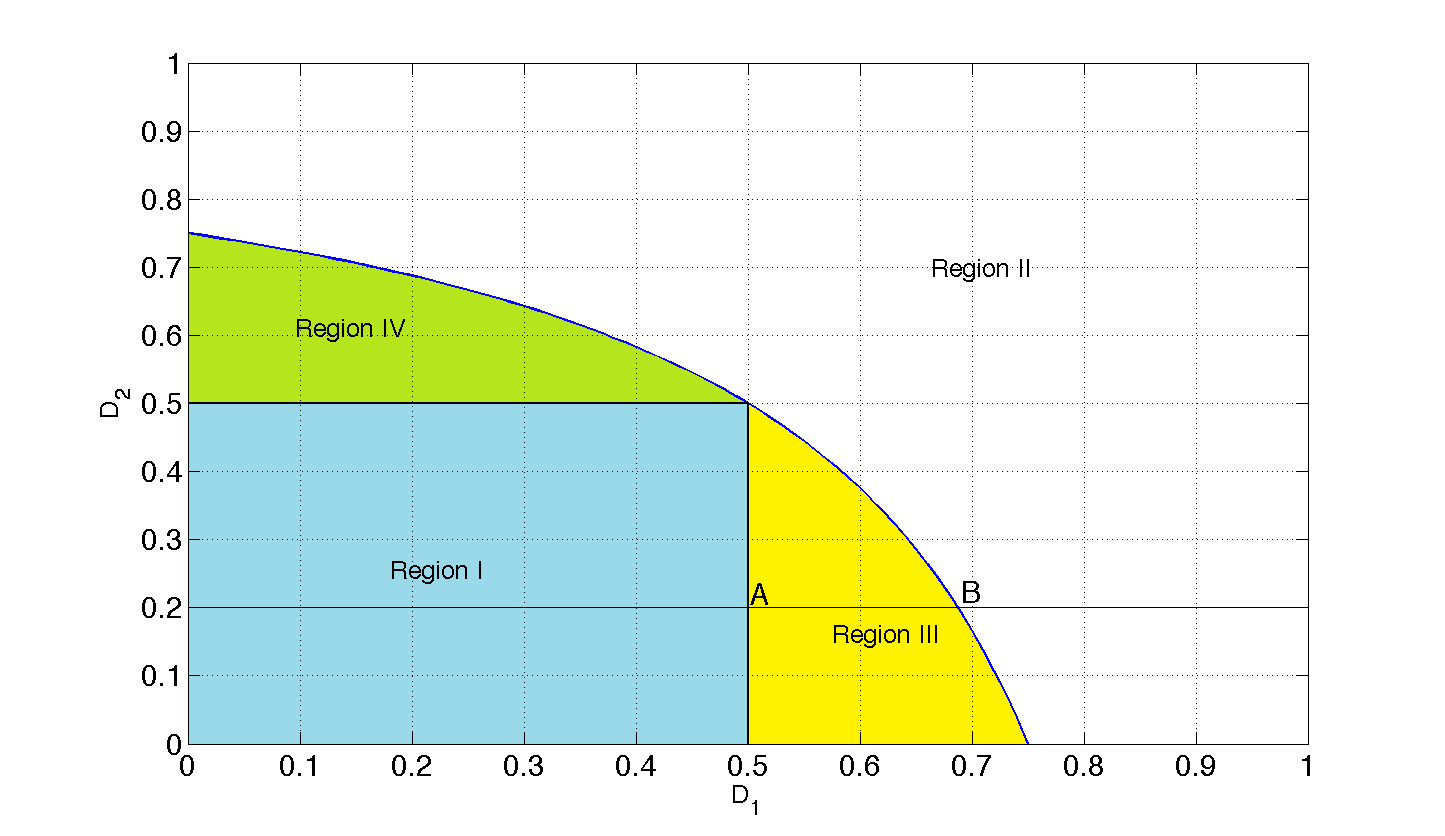}\includegraphics[scale=0.21]{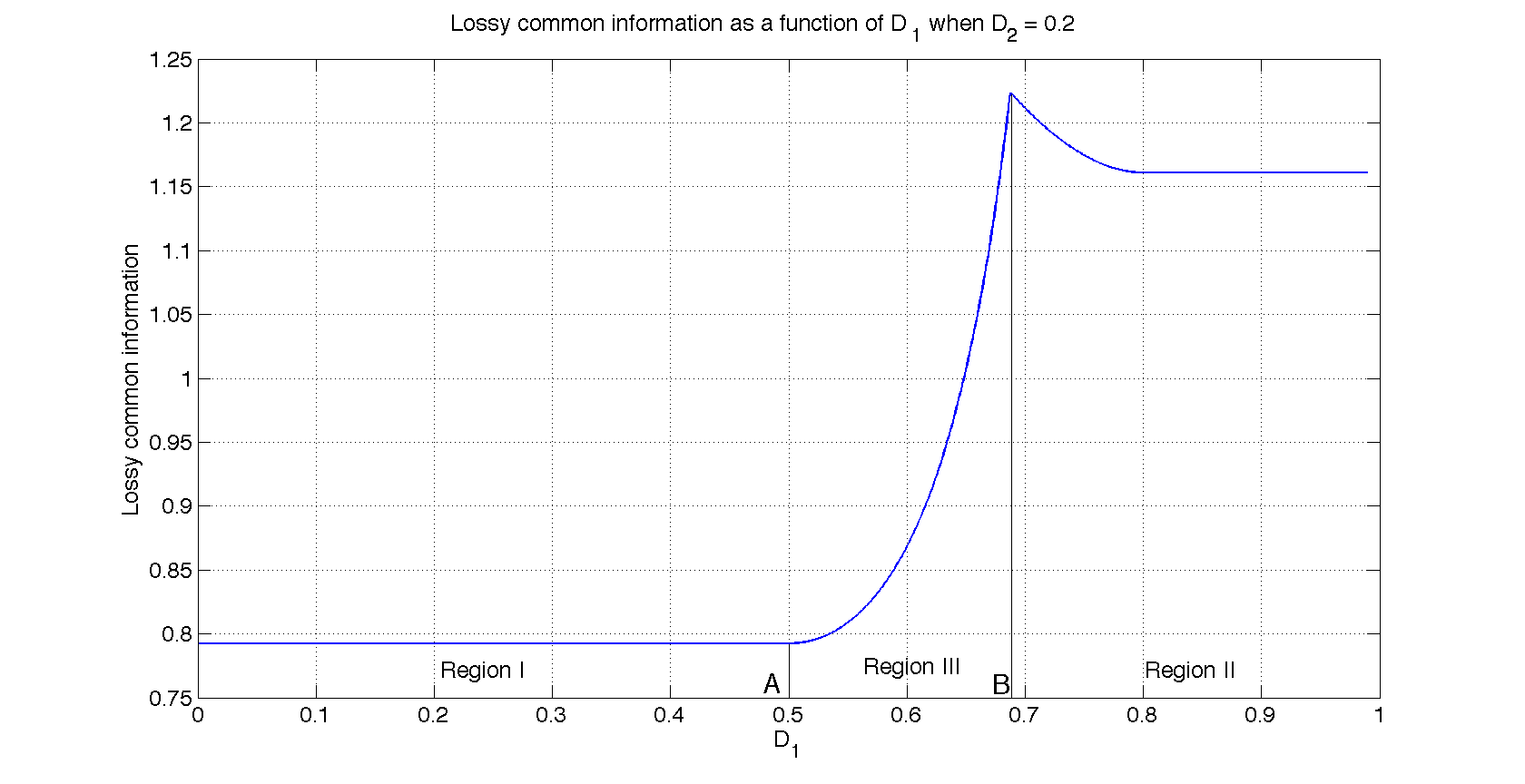}\caption{The figure on the left shows the different regions of distortions 
indicated in (\ref{eq:Gauss_Lossy_CI}) when $\rho=0.5$. Region I is blue, region II is white, region III is yellow and region IV is green. The figure
on the right shows the lossy CI for two correlated Gaussian random
variables at a fixed $D_{2}=0.2$ and as a function of $D_{1}$. Observe
that the curve is equal to the lossless CI till $D_{1}=1-\rho$, then
increases till $D_{1}=1-\frac{\rho^{2}}{1-D_{2}}$ and then finally
drops off as $R_{X,Y}(D_{1},D_{2})$. \label{fig:curve}}
\end{figure*}

\begin{proof}
We first consider the regime of distortions where $\max\{D_{1},D_{2}\}\leq1-\rho$.
At these distortions, the Shannon lower bound is tight and hence by
Lemma \ref{lem:ESLB}, $C_{W}(X,Y;D_{1},D_{2})\geq C_{W}(X,Y)$. To
prove that $C_{W}(X,Y,D_{1},D_{2})=C_{W}(X,Y)$, it is sufficient
for us to show the existence of a joint distribution over $(X,Y,\hat{X}^{*},\hat{Y}^{*},U^{*})$
satisfying (\ref{eq:mt_2}) and (\ref{eq:mt_3}), where $(X,Y,U^{*})$
satisfy (\ref{eq:Gauss_Lossless_U}) and $(\hat{X}^{*},\hat{Y}^{*})$
achieve joint RD optimality at $(D_{1},D_{2})$. We can generate $(\hat{X}^{*},\hat{Y}^{*})$
by passing $U^{*}$ through independent Gaussian channels as follows:

\begin{eqnarray}
\hat{X}^{*} & = & \sqrt{\rho}U^{*}+\sqrt{1-D_{1}-\rho}\tilde{N}_{1}\nonumber \\
\hat{Y}^{*} & = & \sqrt{\rho}U^{*}+\sqrt{1-D_{2}-\rho}\tilde{N}_{2}
\end{eqnarray}
where $\tilde{N}_{1}$ and $\tilde{N}_{2}$ are independent standard
Gaussian random variables independent of both $N_{1}$ and $N_{2}$.
Therefore there exists a joint density over $(X,Y,\hat{X}^{*},\hat{Y}^{*},U^{*})$
satisfying $\hat{X}^{*}\leftrightarrow U^{*}\leftrightarrow\hat{Y}^{*}$
and $(X,Y)\leftrightarrow(\hat{X}^{*},\hat{Y}^{*})\leftrightarrow U^{*}$.
This shows that $C_{W}(X,Y;D_{1},D_{2})\leq C_{W}(X,Y)$. Therefore
in the range $\max\{D_{1},D_{2}\}\leq1-\rho$, we have $C_{W}(X,Y;D_{1},D_{2})=C_{W}(X,Y)$.
We note that, for the symmetric case with $D_{1}=D_{2}\leq1-\rho$,
this specific result was already deduced in \cite{Lossy_CI_Xu}, albeit
using an alternate approach deriving results from conditional rate-distortion
theory. 

We next consider the range of distortions where $\bar{D}_{1}\bar{D}_{2}\leq\rho^{2}$.
Note that the Shannon lower bound for $R{}_{X,Y}(D_{1},D_{2})$ is
not tight in this range. However, the RD-optimal conditional distribution
$P(\hat{X}^{*},\hat{Y}^{*}|X,Y)$ in this distortion range is such
that $\hat{X}^{*}=a\hat{Y}^{*}$, for some constant $a$. Therefore
the only $U$ that satisfies $(\hat{X}^{*}\leftrightarrow U\leftrightarrow\hat{Y}^{*})$
is $U=\hat{X}^{*}=a\hat{Y}^{*}$. Therefore by Theorem \ref{thm:main},
we conclude that $C_{W}(X,Y;D_{1},D_{2})=R{}_{X,Y}(D_{1},D_{2})$
for $\bar{D}_{1}\bar{D}_{2}\leq\rho^{2}$. We note that, if either
$D_{1}$ or $D_{2}$ is greater than $1$, then no information needs
to be sent from the encoder to either one of the two decoders, i.e.:
\begin{equation}
R_{X,Y}(D_{1},D_{2})=\begin{cases}
R_{X}(D_{1}) & \mbox{if }D_{1}<1\,\,,\,\, D_{2}>1\\
R_{Y}(D_{2}) & \mbox{if }D_{1}>1\,\,,\,\, D_{2}<1
\end{cases}
\end{equation}
In the Gray-Wyner network, these bits can be sent only on the respective
private branches, and hence the minimum rate on the shared branch
can be made $0$ while achieving a sum rate of $R_{X,Y}(D_{1},D_{2})$.
Therefore, if $D_{1}$ or $D_{2}$ is greater than $1$, $C_{W}(X,Y;D_{1},D_{2})=0$. 

Next, let us consider the third regime of distortions wherein $\max\{D_{1},D_{2}\}>1-\rho$
and $\bar{D}_{1}\bar{D}_{2}>\rho^{2}$. This corresponds to the two
intermediate regions in Fig. \ref{fig:curve}(a). It is important
to note that the Shannon lower bound is actually tight in this regime.
Therefore, we have $C_{W}(X,Y;D_{1},D_{2})\geq C_{W}(X,Y)$. First,
we show that $C_{W}(X,Y;D_{1},D_{2})>C_{W}(X,Y)$, i.e., it is impossible
to find a joint density over $(X,Y,\hat{X}^{*},\hat{Y}^{*},U^{*})$
satisfying (\ref{eq:mt_2}) and (\ref{eq:mt_3}), where $(X,Y,U^{*})$
satisfy (\ref{eq:Gauss_Lossless_U}) and $(\hat{X}^{*},\hat{Y}^{*})$
achieve joint RD optimality at $(D_{1},D_{2})$. Towards proving this,
note that if the joint density over $(X,Y,U^{*})$ satisfies (\ref{eq:Gauss_Lossless_U}),
then $I(X;U^{*})=I(Y;U^{*})=\frac{1}{2}\log\frac{1}{1-\rho}$. Also
note that any joint density that satisfies (\ref{eq:mt_2}) and (\ref{eq:mt_3})
also satisfies $X\leftrightarrow\hat{X}^{*}\leftrightarrow U^{*}$
and $Y\leftrightarrow\hat{Y}^{*}\leftrightarrow U^{*}$, and therefore
$I(X;\hat{X}^{*})\geq I(X;U^{*})$ and $I(Y;\hat{Y}^{*})\geq I(Y;U^{*})$.
However, if $\max\{D_{1},D_{2}\}>1-\rho$, then $\min\{I(X;\hat{X}^{*}),I(Y;\hat{Y}^{*})\}<\frac{1}{2}\log\frac{1}{1-\rho}$.
Hence, it follows that there exists no such joint density over $(X,Y,\hat{X}^{*},\hat{Y}^{*},U^{*})$,
proving that $C_{W}(X,Y;D_{1},D_{2})>C_{W}(X,Y)$ if $\max\{D_{1},D_{2}\}>1-\rho$
and $(1-D_{1})(1-D_{2})>\rho^{2}$.

Finally, we prove that the lossy CI in this regime of distortions
is given by (\ref{eq:Gauss_Lossy_CI}). Recall that the RD-optimal
random encoder is such that $P(X,Y|\hat{X}^{*},\hat{Y}^{*})=P(X|\hat{X}^{*})P(Y|\hat{Y}^{*})$
and the covariance matrix for $(\hat{X}^{*},\hat{Y}^{*})$ is given
by (\ref{eq:Xs_Ys_Cov_Mat}). Our objective is to find the joint density
over $(X,Y,\hat{X}^{*},\hat{Y}^{*},U)$ satisfying (\ref{eq:mt_2})
and (\ref{eq:mt_3}), that minimizes $I(X,Y;U)$. Clearly, this joint
density additionally satisfies all the following Markov conditions:
\begin{eqnarray}
X\leftrightarrow & \hat{X}^{*} & \leftrightarrow U\nonumber \\
Y\leftrightarrow & \hat{Y}^{*} & \leftrightarrow U\nonumber \\
X\leftrightarrow & U & \leftrightarrow Y\label{eq:add_Mark_cond}
\end{eqnarray}

Hereafter, we restrict ourselves to the regime of distortion where
$D_{1}>1-\rho$ and $\bar{D}_{1}\bar{D}_{2}>\rho^{2}$. This corresponds
to region III in Fig. \ref{fig:curve}(a). Similar arguments hold
for regime IV. Let us consider two extreme possibilities for the random
variable $U$. Observe that both choices $U=\hat{X}^{*}$ and $U=\hat{Y}^{*}$
satisfy all the required Markov conditions. In fact, it is easy to
verify that evaluating $I(X,Y;\hat{X}^{*})$ leads to the expression
for $C_{W}(X,Y;D_{1},D_{2})$ in (\ref{eq:Gauss_Lossy_CI}) for regime
III (and correspondingly evaluating $I(X,Y;\hat{Y}^{*})$ leads to
the lossy CI in regime IV). Hence, we need to prove that, in regime
III, the optimum $U=\hat{X}^{*}$. 

First, we rewrite the objective function as follows:
\begin{eqnarray}
\inf I(X,Y;U) & = & \inf\,\,\, H(X,Y)-H(X,Y|U)\label{eq:alt_obj}\\
 & = & \inf\,\,\, H(X,Y)-H(X|U)-H(Y|U)\nonumber 
\end{eqnarray}
Hence, our objective is equivalent to maximizing $H(X|U)+H(Y|U)$
subject to (\ref{eq:mt_2}), (\ref{eq:mt_3}) and (\ref{eq:add_Mark_cond}).
We will next prove that $U=\hat{X}^{*}$ is the solution to this problem. 

Recall an important result called the data-processing inequality for
minimum mean squared error (MMSE) estimation \cite{verdu_MMSE,Zamir_MMSE}.
It states that, if $X\leftrightarrow Y\leftrightarrow Z$ form a Markov
chain, then $\Phi(X|Y)\leq\Phi(X|Z)$, where $\Phi(A|B)=E\left[\left(A-E\left[A|B\right]\right)^{2}\right]$
is the MMSE of estimating $A$ from $B$. Hence, it follows that for
any joint density $P(X,Y,\hat{X}^{*},\hat{Y}^{*},U)$, satisfying
(\ref{eq:add_Mark_cond}), we have:
\begin{eqnarray}
\Phi(X|U) & \geq & D_{1}\nonumber \\
\Phi(Y|U) & \geq & D_{2}
\end{eqnarray}

Next, we consider a less constrained optimization problem and prove
that the solution to this less constrained problem is bounded by (\ref{eq:Gauss_Lossy_CI}).
It then follows that the solution to (\ref{eq:alt_obj}) is $U=\hat{X}^{*}$.
Consider the following problem:
\begin{equation}
\sup\left\{ H(\tilde{X}|\tilde{U})+H(\tilde{Y}|\tilde{U})\right\} \label{eq:new_form_obj}
\end{equation}
where is supremum is over all joint densities $P(\tilde{X},\tilde{Y},\tilde{U})$, 
subject to the following conditions:
\begin{eqnarray}
(\tilde{X},\tilde{Y}) & \sim & (X,Y)\nonumber \\
\tilde{X}\leftrightarrow & \tilde{U} & \leftrightarrow\tilde{Y}\nonumber \\
\Phi(\tilde{X}|\tilde{U}) & \geq & D_{1}\nonumber \\
\Phi(\tilde{Y}|\tilde{U}) & \geq & D_{2}\label{eq:New_const}
\end{eqnarray}
Observe that all the conditions involving $(\hat{X}^{*},\hat{Y}^{*})$
have been dropped in the above formulation and the constraints for
this problem are a subset of those in (\ref{eq:alt_obj}). We will
next show that the optimum for the above less constrained problem
leads to the expressions in (\ref{eq:Gauss_Lossy_CI}). 

Before proceeding, we show that for any three random variables satisfying
$\tilde{X}\leftrightarrow\tilde{U}\leftrightarrow\tilde{Y}$, where
$\tilde{X}$ and $\tilde{Y}$ are zero mean and of unit variance,
we have: 
\begin{eqnarray}
(1-\Phi(\tilde{X}|\tilde{U}))(1-\Phi(\tilde{Y}|\tilde{U})) & \geq & (E(\tilde{X}\tilde{Y}))^{2}\nonumber \\
 & = & \rho^{2}\label{eq:MMSE_lem}
\end{eqnarray}
Denote the optimal estimators $\theta_{\tilde{X}}(\tilde{U})=E(\tilde{X}|\tilde{U})$
and $\theta_{\tilde{Y}}(\tilde{U})=E(\tilde{Y}|\tilde{U})$. Then,
we have:
\begin{eqnarray}
\Phi(\tilde{X}|\tilde{U}) & = & E\left[\left(\tilde{X}-\theta_{\tilde{X}}(\tilde{U})\right)^{2}\right]\nonumber \\
 & = & E\left[\tilde{X}^{2}\right]-E\left[\left(\theta_{\tilde{X}}(\tilde{U})\right)^{2}\right]\nonumber \\
 & = & 1-E\left[\left(\theta_{\tilde{X}}(\tilde{U})\right)^{2}\right]\label{eq:}
\end{eqnarray}
Therefore, we have:
\begin{eqnarray}
 &  & (1-\Phi(\tilde{X}|\tilde{U}))(1-\Phi(\tilde{Y}|\tilde{U}))\nonumber \\
 &  & =E\left[\left(\theta_{\tilde{X}}(\tilde{U})\right)^{2}\right]E\left[\left(\theta_{\tilde{Y}}(\tilde{U})\right)^{2}\right]\nonumber \\
 &  & \geq^{(a)}\left(E\left[\theta_{\tilde{Y}}(\tilde{U})\theta_{\tilde{Y}}(\tilde{U})\right]\right)^{2}\nonumber \\
 &  & =^{(b)}\left(E\left[E\left[\tilde{X}\tilde{Y}\Bigl|\tilde{U}\right]\right]\right)^{2}\nonumber \\
 &  & =\left(E\left[\tilde{X}\tilde{Y}\right]\right)^{2}=\rho^{2}\label{eq:Series_2}
\end{eqnarray}
where (a) follows from Cauchy-Schwarz inequality and (b) follows from
the Markov condition $\tilde{X}\leftrightarrow\tilde{U}\leftrightarrow\tilde{Y}$. 

This allows us to further simplify the formulation in (\ref{eq:new_form_obj}).
Specifically, we relax the constraints (\ref{eq:New_const}) by imposing
(\ref{eq:MMSE_lem}), instead of the Markov condition $\tilde{X}\leftrightarrow\tilde{U}\leftrightarrow\tilde{Y}$.
Our objective now becomes:

\begin{equation}
\sup\left\{ H(\tilde{X}|\tilde{U})+H(\tilde{Y}|\tilde{U})\right\} \label{eq:new_form_obj-1}
\end{equation}
where the supremum is over all joint densities $P(\tilde{X},\tilde{Y},U)$,
subject to the following conditions:
\begin{eqnarray}
(\tilde{X},\tilde{Y}) & \sim & (X,Y)\nonumber \\
(1-\Phi(\tilde{X}|\tilde{U}))(1-\Phi(\tilde{Y}|\tilde{U})) & \geq & \rho^{2}\nonumber \\
\Phi(\tilde{X}|\tilde{U}) & \geq & D_{1}\nonumber \\
\Phi(\tilde{Y}|\tilde{U}) & \geq & D_{2}\label{eq:New_const-1}
\end{eqnarray}
We next bound $H(\tilde{X}|\tilde{U})+H(\tilde{Y}|\tilde{U})$ in
terms of the corresponding $MMSE$ as:
\begin{eqnarray}
H(\tilde{X}|\tilde{U})+H(\tilde{Y}|\tilde{U}) & \leq & \frac{1}{2}\log\left(2\pi e\Phi(\tilde{X}|\tilde{U})\right)\nonumber \\
 &  & +\frac{1}{2}\log\left(2\pi e\Phi(\tilde{Y}|\tilde{U})\right)\label{eq:bound_H_MMSE}
\end{eqnarray}
Using (\ref{eq:bound_H_MMSE}) to bound (\ref{eq:new_form_obj-1})
leads to the following objective function:
\begin{equation}
\sup\left\{ \Phi(\tilde{X}|\tilde{U})\Phi(\tilde{Y}|\tilde{U})\right\} 
\end{equation}
subject to the conditions in (\ref{eq:New_const-1}). It is easy to
verify that the maximum for this objective function satisfying (\ref{eq:New_const-1})
is achieved either at $(\Phi(\tilde{X}|\tilde{U})=D_{1},\Phi(\tilde{Y}|\tilde{U})=1-\frac{\rho^{2}}{1-D_{1}})$
or at $(\Phi(\tilde{X}|\tilde{U})=1-\frac{\rho^{2}}{1-D_{2}},\Phi(\tilde{Y}|\tilde{U})=D_{2})$,
depending on whether $D_{1}>1-\rho$ or $D_{2}>1-\rho$. Substituting
these values in (\ref{eq:bound_H_MMSE}) leads to upper bounds on
$H(\tilde{X}|\tilde{U})+H(\tilde{Y}|\tilde{U})$ for the two distortion
regimes, respectively. These upper bounds are achieved by setting
$U=\hat{X}^{*}$ or $U=\hat{Y}^{*}$, depending on the distortion
regime. The proof of the theorem follows by noting that these choices
for $U$ lead to the lossy CI being equal to (\ref{eq:Gauss_Lossy_CI}).
Therefore, we have completely characterized $C_{W}(X,Y;D_{1},D_{2})$
for $(X,Y)$ jointly Gaussian for all distortions $(D_{1},D_{2})>0$. 
\end{proof}

\section{Lossy Extension of the Gács-Körner Common Information\label{sec:Gacs-Korner's-CI}}

\subsection{Definition}

Recall the definition of the Gács-Körner CI from Section \ref{sec:Prior-Results}.
Although the original definition does not have a direct lossy interpretation,
the equivalent definition given by Ahlswede and Körner, in terms of
the lossless Gray-Wyner region can be extended to the lossy setting,
similar to our approach to Wyner's CI. These generalizations provide
theoretical insight into the performance limits of practical databases
for fusion storage of correlated sources as described in \cite{ITW_CI}.

We define the lossy generalization of the Gács-Körner CI at $(D_{1},D_{2})$,
denoted by $C_{GK}(X,Y;D_{1},D_{2})$ as follows. Let $\mathcal{R}_{GK}(D_{1},D_{2})$
be the set of $R_{0}$ such that for any $\epsilon>0$, there exists
a point $(R_{0},R_{1},R_{2})$ satisfying the following conditions:
\begin{eqnarray}
(R_{0},R_{1},R_{2})\in\mathcal{R}_{GW}(D_{1},D_{2})
\end{eqnarray}
\begin{eqnarray}
R_{0}+R_{1}\leq R{}_{X}(D_{1})+\epsilon &  & R_{0}+R_{2}\leq R{}_{Y}(D_{2})+\epsilon
\end{eqnarray}
Then, 
\begin{equation}
C_{GK}(X,Y;D_{1},D_{2})=\sup\,\, R_{0}\in\mathcal{R}_{GK}(D_{1},D_{2})
\end{equation}
Again, observe that, if every point on the intersection of the planes
$R_{0}+R_{1}=R_{X}(D_{1})$ and $R_{0}+R_{2}=R_{X}(D_{2})$ satisfies
(\ref{eq:GW_Lossy_Region}) with equality for some joint density $P(X,Y,\hat{X},\hat{Y},U)$,
then the above definition can be simplified by setting $\epsilon=0$.
Hereafter, we will assume that this condition holds, noting that the
results can be extended to the general case, similar to arguments
in Appendix A.

\subsection{Single Letter Characterization of $C_{GK}(X,Y;D_{1},D_{2})$}

We provide an information theoretic characterization for $C_{GK}(X,Y;D_{1},D_{2})$
in the following theorem. 
\begin{thm}
\label{thm:Thm_GK}A single letter characterization of $C_{GK}(X,Y;D_{1},D_{2})$
is given by:
\begin{equation}
C_{GK}(X,Y;D_{1},D_{2})=\sup I(X,Y;U)
\end{equation}
where the supremum is over all joint densities $(X,Y,\hat{X},\hat{Y},U)$
such that the following Markov conditions hold:
\begin{eqnarray}
Y\leftrightarrow X\leftrightarrow U &  & X\leftrightarrow Y\leftrightarrow U\nonumber \\
X\leftrightarrow\hat{X}\leftrightarrow U &  & Y\leftrightarrow\hat{Y}\leftrightarrow U\label{eq:GK_Thm_mc}
\end{eqnarray}
where $P(\hat{X}|X)\in\mathcal{P}_{D_{1}}^{X}$ and $P(\hat{Y}|Y)\in P_{D_{2}}^{Y}$,
are any rate-distortion optimal encoders at $D_{1}$ and $D_{2}$,
respectively. \end{thm}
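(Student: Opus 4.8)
The plan is to follow the template of the proof of Theorem~\ref{thm:main}, establishing two inclusions that together identify $\mathcal{R}_{GK}(D_1,D_2)$ with the set of values $I(X,Y;U)$ ranging over densities obeying~(\ref{eq:GK_Thm_mc}). As in the earlier theorem I would work with $\epsilon=0$ (permissible under the stated assumption on the intersecting planes) and first treat the case of a unique rate-distortion optimal encoder at each marginal, the general case following exactly as in Appendix~A. The essential structural simplification relative to Theorem~\ref{thm:main} is that here the two active constraints $R_0+R_1\le R_X(D_1)$ and $R_0+R_2\le R_Y(D_2)$ each involve only a single reconstruction, so the cross term $I(\hat{X};\hat{Y}|U)$ never appears and the original Gray--Wyner characterization~(\ref{eq:GW_Lossy_Region}) can be used directly, without recourse to the Venkataramani et al. form.

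For achievability, I would fix rate-distortion optimal encoders $\hat{X},\hat{Y}$ and any $U$ satisfying~(\ref{eq:GK_Thm_mc}), and consider the point $(R_0,R_1,R_2)=(I(X,Y;U),\,I(X;\hat{X}|U),\,I(Y;\hat{Y}|U))$, which lies in $\mathcal{R}_{GW}(D_1,D_2)$ by~(\ref{eq:GW_Lossy_Region}) since $\hat{X},\hat{Y}$ meet the distortions. Using $Y\leftrightarrow X\leftrightarrow U$ to write $I(X,Y;U)=I(X;U)$ and $X\leftrightarrow\hat{X}\leftrightarrow U$ to collapse the chain rule, I would compute $R_0+R_1=I(X;U)+I(X;\hat{X}|U)=I(X;U,\hat{X})=I(X;\hat{X})=R_X(D_1)$, and symmetrically $R_0+R_2=R_Y(D_2)$. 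Hence the point meets both constraints (with equality) and $I(X,Y;U)\in\mathcal{R}_{GK}(D_1,D_2)$, giving the $\ge$ direction.

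For the converse, take any feasible $(R_0,R_1,R_2)$; by~(\ref{eq:GW_Lossy_Region}) there exist $(U,\hat{X},\hat{Y})$ meeting the distortions with $R_0\ge I(X,Y;U)$, $R_1\ge I(X;\hat{X}|U)$ and $R_2\ge I(Y;\hat{Y}|U)$. The crux is the squeeze $I(X,Y;U)\le R_0\le R_X(D_1)-R_1\le R_X(D_1)-I(X;\hat{X}|U)\le I(X,Y;U)$, whose last step is the always-valid bound $I(X,Y;U)+I(X;\hat{X}|U)\ge I(X;U)+I(X;\hat{X}|U)=I(X;U,\hat{X})\ge I(X;\hat{X})\ge R_X(D_1)$. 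Since the chain begins and ends with $I(X,Y;U)$, every inequality is an identity, which forces $I(Y;U|X)=0$ (i.e.\ $Y\leftrightarrow X\leftrightarrow U$), $I(X;U|\hat{X})=0$ (i.e.\ $X\leftrightarrow\hat{X}\leftrightarrow U$), and $I(X;\hat{X})=R_X(D_1)$ (so $\hat{X}$ is rate-distortion optimal); the constraint on $R_0+R_2$ yields $X\leftrightarrow Y\leftrightarrow U$, $Y\leftrightarrow\hat{Y}\leftrightarrow U$ and optimality of $\hat{Y}$ symmetrically. Thus every achievable $R_0$ equals $I(X,Y;U)$ for a density obeying~(\ref{eq:GK_Thm_mc}), establishing the $\le$ direction and hence the theorem.

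I expect the main obstacle to be precisely the verification that this single chain collapses correctly: one must confirm that the two constraints act independently on the two reconstructions, so that all four Markov conditions in~(\ref{eq:GK_Thm_mc}) (rather than a weaker subset needing Kuhn--Tucker cleanup, as would arise in the Wyner setting) emerge directly from the forced equalities. The remaining care lies in the routine bookkeeping for the non-unique-encoder case and the reinstatement of $\epsilon>0$, both handled as in the treatment of Theorem~\ref{thm:main}.
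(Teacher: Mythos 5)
Your proposal is correct and follows essentially the same route as the paper: the paper likewise uses the original Gray--Wyner characterization (noting it suffices here, unlike in Theorem~\ref{thm:main}), assumes unique rate-distortion optimal encoders first, derives the converse from the tight chain $R_{X}(D_{1})=R_{0}+R_{1}\geq I(X,Y;U)+I(X;\hat{X}|U)=I(X;\hat{X},U)+I(Y;U|X)\geq I(X;\hat{X})\geq R_{X}(D_{1})$ together with its symmetric counterpart for $Y$, and obtains achievability from the point $\bigl(I(X,Y;U),I(X;\hat{X}|U),I(Y;\hat{Y}|U)\bigr)$. Your write-up merely makes explicit the forced-equality bookkeeping and the symmetric $Y$-side argument that the paper compresses into ``following the same arguments as in Theorem~\ref{thm:main}.''
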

\begin{proof}
The proof follows in very similar lines to the proof of Theorem \ref{thm:main}.
The original Gray-Wyner characterization is, in fact, sufficient in
this case. We first assume that there are unique encoders $P(\hat{X}|X)$
and $P(\hat{Y}|Y)$, that achieve $R{}_{X}(D_{1})$ and $R{}_{Y}(D_{2})$,
respectively. The proof extends directly to the case of multiple rate-distortion
optimal encoders. 

We are interested in characterizing the points in $\mathcal{R}_{GW}(D_{1},D_{2})$
which lie on both the planes $R_{0}+R_{1}=R{}_{X}(D_{1})$ and $R_{0}+R_{2}=R{}_{Y}(D_{2})$.
Therefore we have the following series of inequalities:
\begin{eqnarray}
R{}_{X}(D_{1}) & = & R_{0}+R_{1}\nonumber \\
 & \geq & I(X,Y;U)+I(X;\hat{X}|U)\nonumber \\
 & = & I(X;\hat{X},U)+I(Y;U|X)\nonumber \\
 & \geq & I(X;\hat{X})\geq R{}_{X}(D_{1})
\end{eqnarray}
Writing similar inequality relations for $Y$ and following the same
arguments as in Theorem \ref{thm:main}, it follows that for all joint
densities satisfying (\ref{eq:GK_Thm_mc}) and for which $P(\hat{X}|X)\in\mathcal{P}_{D_{1}}^{X}$
and $P(\hat{Y}|Y)\in\mathcal{P}_{D_{2}}^{Y}$, there exists at least
one point in $\mathcal{R}_{GW}(D_{1},D_{2})$ which satisfies both
$R_{0}+R_{1}=R{}_{X}(D_{1})$ and $R_{0}+R_{2}=R{}_{Y}(D_{2})$ and
for which $R_{0}=I(X,Y;U)$. This proves the theorem.\end{proof}
\begin{cor}
\label{cor:CGK}$C_{GK}(X,Y;D_{1},D_{2})\leq C_{GK}(X,Y)$\end{cor}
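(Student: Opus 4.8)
The plan is to prove the inequality by a direct comparison of the two supremum characterizations, exploiting the fact that the feasible set defining the lossy quantity projects into the feasible set defining the lossless one. First I would recall the Ahlswede--Körner characterization stated in Section \ref{sec:Prior-Results}, namely that $C_{GK}(X,Y)=\sup I(X,Y;U)$ where the supremum is over all $U$ subject only to the two Markov conditions in (\ref{eq:GK_AK_mc}), i.e. $Y\leftrightarrow X\leftrightarrow U$ and $X\leftrightarrow Y\leftrightarrow U$. Alongside this I would place the freshly proved characterization of Theorem \ref{thm:Thm_GK}, which writes $C_{GK}(X,Y;D_{1},D_{2})=\sup I(X,Y;U)$ over all joint densities $(X,Y,\hat{X},\hat{Y},U)$ satisfying the four Markov conditions in (\ref{eq:GK_Thm_mc}), with $P(\hat{X}|X)$ and $P(\hat{Y}|Y)$ rate-distortion optimal at $D_{1}$ and $D_{2}$.

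The key observation is that the first pair of conditions in (\ref{eq:GK_Thm_mc}) coincides exactly with the pair of conditions (\ref{eq:GK_AK_mc}) defining the lossless feasibility set. Consequently, given any joint density $(X,Y,\hat{X},\hat{Y},U)$ that is feasible for the lossy problem, its marginal $P(X,Y,U)$ satisfies $Y\leftrightarrow X\leftrightarrow U$ and $X\leftrightarrow Y\leftrightarrow U$, hence is feasible for the lossless Ahlswede--Körner problem, and the associated objective value $I(X,Y;U)$ depends only on this marginal. Therefore every value attainable in the lossy supremum is attained, via the induced marginal, at an admissible point of the lossless supremum, so that
\begin{equation}
C_{GK}(X,Y;D_{1},D_{2})=\sup I(X,Y;U)\leq\sup I(X,Y;U)=C_{GK}(X,Y),\nonumber
\end{equation}
where the left supremum is constrained by all of (\ref{eq:GK_Thm_mc}) and the right only by (\ref{eq:GK_AK_mc}).

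Intuitively, the lossy definition merely augments the Ahlswede--Körner constraint set with the two extra Markov requirements $X\leftrightarrow\hat{X}\leftrightarrow U$ and $Y\leftrightarrow\hat{Y}\leftrightarrow U$; adding constraints to a maximization can only decrease (or preserve) the optimal value. There is essentially no obstacle in this argument: the only point worth verifying explicitly is that discarding the auxiliary variables $\hat{X}$ and $\hat{Y}$ genuinely preserves membership in the lossless feasible set, which is immediate because the two relevant Markov conditions involve only $(X,Y,U)$. I would remark that equality need not hold in general, since the additional constraints, coupled with the fixed rate-distortion-optimal encoders, may strictly exclude the $U$ that attains $C_{GK}(X,Y)$; characterizing when the bound is tight would be analogous to the equality discussion following Lemma \ref{lem:ESLB}.
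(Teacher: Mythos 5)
Your proof is correct and is essentially the paper's own argument: the paper's one-line proof of Corollary \ref{cor:CGK} likewise observes that the conditions in (\ref{eq:GK_AK_mc}) are a subset of those in (\ref{eq:GK_Thm_mc}), so the lossy supremum is taken over a more constrained set and hence cannot exceed $C_{GK}(X,Y)$. You merely make explicit the (immediate) step that marginalizing out $(\hat{X},\hat{Y})$ preserves feasibility for the Ahlswede--K\"orner problem, which the paper leaves implicit.
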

\begin{proof}
This corollary follows directly from Theorem \ref{thm:Thm_GK} as
conditions in (\ref{eq:GK_AK_mc}) are a subset of the conditions
in (\ref{eq:GK_Thm_mc}).
\end{proof}
It is easy to show that if the random variables $(X,Y)$ are jointly
Gaussian with a correlation coefficient $\rho<1$, then $C_{GK}(X,Y)=0$.
Hence from Corollary \ref{cor:CGK}, it follows that, for jointly
Gaussian random variables with correlation coefficient strictly less
than $1$, $C_{GK}(X,Y;D_{1},D_{2})=0\,\,\forall D_{1},D_{2}$, under
any distortion measure. It is well known that $C_{GK}(X,Y)$ is typically
very small and is non-zero only when the ergodic decomposition of
the joint distribution leads to non-trivial subsets of the alphabet
space. In the general setting, as $C_{GK}(X,Y;D_{1},D_{2})\leq C_{GK}(X,Y)$,
it would seem that Theorem \ref{thm:Thm_GK} has very limited practical
significance. However, in \cite{CGK_SR} we showed that $C_{GK}(X,Y;D_{1},D_{2})$
plays a central role in scalable coding of sources that are not successively
refinable. Further implications of this result will be studied in
more detail as part of future work.

\section{Optimal Transmit-Receive Rate Tradeoff in Gray-Wyner Network \label{sec:trade-off}}

\subsection{Motivation\label{sec:Motivation}}

It is well known that the two definitions of CI, due to Wyner and
Gács-Körner, can be characterized using the Gray-Wyner region and
the corresponding operating points are two boundary points of the
region. Several approaches have been proposed to provide further insight
into the underlying connections between them \cite{Ahlswede74oncommon,dual_CI,effros_CI,Yamamoto_CI}.
However, to the best of our knowledge, no prior work has identified
an operationally significant contour of points on the Gray-Wyner region,
which connects the two operating points. In this section we derive
and analyze such a contour of points on the boundary, obtained by
trading-off a particular definition of transmit rate and receive rate,
which passes through both the operating points of Wyner and Gács-Körner.
This tradeoff provides a generic framework to understand the underlying
principles of shared information. We note in passing that Paul Cuff
characterized a tradeoff between Wyner's common information and the
mutual information in \cite{Paul_Cuff}, while studying the amount
of common randomness required to generate correlated random variables,
with communication constraints. This tradeoff is similar in spirit
to the tradeoff studied in this paper, although very different in
details. 

We define the total transmit rate for the Gray-Wyner network as $R_{t}=R_{0}+R_{1}+R_{2}$
and the total receive rate as $R_{r}=2R_{0}+R_{1}+R_{2}$. Specifically,
we show that the contour traced on the Gray-Wyner region boundary
when we trade $R_{t}=R_{0}+R_{1}+R_{2}$ for $R_{r}=2R_{0}+R_{1}+R_{2}$,
passes through both the operating points of Wyner and Gács-Körner
for any distortion pair $(D_{1},D_{2})$.

\subsection{Relating the Two notions of CI\label{sec:Relation-to-Common}}

\begin{figure}[!t]
\centering\includegraphics[scale=0.4]{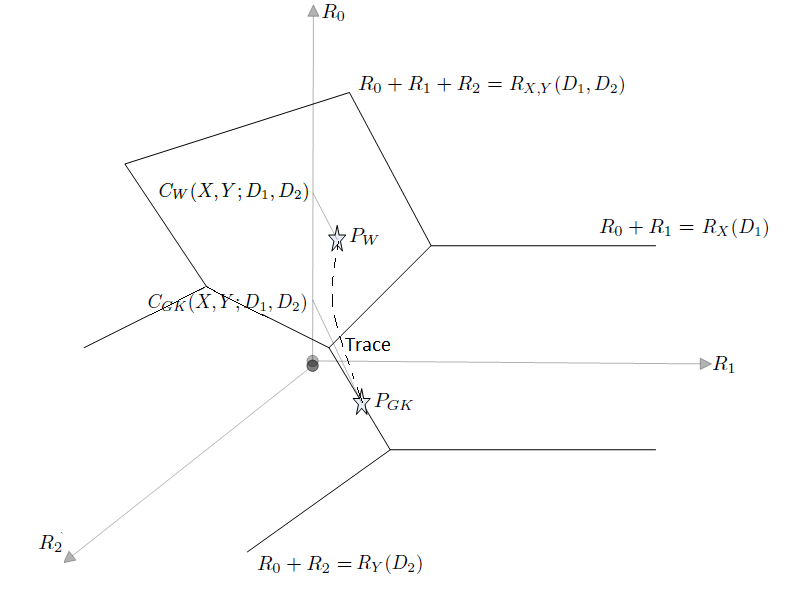}\caption{\label{fig:GW_region}$P_{W}$ and $P_{GK}$ in the Gray-Wyner region.
Observe that the transmit contour and the receive contour coincide
in between $P_{W}$ and $P_{GW}$.}
\end{figure}

Let $P_{W}$ and $P_{GK}$ denote the respective operating points
in $\mathcal{R}_{GW}(D_{1},D_{2})$, corresponding to the lossy definitions
of Wyner and Gács-Körner CI ($C_{W}(X,Y;D_{1},D_{2})$ and $C_{GK}(X,Y;D_{1},D_{2})$).
Hereafter, the dependence on the distortion constraints will be implicit
for notational convenience. $P_{W}$ and $P_{GK}$ are shown in Fig.
\ref{fig:GW_region}. 

We define the following two contours in the Gray-Wyner region. We
define the transmit contour as the set of points on the boundary of
the Gray-Wyner region obtained by minimizing the total receive rate
$(R_{r})$ at different total transmit rates $(R_{t})$, i.e., the
transmit contour is the trace of operating points obtained when the
receive rate is minimized subject to a constraint on the transmit
rate. Similarly, we define the receive contour as the trace of points
on the Gray-Wyner region obtained by minimizing the transmit rate
$(R_{t})$ for each $(R_{r})$.

\textbf{Claim}: The transmit contour coincides with the receive contour
between $P_{W}$ and $P_{GK}$. 
\begin{proof}
$\mathcal{R}_{GW}$ is a convex region. Hence the set of achievable
rate pairs for $(R_{t},R_{r})=(R_{0}+R_{1}+R_{2},2R_{0}+R_{1}+R_{2})$
is convex. We have $R_{t}\geq R_{X,Y}(D_{1},D_{2})$ and $R_{r}\geq R_{X}(D_{1})+R_{Y}(D_{2})$.
Note that when $R_{t}=R_{X,Y}(D_{1},D_{2})$, $\min R_{r}=R_{X,Y}(D_{1},D_{2})+C_{W}(X,Y;D_{1},D_{2})$
and is achieved at $P_{W}$. Similarly when $R_{r}=R_{X}(D_{1})+R_{Y}(D_{2})$,
$\min R_{t}=R_{X}(D_{1})+R_{Y}(D_{2})-C_{GK}(X,Y;D_{1},D_{2})$, which
is achieved at $P_{GK}$. Figure \ref{fig:Rt_Rr_Tradeoff} depicts
the trade-off between $R_{t}$ and $R_{r}$. Hence, it follows from
the convexity of $(R_{t},R_{r})$ region that for every transmit rate
$R_{t}\geq R_{X,Y}(D_{1},D_{2})$, there exists a receive rate $R_{r}\leq R_{X}(D_{1})+R_{Y}(D_{2})$,
such that, the corresponding operating points on the transmit and
the receive contours respectively coincide. Hence, it follows that
the transmit and the receive contours coincide in between $P_{W}$
and $P_{GK}$. 
\end{proof}
This new relation between the two notions of CI brings them both under
a common framework. Gács and Körner's operating point can now be stated
as the \textit{minimum} shared rate (similar to Wyner's definition),
at a sufficiently large sum rate. Likewise, Wyner's CI can be defined
as the \textit{maximum} shared rate (similar to Gács and Körner's
definition), at a sufficiently large receive rate. We will make these
arguments more precise in Section \ref{sub:Relation-to-Gacs-Korner}
when we derive alternate characterizations in the lossless framework
for each notion in terms of the objective function of the other.

We note that operating point corresponding to Gács-Körner CI is always
unique, for all $(D_{1},D_{2})$, as it lies on the intersection of
two planes. However, the operating point corresponding to Wyner's
CI may not be unique, i.e., there could exist source and distortion
pairs for which, the minimum shared rate on the Pangloss plane could
be achieved at multiple points in the GW region. In fact, it follows
directly from the convexity of the GW region that if there are two
operating points corresponding to Wyner's CI, then all points in between
them also achieve minimum shared rate and lie on the Pangloss plane.
For such sources and distortion pairs, the trade-off between the transmit
and receive rates on the GW network leads to a surface of operating
points, instead of a contour. Nevertheless, this surface always intersects
both $P_{W}$ and $P_{GK}$. 

\begin{figure}[!t]
\centering \includegraphics[scale=0.34]{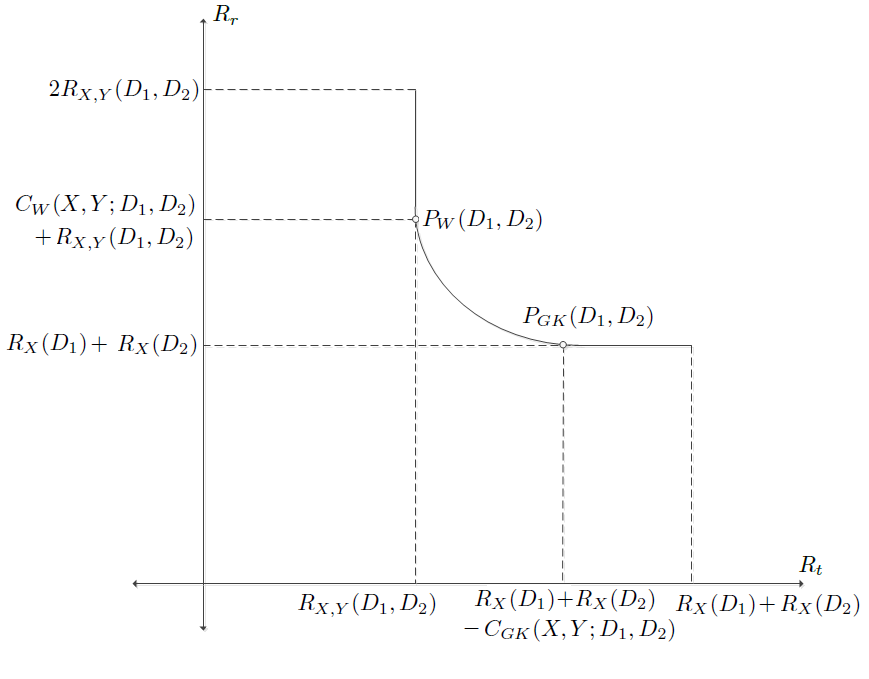}\caption{Tradeoff between $R_{r}$ and $R_{t}$. Observe that, when $R_{t}=R_{X,Y}(D_{1},D_{2})$,
minimum $R_{r}$ is equal to $R_{X,Y}(D_{1},D_{2})+C_{W}(X,Y;D_{1},D_{2})$
and when $R_{r}=R_{X}(D_{1})+R_{Y}(D_{2})$, minimum $R_{t}=R_{X}(D_{1})+R_{Y}(D_{2})-C_{GK}(X,Y;D_{1},D_{2})$\label{fig:Rt_Rr_Tradeoff}}
\end{figure}

\subsection{Single Letter Characterization of the Tradeoff\label{sec:Formal-statement-and}}

The tradeoff between the transmit and the receive rates in the Gray-Wyner
network not only plays a crucial role in providing theoretical insight
into the workings of the two notions of CI, but also has implications
in several practical scenarios such as fusion coding and selective
retrieval of correlated sources in a database \cite{fusion} and dispersive
information routing of correlated sources \cite{DIR}, as described
in \cite{ITW_CI}. It is therefore of interest to derive a single
letter information theoretic characterization for this tradeoff. Although
we were unable to derive a complete characterization for general distortions,
we derive a single letter complete characterization for the lossless
setting here. Hence, we focus only on the lossless setting for the
rest of the paper. 

To gain insight into this tradeoff, we characterize two curves which
are rotated/transformed versions of each other. The first curve, denoted
by $C(X,Y;R^{'})$, plots the minimum shared rate, $R_{0}$, at a
transmit rate of $H(X,Y)+R^{'}$ and the second, denoted by $K(X,Y;R^{''})$,
is the maximum $R_{0}$ at a receive rate of $H(X)+H(Y)+R^{''}$.
It is easy to see that the transmit-receive rate tradeoff can be derived
directly from these quantities. We note that the quantities $C(X,Y;R^{'})$
and $K(X,Y;R^{''})$ are in fact generalizations of Wyner and Gács-Körner
(lossless) definitions of CI to the excess sum transmit rate and receive
rate regimes, respectively. Using their properties, we will also derive
alternate characterizations for the two notions of lossless CI under
a unified framework in Section \ref{sec:Properties-of-the}.

We define the quantity $C(X,Y;R^{'})\,\,\forall R^{'}\in[0,I(X,Y)]$
as:
\begin{eqnarray}
C(X,Y;R^{'}) & = & \inf\, R_{0}:(R_{0},R_{1},R_{2})\in\mathcal{R}_{GW}\label{eq:defn_SI}
\end{eqnarray}
 satisfying,
\begin{equation}
R_{0}+R_{1}+R_{2}=H(X,Y)+R^{'}\label{eq:eq_thm_1}
\end{equation}
 Similarly, we define the quantity $K(X,Y;R^{''})\,\,\forall R^{''}\in[0,H(X,Y)-I(X,Y)]$
as: 
\begin{eqnarray}
K(X,Y;R^{''}) & = & \sup\, R_{0}:(R_{0},R_{1},R_{2})\in\mathcal{R}_{GW}\label{eq:defn_SI-1}
\end{eqnarray}
 satisfying,
\begin{equation}
2R_{0}+R_{1}+R_{2}=H(X)+H(Y)+R^{''}\label{eq:eq_thm_1-1}
\end{equation}
 Note that we restrict the ranges for $R^{'}$ and $R^{''}$ to the
ranges of practical interest, as operating at $R^{'}>I(X;Y)$ or $R^{''}>H(X,Y)-I(X,Y)$
is suboptimal and uninteresting. The following Theorem provides information
theoretic characterizations for $C(X,Y;R^{'})$ and $K(X,Y;R^{''})$. 
\begin{thm}
\label{thm:(i)excess_rate}(i) For any excess sum transmit rate $R^{'}\in[0,I(X,Y)]$:
\begin{equation}
C(X,Y;R^{'})=\min\, I(X,Y;U)\label{eq:eq_1_thm_1}
\end{equation}
 where the minimization is over all $U$ jointly distributed with
$(X,Y)$ such that:
\begin{equation}
I(X;Y|U)=R^{'}\label{eq:eq_2_thm_1}
\end{equation}
 We denote the operating point in $\mathcal{R}_{GW}$ corresponding
to the minimum by $P_{C(X,Y)}(R^{'})$. \\
 (ii) For any excess reception rate $R^{''}\in[0,H(X,Y)-I(X,Y)]$:
\begin{equation}
K(X,Y;R^{''})=\max\, I(X,Y;W)\label{eq:eq_1_thm_1-1}
\end{equation}
 where the maximization is over all $W$ jointly distributed with
$(X,Y)$ such that:
\begin{equation}
I(X;W|Y)+I(Y;W|X)=R^{''}\label{eq:eq_2_thm_1-1}
\end{equation}
 We denote the operating point in $\mathcal{R}_{GW}$ corresponding
to the maximum by $P_{K(X,Y)}(R^{''})$.\end{thm}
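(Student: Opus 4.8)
The plan is to prove each part by comparing the optimization that defines $C(X,Y;R')$ (resp.\ $K(X,Y;R'')$) over the lossless Gray-Wyner region (\ref{eq:GW_Lossless_Region}) with the claimed single-letter expression, treating achievability and the converse separately. Both reductions rest on two algebraic identities. For part (i), combining $H(X|U)+H(Y|U)=H(X,Y|U)+I(X;Y|U)$ with $I(X,Y;U)+H(X,Y|U)=H(X,Y)$ gives
\[
I(X,Y;U)+H(X|U)+H(Y|U)=H(X,Y)+I(X;Y|U),
\]
so the sum rate of the Gray-Wyner boundary point attached to $U$ exceeds $H(X,Y)$ by exactly $I(X;Y|U)$. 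For part (ii), using $I(X,Y;W)=I(X;W)+I(Y;W|X)=I(Y;W)+I(X;W|Y)$ gives
\[
2I(X,Y;W)+H(X|W)+H(Y|W)=H(X)+H(Y)+I(X;W|Y)+I(Y;W|X),
\]
so the receive rate of that point exceeds $H(X)+H(Y)$ by exactly $I(X;W|Y)+I(Y;W|X)$.

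Achievability follows directly. For (i), any $U$ with $I(X;Y|U)=R'$ places $(I(X,Y;U),H(X|U),H(Y|U))\in\mathcal{R}_{GW}$ on the transmit plane (\ref{eq:eq_thm_1}) with shared rate $I(X,Y;U)$, whence $C(X,Y;R')\le I(X,Y;U)$. For (ii), any $W$ with $I(X;W|Y)+I(Y;W|X)=R''$ places the analogous point on the receive plane (\ref{eq:eq_thm_1-1}) with shared rate $I(X,Y;W)$, whence $K(X,Y;R'')\ge I(X,Y;W)$.

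The converses are the crux, and the obstacle is that the Gray-Wyner characterization only yields an \emph{inequality} on the auxiliary constraint. A point on the transmit plane furnishes $U$ with $R_0\ge I(X,Y;U)$ and, by the first identity, $I(X;Y|U)\le R'$; a point on the receive plane furnishes $W$ with $s(W):=I(X;W|Y)+I(Y;W|X)\le R''$ and, using $R_1\ge H(X|W)$, $R_2\ge H(Y|W)$ in $2R_0=H(X)+H(Y)+R''-R_1-R_2$, the bound $R_0\le I(X,Y;W)+\tfrac12\bigl(R''-s(W)\bigr)$. To upgrade these inequalities to the equality constraints (\ref{eq:eq_2_thm_1}) and (\ref{eq:eq_2_thm_1-1}), I would introduce an independent time-sharing switch and interpolate towards an extreme auxiliary. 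In (i), revealing $U$ with probability $p$ and nothing otherwise produces $U'$ with $I(X;Y|U')=p\,I(X;Y|U)+(1-p)I(X;Y)$ and $I(X,Y;U')=p\,I(X,Y;U)\le I(X,Y;U)$; choosing $p$ so that $I(X;Y|U')=R'$ (possible since $I(X;Y|U)\le R'\le I(X;Y)$) makes $U'$ feasible for (\ref{eq:eq_2_thm_1}), so the minimum in (\ref{eq:eq_1_thm_1}) is at most $I(X,Y;U')\le I(X,Y;U)\le R_0$. Equivalently, that minimum is non-increasing in $R'$, so the slack constraint $I(X;Y|U)\le R'$ may be tightened to equality without loss.

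In (ii), mixing $W$ with the full pair $(X,Y)$---revealing $(X,Y)$ with probability $\lambda$ and $W$ otherwise---produces $W'$ with
\[
s(W')=(1-\lambda)s(W)+\lambda\bigl(H(X|Y)+H(Y|X)\bigr),\qquad I(X,Y;W')=(1-\lambda)I(X,Y;W)+\lambda H(X,Y).
\]
Choosing $\lambda$ so that $s(W')=R''$ (possible since $s(W)\le R''\le H(X|Y)+H(Y|X)$), the increment $I(X,Y;W')-I(X,Y;W)=\lambda H(X,Y|W)$ dominates the freed slack $\tfrac12(R''-s(W))$ exactly because
\[
2H(X,Y|W)-\bigl(H(X|Y,W)+H(Y|X,W)\bigr)=H(X|W)+H(Y|W)\ge 0,
\]
so $I(X,Y;W')\ge R_0$ and the maximum in (\ref{eq:eq_1_thm_1-1}) is at least $R_0$. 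I expect this entropy inequality to be the one delicate point, since it certifies that spending the entire excess budget $R''$ never costs more shared rate than it frees, so the maximizer may be taken with $s(W)=R''$ exactly. Finally, a standard Carath\'eodory cardinality bound on the alphabet of $U$ (resp.\ $W$) makes the feasible sets compact, so the extrema are attained and the $\inf$/$\sup$ are genuine $\min$/$\max$.
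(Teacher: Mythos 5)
Your proof is correct, and while the skeleton is the same as the paper's (achievability by plugging the canonical point $(I(X,Y;U),H(X|U),H(Y|U))$ into the Gray--Wyner region, converse via the Gray--Wyner converse, with the identity $I(X,Y;U)+H(X|U)+H(Y|U)=H(X,Y)+I(X;Y|U)$ and its receive-rate analogue), your converse is substantively more careful. The paper disposes of the converse in one sentence --- ``every point in $\mathcal{R}_{GW}$ is achieved by some $U$; substituting in the sum-rate constraint gives $I(X;Y|U)=R'$'' --- which implicitly assumes the Gray--Wyner inequalities $R_0\geq I(X,Y;U)$, $R_1\geq H(X|U)$, $R_2\geq H(Y|U)$ are all tight at the point in question; when there is slack, the $U$ furnished by the converse only satisfies $I(X;Y|U)\leq R'$ (resp.\ $s(W)\leq R''$), and one must argue the equality-constrained optimum is still no worse. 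You close exactly this gap: the time-sharing switch in (i) gives $I(X,Y;U')=p\,I(X,Y;U)\leq R_0$ with $I(X;Y|U')=R'$, and in (ii) the interpolation toward $W'=(X,Y)$ works precisely because of the identity $2H(X,Y|W)-\bigl(H(X|Y,W)+H(Y|X,W)\bigr)=H(X|W)+H(Y|W)\geq 0$, which certifies that the gain $\lambda H(X,Y|W)$ covers the slack bound $R_0\leq I(X,Y;W)+\tfrac12\bigl(R''-s(W)\bigr)$; I verified both computations and they are correct. You also supply a full proof of part (ii), which the paper omits with ``follows similar lines,'' and your Carath\'eodory/compactness remark matches the paper's appeal to Wyner's Theorem 4.4 for replacing $\inf$/$\sup$ by $\min$/$\max$. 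In short: same route, but your version withstands the boundary-slack objection that the paper's terse converse leaves to the reader, at the cost of the extra interpolation machinery.
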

\begin{proof}
We prove part (i) of the theorem for $C(X,Y;R^{'})$. The proof of
(ii) for $K(X,Y;R^{''})$ follows similar lines.

\textbf{\textit{Achievability}} : Let $U$ be jointly distributed
with $(X,Y)$ such that $I(X;Y|U)=R^{'}$. It leads to a point in
the Gray-Wyner region with $(R_{0},R_{1},R_{2})=(I(X,Y;U),$ $H(X|U),H(Y|U))$.
On substituting in (\ref{eq:eq_thm_1}) we have:
\begin{eqnarray}
R_{0}+R_{1}+R_{2} & = & I(X,Y;U)+H(X|U)+H(Y|U)\nonumber \\
 & = & H(X,Y)+I(X;Y|U)\\
 & = & H(X,Y)+R^{'}
\end{eqnarray}
Note that the existence of a $U$ that achieves the minimum in (\ref{eq:eq_1_thm_1})
follows from Theorem 4.4 (A) in \cite{Wyner_CI}. This allows us to replace the infimum in the definition of $C(X,Y;R^{'})$ with a minimum in (\ref{eq:eq_1_thm_1}).

\textbf{\textit{Converse}}\textbf{ }: We know from the converse to
the Gray-Wyner region that every point in $\mathcal{R}_{GW}$ is achieved
by some random variable $U$ jointly distributed with $(X,Y)$. We
need to determine the condition on $U$ for (\ref{eq:eq_thm_1}) to
hold. On substituting $(R_{0},R_{1},R_{2})=(I(X,Y;U),$ $H(X|U),H(Y|U))$
in (\ref{eq:eq_thm_1}), we get the condition to be (\ref{eq:eq_2_thm_1}),
proving the converse. 
\end{proof}
Note that the cardinality of $\mathcal{U}$ can be restricted to $|\mathcal{U}|\leq$
$|\mathcal{X}|$$|\mathcal{Y}|$ + 1 using Theorem 4.4 in \cite{Wyner_CI}.
Also note that when the transmit rate is $H(X,Y)+R^{'}$, the minimum
receive rate is $H(X,Y)+R^{'}+C(X,Y;R^{'})$. Similarly, when the
receive rate is $H(X)+H(Y)+R^{''}$, the minimum transmit rate is
$H(X)+H(Y)+R^{''}-K(X,Y;R^{''})$. Hence the quantities $C(X,Y;R^{'})$
and $K(X,Y;R^{''})$ are just rotated/transformed versions of the
transmit versus receive rate tradeoff curve.

We refer to plots of $C(X,Y;R^{'})$ and $K(X,Y;R^{''})$ versus $R^{'}$
and $R^{''}$ as the \textit{`transmit tradeoff curve}' and the \textit{`receive
tradeoff curve}', respectively. Observe that in both cases, as we
increase $R^{'}$ (or $R^{''}$) we obtain parallel cross-sections
of the Gray-Wyner region and the set of operating points $P_{C(X,Y)}(R^{'})$
and $P_{K(X,Y)}(R^{''})$, trace contours on the boundary of the region.
These two contours are precisely the transmit and the receive contours
defined in section \ref{sec:Relation-to-Common}. Note the difference
between the contours and their respective tradeoff curves. The contours
are defined in a 3-D space and lie on the boundary of $\mathcal{R}_{GW}$.
In general each of the contours may not even lie on a single plane.
However, the tradeoff curves are a projection of the respective contours
on to a 2-D plane.

\subsection{Properties of the tradeoff curve\label{sec:Properties-of-the}}

\begin{figure}[!t]
\centering \includegraphics[scale=0.28]{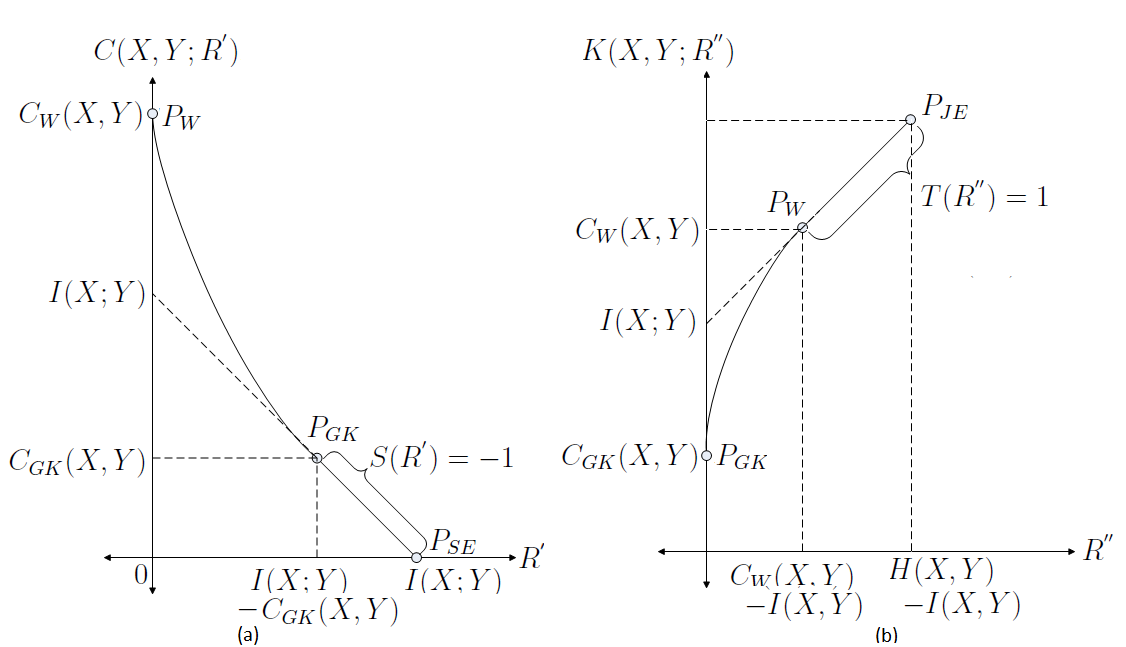}\caption{(a) Typical transmit tradeoff curve - $C(X,Y;R^{'})$ (b) Typical
receive tradeoff curve - $K(X,Y;R^{''})$\label{fig:Typical-Common-Information}}
\end{figure}

In this section, we focus on the quantities $C(X,Y;R^{'})$ and $K(X,Y;R^{''})$
and analyze some important properties, which allow us to provide alternate
characterizations for the two notions of CI. As most of the proofs
for $K(X,Y;R^{''})$ are very similar to their $C(X,Y;R^{'})$ counterparts,
we only prove the properties for $C(X,Y;R^{'})$. We plot typical
transmit and receive tradeoff curves in Figure \ref{fig:Typical-Common-Information}
to illustrate the discussion.

Consider the transmit tradeoff curve. At $R^{'}=0$ we get the operating
point corresponding to Wyner CI where the minimum shared information
is given by $C_{W}(X,Y)$. This point is denoted by $P_{W}$ in Figure
\ref{fig:Typical-Common-Information}. Next observe that at $R_{0}=0,$
for lossless reconstruction of $X$ and $Y$, we need, $R_{1}\geq H(X)$
and $R_{2}\geq H(Y)$. Therefore at an excess sum transmit rate $R^{'}=H(X)+H(Y)-H(X,Y)=I(X;Y)$,
the shared rate vanishes, or, $C(X,Y;I(X;Y))=0$. We call this point
- `separate encoding' and denote it by $P_{SE}$ in the figure. It
is also obvious that any $U$ independent of $(X,Y)$ achieves this
minimum $R_{0}$ for $R^{'}=I(X;Y)$. 
\begin{lem}
\textbf{Convexity:} \label{lem:convex}$C(X,Y;R^{'})$ is convex for $R^{'}\in[0,I(X;Y)]$
and $K(X,Y;R^{''})$ is concave for $R^{''}\in[0,H(X,Y)-I(X;Y)]$.\end{lem}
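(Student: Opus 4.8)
The plan is to recognize both $C(X,Y;R^{'})$ and $K(X,Y;R^{''})$ as value functions of a linear program over the convex set $\mathcal{R}_{GW}$, and to read off convexity (resp. concavity) directly from the convexity of this region -- the very property already invoked in the Claim of Section \ref{sec:Relation-to-Common}. Concretely, $C(X,Y;R^{'})$ minimizes the linear functional $R_{0}$ over the slice of $\mathcal{R}_{GW}$ cut out by the affine constraint $R_{0}+R_{1}+R_{2}=H(X,Y)+R^{'}$, while $K(X,Y;R^{''})$ maximizes $R_{0}$ over the slice cut by $2R_{0}+R_{1}+R_{2}=H(X)+H(Y)+R^{''}$. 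Since both the objective and the slicing constraint are affine in $(R_{0},R_{1},R_{2})$, the value function should inherit convexity/concavity from the feasible set.

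First I would prove convexity of $C$. Fix $R_{1}^{'},R_{2}^{'}\in[0,I(X;Y)]$ and $\lambda\in[0,1]$; by Theorem \ref{thm:(i)excess_rate} the infimum is attained, so I may pick $p_{i}=(R_{0}^{(i)},R_{1}^{(i)},R_{2}^{(i)})\in\mathcal{R}_{GW}$ with $R_{0}^{(i)}=C(X,Y;R_{i}^{'})$ and $R_{0}^{(i)}+R_{1}^{(i)}+R_{2}^{(i)}=H(X,Y)+R_{i}^{'}$. Convexity of $\mathcal{R}_{GW}$ gives $p_{\lambda}=\lambda p_{1}+(1-\lambda)p_{2}\in\mathcal{R}_{GW}$; its coordinates sum to $H(X,Y)+(\lambda R_{1}^{'}+(1-\lambda)R_{2}^{'})$, so $p_{\lambda}$ is feasible for the slice at $R^{'}=\lambda R_{1}^{'}+(1-\lambda)R_{2}^{'}$, and its $R_{0}$-coordinate equals $\lambda C(X,Y;R_{1}^{'})+(1-\lambda)C(X,Y;R_{2}^{'})$. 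Minimality then yields $C(X,Y;\lambda R_{1}^{'}+(1-\lambda)R_{2}^{'})\leq\lambda C(X,Y;R_{1}^{'})+(1-\lambda)C(X,Y;R_{2}^{'})$, which is convexity. The statement for $K$ is the mirror image: the same convex combination of two maximizers lands on the receive-rate slice at $R^{''}=\lambda R_{1}^{''}+(1-\lambda)R_{2}^{''}$ with $R_{0}$-coordinate $\lambda K(X,Y;R_{1}^{''})+(1-\lambda)K(X,Y;R_{2}^{''})$, and maximality gives the reverse inequality, i.e.\ concavity.

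An alternative, which I would also sketch, works entirely at the single-letter level using Theorem \ref{thm:(i)excess_rate} and a time-sharing construction. Given optimizing auxiliaries $U_{1},U_{2}$ with $I(X;Y|U_{i})=R_{i}^{'}$, introduce a switch $Q\in\{1,2\}$ with $P(Q=1)=\lambda$ chosen independent of $(X,Y)$, draw $U^{'}$ from $P_{U_{i}|X,Y}$ on $\{Q=i\}$, and set $U=(U^{'},Q)$. Independence of $Q$ from $(X,Y)$ both preserves the source marginal and forces $I(X,Y;Q)=0$, so $I(X,Y;U)=\lambda I(X,Y;U_{1})+(1-\lambda)I(X,Y;U_{2})$ while $I(X;Y|U)=\lambda R_{1}^{'}+(1-\lambda)R_{2}^{'}$; feasibility of this $U$ then bounds $C$ at the mixed point by the convex combination. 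Mixing $W_{1},W_{2}$ analogously handles $K$, where one checks that $I(X;W|Y)+I(Y;W|X)$ also splits additively because $Q$ is conditionally independent of $X$ given $Y$ and of $Y$ given $X$. Neither route presents a serious obstacle; the only point that genuinely requires care is ensuring the mixing variable is independent of $(X,Y)$, so that the source distribution and the vanishing of $I(X,Y;Q)$ are preserved. The remainder is the standard convexity-of-value-function bookkeeping, and the restriction of $R^{'}$ (resp.\ $R^{''}$) to $[0,I(X;Y)]$ (resp.\ $[0,H(X,Y)-I(X;Y)]$) merely guarantees that the relevant slices are nonempty.
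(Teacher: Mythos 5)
Your first argument is exactly the paper's proof: the paper disposes of this lemma in one line (``follows directly from the convexity of the Gray-Wyner region''), and your convex-combination-of-optimizers bookkeeping is precisely the elaboration of that line, carried out correctly for both $C(X,Y;R^{'})$ and $K(X,Y;R^{''})$. Your supplementary time-sharing sketch with the switch variable $Q$ is also sound, but it is extra machinery the paper does not need, since the region-level argument already suffices.
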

\begin{proof}
The proof follows directly from the convexity of the Gray-Wyner region.\end{proof}
\begin{lem}
\textbf{Monotonicity} :\label{lem:monoton}$C(X,Y;R^{'})$ is strictly
monotone decreasing $\forall R^{'}\in[0,I(X;Y)]$ and $K(X,Y;R^{''})$
is strictly monotone increasing $\forall R^{''}\in[0,H(X,Y)-I(X;Y)]$\end{lem}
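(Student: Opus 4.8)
The plan is to prove both halves by the same device, reducing each to a single strict inequality produced by an explicit time-sharing (mixing) construction, and to work throughout from the single-letter characterizations of Theorem~\ref{thm:(i)excess_rate}. Weak monotonicity (non-increasing for $C$, non-decreasing for $K$) is immediate from the convexity/concavity of Lemma~\ref{lem:convex} together with the endpoint values, so the actual content is \emph{strictness}, i.e.\ the exclusion of flat segments. First I would record the endpoints: $C(X,Y;0)=C_{W}(X,Y)$ and $C(X,Y;I(X;Y))=0$, while $K(X,Y;0)=C_{GK}(X,Y)$ and $K(X,Y;H(X,Y)-I(X;Y))=H(X,Y)$, the last one attained by $W=(X,Y)$ since there $I(X;W|Y)+I(Y;W|X)=H(X|Y)+H(Y|X)=H(X,Y)-I(X;Y)$.

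For $C(X,Y;R^{'})$, fix $R_{1}^{'}<R_{2}^{'}\le I(X;Y)$ and let $U^{*}$ attain $C(X,Y;R_{1}^{'})$, so that $I(X,Y;U^{*})=C(X,Y;R_{1}^{'})$ and $I(X;Y|U^{*})=R_{1}^{'}$. I would introduce a switch $Q$ independent of $(X,Y)$ with $P(Q=0)=\lambda$, and set $U_{\lambda}=(\tilde{U},Q)$ where $\tilde{U}$ is distributed as $U^{*}$ when $Q=1$ and is a constant when $Q=0$. A direct computation, using $Q\perp(X,Y)$, gives $I(X;Y|U_{\lambda})=(1-\lambda)R_{1}^{'}+\lambda I(X;Y)$ and $I(X,Y;U_{\lambda})=(1-\lambda)C(X,Y;R_{1}^{'})$. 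Choosing $\lambda\in(0,1]$ so that $(1-\lambda)R_{1}^{'}+\lambda I(X;Y)=R_{2}^{'}$ makes $U_{\lambda}$ feasible for the minimization defining $C(X,Y;R_{2}^{'})$, whence $C(X,Y;R_{2}^{'})\le(1-\lambda)C(X,Y;R_{1}^{'})$.

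To upgrade this to a strict inequality I would show that $C(X,Y;R^{'})>0$ whenever $R^{'}<I(X;Y)$: indeed $I(X,Y;U)=0$ forces $U$ independent of $(X,Y)$, hence $I(X;Y|U)=I(X;Y)$, so $C$ vanishes only at $R^{'}=I(X;Y)$. Then $C(X,Y;R_{1}^{'})>0$ and $\lambda>0$ yield $C(X,Y;R_{2}^{'})<C(X,Y;R_{1}^{'})$, the desired strict decrease. The argument for $K(X,Y;R^{''})$ is the mirror image: for $R_{1}^{''}<R_{2}^{''}\le H(X,Y)-I(X;Y)$, mix the optimal $W^{*}$ at $R_{1}^{''}$ with the endpoint choice $W=(X,Y)$ through the same independent switch to obtain $W_{\lambda}$ with $I(X;W_{\lambda}|Y)+I(Y;W_{\lambda}|X)=(1-\lambda)R_{1}^{''}+\lambda(H(X,Y)-I(X;Y))$ and $I(X,Y;W_{\lambda})=(1-\lambda)K(X,Y;R_{1}^{''})+\lambda H(X,Y)$; feasibility then gives $K(X,Y;R_{2}^{''})\ge(1-\lambda)K(X,Y;R_{1}^{''})+\lambda H(X,Y)$. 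Strictness here needs $K(X,Y;R_{1}^{''})<H(X,Y)$ for $R_{1}^{''}<H(X,Y)-I(X;Y)$, which holds because $I(X,Y;W)=H(X,Y)$ forces $W$ to determine $(X,Y)$ and thereby pins $I(X;W|Y)+I(Y;W|X)$ to its maximal value $H(X,Y)-I(X;Y)$.

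I expect the only genuine obstacle to be strictness rather than monotonicity, since a convex non-increasing function can remain flat: the explicit mixing construction, together with the facts that $C$ vanishes only at the right endpoint and $K$ saturates at $H(X,Y)$ only at the right endpoint, is precisely what rules flatness out. The remaining verifications are routine, namely that $U_{\lambda}$ and $W_{\lambda}$ are legitimate single-letter auxiliaries and that $Q\perp(X,Y)$ makes all the conditional-information terms split additively across the two branches of the switch.
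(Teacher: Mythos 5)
Your proof is correct, but it takes a genuinely different route from the paper's. The paper argues geometrically in the Gray--Wyner region: non-increase of $C(X,Y;R^{'})$ follows because $\mathcal{R}_{GW}$ is closed under componentwise rate increases (pad a private branch by $\Delta$), and strictness is then delegated to convexity (Lemma \ref{lem:convex}) plus the slope bound $S^{-}(I(X;Y))\leq-1$ of Lemma \ref{lem:slope}, itself obtained from the lossless converse $R_{0}+R_{1}\geq H(X)$, $R_{0}+R_{2}\geq H(Y)$, which rules out any flat segment. You instead work entirely inside the single-letter characterization of Theorem \ref{thm:(i)excess_rate}: time-sharing the optimal $U^{*}$ at $R_{1}^{'}$ with a constant via an independent switch $Q$ gives the exact constraint value $I(X;Y|U_{\lambda})=(1-\lambda)R_{1}^{'}+\lambda I(X;Y)$ and the chordal bound $C(X,Y;R_{2}^{'})\leq(1-\lambda)C(X,Y;R_{1}^{'})$ with $\lambda=(R_{2}^{'}-R_{1}^{'})/(I(X;Y)-R_{1}^{'})\in(0,1]$, and strictness reduces to the endpoint non-degeneracy $C(X,Y;R^{'})>0$ for $R^{'}<I(X;Y)$ (dually, $K(X,Y;R^{''})<H(X,Y)$ strictly before its right endpoint), both of which you verify correctly. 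Two small points you implicitly rely on are consistent with the paper: attainment of the minimizing $U^{*}$ is licensed by the statement the paper imports from Wyner's Theorem 4.4(A), and your step ``$I(X,Y;W)=H(X,Y)$ forces $W$ to determine $(X,Y)$'' uses discreteness and finite entropy, valid in the lossless setting considered here. As for what each approach buys: yours is self-contained (no forward reference to Lemma \ref{lem:slope}), quantitative (the curve lies below the chord to the right endpoint), and it actually writes out the $K$ half that the paper dismisses as ``similar''; the paper's proof, by using only convexity of the Gray--Wyner region and the source-coding converse, transfers verbatim to the lossy setting (as the paper's Remark emphasizes), whereas your argument leans on the lossless single-letter characterization and would need a lossy analogue of Theorem \ref{thm:(i)excess_rate}, which the paper does not have.
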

\begin{proof}
It is clear from the achievability results of Gray-Wyner that if a
point $(r_{0},r_{1},r_{2})\in\mathcal{R}_{GW}$, then all points $\{(R_{0},R_{1},R_{2}):R_{0}\geq r_{0},R_{1}\geq r_{1},R_{2}\geq r_{2}\}\in\mathcal{R}_{GW}$.
Let $C(X,Y;R^{'})=r_{0}$, for some excess transmission rate $R^{'}$
and let the corresponding operating point in $\mathcal{R}_{GW}$ be
$(r_{0},r_{1},r_{2})$. Hence for any $\Delta>0$, the point $(r_{0},r_{1}+\Delta,r_{2})\in\mathcal{R}_{GW}$
and satisfies $R_{0}+R_{1}+R_{2}=R+\Delta$. Therefore, 
\begin{eqnarray}
C(X,Y;R^{'}+\Delta) & = & \inf R_{0}:\{R_{0}+R_{1}+R_{2}=R+\Delta\}\nonumber \\
 & \leq & r_{0}
\end{eqnarray}
Hence, $C(X,Y;R^{'})$ is non-increasing. Then it follows from convexity
that $C(X,Y;R^{'})$ is either a constant or is strictly monotone
decreasing. Lemma \ref{lem:slope} below eliminates the possibility
of a constant, proving this lemma. 
\end{proof}
At all $R^{'}$ where $C(X,Y;R^{'})$ is differentiable, we denote
the slope by $S(R^{'})$. At non-differentiable points, we denote
by $S^{-}(R^{'})$ and $S^{+}(R^{'})$ the left and right derivatives,
respectively. Similarly the slope, left derivative and right derivatives
of $K(X,Y;R^{''})$ are denoted by $T(R^{''})$, $T^{-}(R^{''})$
and $T^{+}(R^{''})$, respectively. 
\begin{lem}
\label{lem:slope}The slope of $C(X,Y;R^{'})$, $S(R^{'})\leq-1$
$\forall R\in[0,I(X;Y)]$ where the curve is differentiable. At non-differentiable
points, we have $S^{-}(R^{'})<S^{+}(R^{'})\leq-1$. Similarly we have,
$T(R^{''})\geq1\,\,\forall R^{''}\in[0,H(X,Y)-I(X;Y)]$ and $T^{-}(R^{''})>T^{+}(R^{''})\geq1$\end{lem}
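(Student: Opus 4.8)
The plan is to reduce each slope bound to an elementary comparison of secant slopes, using the convexity of $C(X,Y;R^{'})$ and concavity of $K(X,Y;R^{''})$ from Lemma \ref{lem:convex} together with the endpoint values of the two curves. For the transmit curve I would anchor at the separate-encoding point, where $C(X,Y;I(X;Y))=0$ (noted above); for the receive curve I would anchor at the right endpoint $R^{''}_{\max}=H(X,Y)-I(X;Y)$, attained by $W=(X,Y)$, where $K(X,Y;R^{''}_{\max})=H(X,Y)$. Once the right one-sided comparison is set up, monotonicity of the subgradients of convex/concave functions yields the stated bounds, including the strict inequalities at non-differentiable points.

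The crux for $C(X,Y;R^{'})$ will be the lower bound $C(X,Y;R^{'})\ge I(X;Y)-R^{'}$. I would establish it by taking any $U$ feasible for (\ref{eq:eq_2_thm_1}), i.e.\ with $I(X;Y|U)=R^{'}$, and writing $I(X;Y)\le I(X;Y,U)=I(X;U)+I(X;Y|U)\le I(X,Y;U)+I(X;Y|U)=I(X,Y;U)+R^{'}$, where the first step uses $I(X;Y,U)=I(X;Y)+I(X;U|Y)\ge I(X;Y)$ and the third uses $I(X;U)\le I(X,Y;U)$; minimizing over feasible $U$ then gives the bound. The secant of the convex curve joining $R^{'}$ to the endpoint $I(X;Y)$ therefore has slope $\frac{0-C(X,Y;R^{'})}{I(X;Y)-R^{'}}\le -1$, and since the right derivative of a convex function is at most the slope of any rightward secant, $S^{+}(R^{'})\le -1$. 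This gives $S(R^{'})\le -1$ wherever $C$ is differentiable, and the convex-function fact $S^{-}(R^{'})<S^{+}(R^{'})$ at a kink yields $S^{-}(R^{'})<S^{+}(R^{'})\le -1$.

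The receive curve is dual. The key estimate will be $K(X,Y;R^{''})\le I(X;Y)+R^{''}$, which I would obtain for any feasible $W$ by writing $I(X,Y;W)=I(X;W)+I(Y;W|X)$ and $I(X;W)\le I(X;Y)+I(X;W|Y)$ (from $I(X;W,Y)=I(X;Y)+I(X;W|Y)\ge I(X;W)$), so that $I(X,Y;W)\le I(X;Y)+I(X;W|Y)+I(Y;W|X)=I(X;Y)+R^{''}$. With the endpoint $K(X,Y;R^{''}_{\max})=H(X,Y)=I(X;Y)+R^{''}_{\max}$, the secant of the concave curve joining $R^{''}$ to $R^{''}_{\max}$ has slope $\frac{H(X,Y)-K(X,Y;R^{''})}{R^{''}_{\max}-R^{''}}\ge 1$; since the right derivative of a concave function is at least the slope of any rightward secant, $T^{+}(R^{''})\ge 1$, hence $T(R^{''})\ge 1$ at differentiable points and $T^{-}(R^{''})>T^{+}(R^{''})\ge 1$ at kinks.

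The main obstacle, such as it is, will be the two information inequalities $C(X,Y;R^{'})\ge I(X;Y)-R^{'}$ and $K(X,Y;R^{''})\le I(X;Y)+R^{''}$ together with the verification of the anchor endpoint values; everything after that is pure convex/concave subgradient bookkeeping. The delicate point is that the chosen endpoints make the secant bounds touch the curve exactly there, which is precisely what pins the thresholds at $-1$ and $+1$ rather than at a weaker constant.
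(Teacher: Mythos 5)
Your proof is correct, and its skeleton is the same as the paper's: both rest on the linear bound $C(X,Y;R^{'})\geq I(X;Y)-R^{'}$ (dually $K(X,Y;R^{''})\leq I(X;Y)+R^{''}$) touching the curve at an endpoint, followed by convex/concave subgradient bookkeeping. The difference is in how that bound is obtained. The paper proves it \emph{operationally}, without invoking the single-letter characterization: for any point of $\mathcal{R}_{GW}$ with $R_{0}=\triangle$, the lossless source-coding converse gives $R_{0}+R_{1}\geq H(X)$ and $R_{0}+R_{2}\geq H(Y)$, whose sum yields $R_{0}+R_{1}+R_{2}\geq H(X,Y)+I(X;Y)-\triangle$; it then anchors at the right endpoint via $S^{-}(I(X;Y))\leq-1$ and propagates leftward by convexity, leaving the $K$ half as ``similar.'' You instead derive the bound from the characterization of Theorem \ref{thm:(i)excess_rate} by chain-rule manipulations ($I(X;Y)\leq I(X,Y;U)+I(X;Y|U)$ for any feasible $U$, and dually for $W$), and you bound $S^{+}(R^{'})$ pointwise by rightward secants rather than propagating from the endpoint---an equivalent use of convexity. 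Your route buys an explicit, self-contained treatment of the receive curve, including verification of the anchor $K(X,Y;H(X,Y)-I(X;Y))=H(X,Y)$ via $W=(X,Y)$, which the paper omits. The paper's route buys two things: it is more elementary (no appeal to the tradeoff characterization, so no issue of whether the infimum in (\ref{eq:defn_SI}) is attained or how feasible $U$ behave), and---as the Remark following these lemmas exploits---it extends verbatim to the lossy Gray--Wyner region by replacing $H(X),H(Y)$ with $R_{X}(D_{1}),R_{Y}(D_{2})$, whereas your argument through Theorem \ref{thm:(i)excess_rate} cannot carry over there, since the paper has no lossy single-letter characterization of the tradeoff. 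For the lossless statement as posed, however, your proof is complete; the only step left tacit is the right-endpoint case, where $S^{-}(I(X;Y))\leq-1$ follows as the limit of your secant slopes, each at most $-1$.
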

\begin{proof}
Note that it is sufficient for us to show that $S^{-}(I(X;Y))\leq-1$.
Then it directly follows from convexity that $S(R^{'})\leq-1$ at
all differentiable points and $S^{-}(R^{'})<S^{+}(R^{'})\leq-1$ at
all non-differentiable points. Consider $\triangle>0$, and fix the
shared information rate to be $R_{0}=\triangle$. From the converse
of the source coding theorem for lossless reconstruction, we have:
\begin{eqnarray}
R_{0}+R_{1}=\triangle+R_{1} & \geq & H(X)\nonumber \\
R_{0}+R_{2}=\triangle+R_{2} & \geq & H(Y)\label{eq:slope}
\end{eqnarray}
 The above inequalities imply $R_{0}+R_{1}+R_{2}\geq H(X,Y)+I(X;Y)-\triangle$.
Therefore the point on the transmit tradeoff curve with $C(X,Y;R^{'})=\triangle$
has $R^{'}\geq I(X;Y)-\triangle$. Hence $S^{-}(I(X;Y))\leq-1$ proving
the Lemma. \end{proof}
\begin{rem}
Observe that the above proofs do not rely on the lossless definitions
of $C(X,Y;R^{'})$ and $K(X,Y;R^{''})$, but leverage only on the
convexity of the lossless Gray-Wyner region. It is well known that
the lossy Gray-Wyner region is also convex in the rates, for all distortion
pairs. Consequently, all the three lemmas (\ref{lem:convex}, \ref{lem:monoton}
and \ref{lem:slope}) can be easily extended to the lossy counterparts
of $C(X,Y;R^{'})$ and $K(X,Y;R^{''})$. We omit the details of the
proof here to avoid repetition. 
\end{rem}

\subsection{Alternate characterizations for $C_{GK}(X,Y)$ and $C_{W}(X,Y)$\label{sub:Relation-to-Gacs-Korner}}

In this section, we provide alternate characterizations for $C_{GK}(X,Y)$
and $C_{W}(X,Y)$ in terms of $C(X,Y;R^{'})$ and $C(X,Y;R^{''})$,
respectively. 
\begin{thm}
\label{thm:GK_C}An alternate characterization for the Gács-Körner
CI is:
\begin{eqnarray}
C_{GK}(X,Y) & = & \sup_{R^{'}:S^{+}(R^{'})=-1}C(X,Y;R^{'})\label{eq:GK_char}
\end{eqnarray}
 If there exists no $R^{'}$ for which $S^{+}(R^{'})=-1$, then, $C_{GK}(X,Y)=0$.
Similarly, an alternate characterization for Wyner's CI is : 
\begin{eqnarray}
C_{W}(X,Y) & = & \inf_{R^{''}:T^{+}(R^{''})=1}K(X,Y;R^{''})\label{eq:GK_char-1}
\end{eqnarray}
 If there exists no $R^{''}$ for which $T^{+}(R)=1$, then, $C_{W}(X,Y)=H(X,Y)$.
Note that $C_{GK}(X,Y)$ corresponds to that excess sum transmit rate
where the region of $C(X,Y;R^{'})$ with slope $<-1$ meets the region
with slope equal to $-1$, and $C_{W}(X,Y)$ corresponds to that excess
receive rate where the region of $K(X,Y;R^{''})$ with slope $>1$
meets the region with slope equal to $1$. \end{thm}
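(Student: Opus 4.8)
The plan is to establish the two alternate characterizations by identifying the geometric locations on the transmit and receive tradeoff curves where the relevant slope conditions of Lemma~\ref{lem:slope} are active, and then matching these locations to the operational definitions of $C_{GK}(X,Y)$ and $C_{W}(X,Y)$ in terms of the Gray-Wyner region. I prove the characterization for $C_{GK}(X,Y)$ in detail; the statement for $C_{W}(X,Y)$ follows by an entirely parallel argument with the roles of the transmit and receive contours (and the inequalities $\leq -1$ versus $\geq 1$) interchanged.

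First I would recall from Section~\ref{sec:Relation-to-Common} that $P_{GK}$ is precisely the operating point where $R_0+R_1=H(X)$ and $R_0+R_2=H(Y)$ both hold, and that at this point $R_0=C_{GK}(X,Y)$. Summing these two constraints gives the transmit rate $R_t=R_0+R_1+R_2=H(X)+H(Y)-R_0=H(X,Y)+I(X;Y)-C_{GK}(X,Y)$, so $P_{GK}$ sits at excess transmit rate $R^{'}_{GK}=I(X;Y)-C_{GK}(X,Y)$. The key observation driving the proof is the geometric meaning of the slope: moving along the transmit contour by increasing $R^{'}$ by $\Delta$ while $C(X,Y;R^{'})$ decreases at slope exactly $-1$ corresponds to \emph{shifting rate from the shared branch onto a single private branch} without changing the other private branch, since then the decrement in $R_0$ is exactly offset by the increment in $R_t=R_0+R_1+R_2$. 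The proof of Lemma~\ref{lem:slope} already shows that the slope is steeper than $-1$ until the constraints $R_0+R_1\geq H(X)$, $R_0+R_2\geq H(Y)$ become the binding obstruction; slope $=-1$ first becomes achievable exactly when one of these two individual reconstruction constraints becomes tight, i.e.\ when it becomes optimal to route incremental rate along one private branch alone.

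The main technical step is therefore to prove the two-sided bound $\sup_{R^{'}:S^{+}(R^{'})=-1}C(X,Y;R^{'})=C_{GK}(X,Y)$. For the direction $\geq$, I would exhibit a feasible operating point at excess rate $R^{'}_{GK}=I(X;Y)-C_{GK}(X,Y)$ realizing $R_0=C_{GK}(X,Y)$ with $R_0+R_1=H(X)$ and $R_0+R_2=H(Y)$ simultaneously tight, using the Ahlswede--K\"orner characterization (\ref{eq:GK_AK_mc}) to supply the optimizing $U$; from this point one can increase $R_t$ along either private branch at slope $-1$, so $S^{+}(R^{'}_{GK})=-1$ and the supremum is at least $C(X,Y;R^{'}_{GK})=C_{GK}(X,Y)$. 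For the direction $\leq$, I would argue by convexity (Lemma~\ref{lem:convex}) and monotonicity (Lemma~\ref{lem:monoton}): since $C(X,Y;R^{'})$ is convex and strictly decreasing, its slope is nondecreasing in $R^{'}$, so the set of $R^{'}$ with $S^{+}(R^{'})=-1$ is an interval whose left endpoint $R^{'}_{\min}$ maximizes $C(X,Y;R^{'})$ over that set. At $R^{'}_{\min}$ at least one of the constraints $R_0+R_1\geq H(X)$, $R_0+R_2\geq H(Y)$ must be tight (otherwise one could decrease $R^{'}$ while staying feasible and keeping slope $-1$, contradicting minimality); and in fact \emph{both} must be tight at the maximizing point because the value $C(X,Y;R^{'}_{\min})$ is exactly the shared rate common to the two decoders, which by the converse cannot exceed $C_{GK}(X,Y)$. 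The degenerate case, where no $R^{'}$ achieves slope $-1$, means the curve reaches $C(X,Y;I(X;Y))=0$ at $P_{SE}$ while still strictly steeper than $-1$; then the tight individual constraints force $C_{GK}(X,Y)=0$, matching the stated convention.

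The hard part will be the direction $\leq$, specifically the claim that \emph{both} individual reconstruction constraints are simultaneously tight at the slope-transition point that maximizes $C(X,Y;R^{'})$. One must rule out the possibility that slope $-1$ is first attained while only one of $H(X)$, $H(Y)$ is saturated, which would place the maximizing point strictly above $P_{GK}$. I would resolve this by invoking the Ahlswede--K\"orner Markov conditions (\ref{eq:GK_AK_mc}): any $U$ achieving a point where only one constraint is tight still satisfies only one of the two required Markov chains, and the converse to $C_{GK}$ shows that the largest shared rate consistent with \emph{both} decoders being served at their individual entropies is exactly $H(J)=C_{GK}(X,Y)$, so no feasible slope-$(-1)$ point can exceed this value. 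Once this pinning argument is in place, the identification of the transition point with $P_{GK}$ is immediate, and the symmetric argument for $C_W(X,Y)$ on the receive tradeoff curve---where slope $+1$ marks the transition at which incremental \emph{receive} rate is absorbed by shifting load onto a single private branch while keeping the Pangloss (minimum sum-rate) structure---completes the theorem.
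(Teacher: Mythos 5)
Your overall skeleton matches the paper's proof: both arguments work with the transmit tradeoff curve, use Lemmas \ref{lem:convex}--\ref{lem:slope}, place $P_{GK}$ at excess rate $R^{'}_{GK}=I(X;Y)-C_{GK}(X,Y)$, and reduce the theorem to showing that the slope-$(-1)$ points are exactly the points satisfying the G\'acs--K\"orner constraints. Your ``$\geq$'' direction is essentially sound (modulo one inaccuracy noted below). But the step you yourself flag as ``the hard part'' --- that \emph{both} constraints $R_{0}+R_{1}\geq H(X)$ and $R_{0}+R_{2}\geq H(Y)$ are simultaneously tight at every slope-$(-1)$ point --- is where your proposal has a genuine gap. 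Your proposed resolution via the Ahlswede--K\"orner Markov conditions does not work as stated: tightness of one rate constraint at an operating point does not imply that an achieving auxiliary $U$ satisfies exactly one of the two Markov chains in (\ref{eq:GK_AK_mc}), and the converse for $C_{GK}(X,Y)$ bounds the shared rate only at points where \emph{both} constraints in (\ref{eq:constraint_AK_GK}) are active --- indeed $C(X,Y;0)=C_{W}(X,Y)\geq C_{GK}(X,Y)$ shows the shared rate can strictly exceed $C_{GK}(X,Y)$ when they are not. The paper closes this gap by pure arithmetic, with no Markov-chain case analysis: since the curve is convex (Lemma \ref{lem:convex}), has slope $\leq-1$ everywhere (Lemma \ref{lem:slope}), and terminates at $C(X,Y;I(X;Y))=0$, once $S^{+}(\tilde{R})=-1$ the curve is \emph{linear} on $[\tilde{R},I(X;Y)]$, forcing $C(X,Y;\tilde{R})=I(X;Y)-\tilde{R}$. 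Adding this identity $R_{0}=I(X;Y)-\tilde{R}$ to the sum-rate identity $R_{0}+R_{1}+R_{2}=H(X,Y)+\tilde{R}$ gives $2R_{0}+R_{1}+R_{2}=H(X)+H(Y)$, and since the lossless converse gives $R_{0}+R_{1}\geq H(X)$ and $R_{0}+R_{2}\geq H(Y)$ individually, both must hold with equality. This pinning argument is the linchpin you are missing; your interval/left-endpoint reasoning (``otherwise one could decrease $R^{'}$ while staying feasible and keeping slope $-1$'') conflates properties of the curve with properties of a single operating point and does not substitute for it.

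A secondary inaccuracy: your geometric picture of slope-$(-1)$ motion as ``shifting rate from the shared branch onto a single private branch without changing the other'' is wrong --- decreasing $R_{0}$ by $\Delta$ while increasing only $R_{1}$ violates $R_{0}+R_{2}\geq H(Y)$ when that constraint is tight. The feasible slope-$(-1)$ move decreases $R_{0}$ by $\Delta$ and increases \emph{both} $R_{1}$ and $R_{2}$ by $\Delta$ (equivalently, time-shares $P_{GK}$ with the separate-encoding point $P_{SE}$), raising $R_{t}$ by $\Delta$; this is what makes $C(X,Y;R^{'})=I(X;Y)-R^{'}$ achievable on $[R^{'}_{GK},I(X;Y)]$ and, combined with the lower bound from Lemma \ref{lem:slope}, establishes your ``$\geq$'' direction correctly. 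With the pinning argument installed in place of the Markov case analysis, your proof would coincide with the paper's.
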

\begin{proof}
We first assume that there exists some $R^{*}\in[0,I(X;Y))$, which
is the minimum rate at which $S^{+}(R^{*})=-1$. We need to show that
$C(X,Y;R^{*})=C_{GK}(X,Y)$. We denote this point by $P_{GK}$ in
the figure. Let $\tilde{R}$ be such that $R^{*}\leq\tilde{R}<I(X;Y)$
and let $\tilde{U}$ be the random variable which achieves the minimum
shared information rate at $\tilde{R}$ in Theorem \ref{thm:(i)excess_rate}.
Then it follows from Lemmas \ref{lem:convex} and \ref{lem:slope}
that $S^{+}(\tilde{R})=-1$. Then the point in the GW region corresponding
to $\tilde{U}$ satisfies the following two conditions:
\begin{eqnarray}
R_{0} & = & I(X,Y)-\tilde{R}\nonumber \\
R_{0}+R_{1}+R_{2} & = & H(X,Y)+\tilde{R}
\end{eqnarray}
 Adding the two equations, we have $2R_{0}+R_{1}+R_{2}=H(X)+H(Y)$,
which implies that $R_{0}+R_{1}=H(X)$ and $R_{0}+R_{2}=H(Y)$. Therefore,
the point corresponding to $\tilde{U}$ satisfies Gács-Körner constraints
(\ref{eq:constraint_AK_GK}). Hence, it follows that, any $\tilde{R}$
such that $S^{+}(\tilde{R})=-1$ leads to an operating point in the
GW region which satisfies Gács-Körner constraints.

Next, we need to show the converse. Consider any point in the GW region
satisfying Gács-Körner constraints. It can be written as, 
\begin{eqnarray}
R_{0} & = & I(X;Y)-\tilde{R}\nonumber \\
R_{1} & = & H(X)-(I(X;Y)-\tilde{R})\nonumber \\
R_{2} & = & H(Y)-(I(X;Y)-\tilde{R})\label{eq:GC_relation_to_C}
\end{eqnarray}
 for some $C_{GK}(X,Y)\leq\tilde{R}\leq I(X;Y)$. On summing the three
equations, we have $R_{0}+R_{1}+R_{2}=H(X,Y)+\tilde{R}$. It then
follows from the convexity of $C(X,Y;R^{'})$ that $S^{+}(\tilde{R})=-1$.
Therefore, we have, 
\begin{eqnarray}
C(X,Y;R^{*}) & = & I(X;Y)-R^{*}\nonumber \\
 & = & I(X;Y)-\min_{\tilde{R}:S^{+}(\tilde{R})=-1}\tilde{R}\nonumber \\
 & = & \max\,\Bigl\{ R_{0}:(R_{0},R_{1},R_{2})\in\mathcal{R}_{GW}\nonumber \\
 &  & R_{0}+R_{1}=H(X),\,\, R_{0}+R_{2}=H(Y)\Bigr\}\nonumber \\
 & = & C_{GK}(X,Y)
\end{eqnarray}
 proving the first part of the theorem. However, if there exists no
$R^{'}\in[0;I(X;Y)]$ for which $S^{+}(R^{'})=-1$, it implies that
$\forall R^{'}\in[0;I(X;Y)),\,\, C(X;Y;R^{'})>I(X;Y)$. Therefore
(\ref{eq:constraint_AK_GK}) is not satisfied with equality for any
$R^{'}\in[0;I(X;Y))$. Hence $C_{GK}(X,Y)$ = 0. 
\end{proof}
Note that the above characterizations for $C_{GK}(X,Y)$ and $C_{W}(X,Y)$
are of fundamentally different nature from their original characterizations
and provide important insights into the understanding of shared information.
Moreover, from a practical standpoint, these characterizations also
play a role in finding the minimum communication cost for networks
when the cost of transmission on each link is a non-linear function
of the rate as illustrated in \cite{ITW_CI}.

\section{Conclusion}

In this paper we derived single letter information theoretic characterizations
for the lossy generalizations of the two most prevalent notions of
CI due to Wyner and due to Gács and Körner. These generalizations
allow us to extend the theoretical interpretation underlying their
original definitions to sources with infinite entropy (eg. continuous
random variables). We use these information theoretic characterizations
to derive the CI of bivariate Gaussian random variables. We finally
showed that the operating points associated with the two notions of
CI arise as extreme special cases of a broader framework, that involves
the tradeoff between the the total transmit versus the receive rate
in the Gray-Wyner network. For the lossless setting, single letter
information theoretic characterization for the tradeoff curve was
established. Using the properties of the tradeoff curve, alternate
characterizations under a common framework were derived for the two
notions of CI.

\appendix

\section*{Appendix A: Proof of Theorem 1 for the general setting }

In this appendix, we extend the proof of Theorem 1 for general sources
and distortion measures. Here, we do \textit{not} assume that every
point on the intersection of the Gray-Wyner region and the Pangloss
plane satisfies (\ref{eq:GW_Lossy_Region}) with equality for some
joint density $P(X,Y,U,\hat{X},\hat{Y})$. We note that an equivalent
definition of Wyner's lossy CI is given by the following. For any
$\epsilon>0$, let $R_{0}^{min}(D_{1},D_{2},\epsilon)$ be defined
as: 
\begin{eqnarray}
R_{0}^{min}(D_{1},D_{2},\epsilon) & = & \inf R_{0}
\end{eqnarray}
over all points $(R_{0},R_{1},R_{2})$ satisfying:
\begin{eqnarray}
(R_{0},R_{1},R_{2}) & \in & \mathcal{R}_{GW}(D_{1},D_{2})\nonumber \\
R_{0}+R_{1}+R_{2} & \leq & R_{X,Y}(D_{1},D_{2})+\epsilon
\end{eqnarray}
Then,
\begin{equation}
C_{W}(X,Y;D_{1},D_{2})=\lim_{\epsilon\rightarrow0}R_{0}^{min}(D_{1},D_{2},\epsilon)
\end{equation}
In the following Lemma, we derive upper and lower bounds to $C_{W}(X,Y;D_{1},D_{2})$
in terms of $\epsilon$. 
\begin{lem}
Let $\epsilon>0$ be given. Then, an upper bound to $C_{W}(X,Y;D_{1},D_{2})$
is:
\begin{equation}
C_{W}(X,Y;D_{1},D_{2})\leq\inf I(X,Y;U)
\end{equation}
where the infimum is over all joint densities $P(X,Y,\hat{X},\hat{Y},U)$
satisfying:
\begin{eqnarray}
I(X,Y;\hat{X},\hat{Y}) & \leq & R_{X,Y}(D_{1},D_{2})+\epsilon\nonumber \\
I(X,Y;U|\hat{X},\hat{Y}) & = & 0\nonumber \\
I(\hat{X};\hat{Y}|U) & = & 0\nonumber \\
E(d_{X}(X,\hat{X})) & \geq & D_{1}\nonumber \\
E(d_{Y}(Y,\hat{Y})) & \geq & D_{2}
\end{eqnarray}
We denote this upper bound by \textup{$C_{W}^{UB}(D_{1},D_{2},\epsilon)$. }

A lower bound to $C_{W}(X,Y;D_{1},D_{2})$ is:
\begin{equation}
C_{W}(X,Y;D_{1},D_{2})\geq\inf I(X,Y;U)
\end{equation}
where the infimum is over all joint densities $P(X,Y,\hat{X},\hat{Y},U)$
satisfying:
\begin{eqnarray}
I(X,Y;\hat{X},\hat{Y}) & \leq & R_{X,Y}(D_{1},D_{2})+\epsilon\nonumber \\
I(X,Y;U|\hat{X},\hat{Y}) & \leq & \epsilon\nonumber \\
I(\hat{X};\hat{Y}|U) & \leq & \epsilon\nonumber \\
E(d_{X}(X,\hat{X})) & \geq & D_{1}\nonumber \\
E(d_{Y}(Y,\hat{Y})) & \geq & D_{2}
\end{eqnarray}
We denote this lower bound by \textup{$C_{W}^{LB}(D_{1},D_{2},\epsilon)$.}\end{lem}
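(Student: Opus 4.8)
The plan is to prove the two bounds separately, working through the equivalent operational definition $C_W(X,Y;D_1,D_2)=\lim_{\epsilon\to0}R_0^{min}(D_1,D_2,\epsilon)$ and relating each bound to $R_0^{min}(D_1,D_2,\epsilon)$ before passing to the limit. The upper bound is an achievability statement and the lower bound a converse; both reuse the chain of inequalities from the proof of Theorem~\ref{thm:main}, with the crucial difference that the excess sum rate $\epsilon$ now prevents those inequalities from collapsing to equalities, so each information term must be tracked separately.

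For the upper bound I would start from an arbitrary density $P(X,Y,\hat X,\hat Y,U)$ satisfying the $C_W^{UB}$ constraints and exhibit one feasible Gray--Wyner point. Using the characterization (\ref{eq:VKG_characterization}), consider $(R_0,R_1,R_2)=\bigl(I(X,Y;U),I(X,Y;\hat X|U),I(X,Y;\hat Y|U,\hat X)\bigr)$. The first two inequalities of (\ref{eq:VKG_characterization}) hold with equality, the third is verified exactly as step $(b)$ in Theorem~\ref{thm:main} from the Markov conditions, and the exact conditions $I(\hat X;\hat Y|U)=0$, $I(X,Y;U|\hat X,\hat Y)=0$ collapse the fourth, so the sum rate equals $I(X,Y;U,\hat X,\hat Y)=I(X,Y;\hat X,\hat Y)\le R_{X,Y}(D_1,D_2)+\epsilon$. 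Since the distortion constraints hold, this is an admissible point at excess rate $\epsilon$ with $R_0=I(X,Y;U)$, giving $R_0^{min}(D_1,D_2,\epsilon)\le I(X,Y;U)$; infimizing over all such densities yields $R_0^{min}(D_1,D_2,\epsilon)\le C_W^{UB}(D_1,D_2,\epsilon)$.

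For the lower bound I would invoke completeness of (\ref{eq:VKG_characterization}): every $(R_0,R_1,R_2)\in\mathcal{R}_{GW}(D_1,D_2)$ with $R_0+R_1+R_2\le R_{X,Y}(D_1,D_2)+\epsilon$ is realized by some density meeting the distortion constraints and (\ref{eq:VKG_characterization}). Running the chain of Theorem~\ref{thm:main} with slack,
\begin{align}
R_{X,Y}(D_1,D_2)+\epsilon &\ge I(X,Y;\hat X,\hat Y)+I(X,Y;U|\hat X,\hat Y)+I(\hat X;\hat Y|U)\nonumber\\
&\ge R_{X,Y}(D_1,D_2)+I(X,Y;U|\hat X,\hat Y)+I(\hat X;\hat Y|U),\nonumber
\end{align}
so the two conditional mutual informations sum to at most $\epsilon$ and hence each is at most $\epsilon$, while $I(X,Y;\hat X,\hat Y)\le R_{X,Y}(D_1,D_2)+\epsilon$. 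The extracted density therefore satisfies the $C_W^{LB}$ constraints, and since $R_0\ge I(X,Y;U)\ge C_W^{LB}(D_1,D_2,\epsilon)$ holds for every such point, we obtain $R_0^{min}(D_1,D_2,\epsilon)\ge C_W^{LB}(D_1,D_2,\epsilon)$.

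Combining the two directions gives $C_W^{LB}(D_1,D_2,\epsilon)\le R_0^{min}(D_1,D_2,\epsilon)\le C_W^{UB}(D_1,D_2,\epsilon)$ for every $\epsilon>0$; since $R_0^{min}$ is monotone in $\epsilon$ and increases to $C_W(X,Y;D_1,D_2)$, the stated bounds on $C_W(X,Y;D_1,D_2)$ follow on passing to the limit $\epsilon\to0$ (the lower bound in fact holding for each fixed $\epsilon$ because $R_0^{min}(D_1,D_2,\epsilon)\le C_W(X,Y;D_1,D_2)$). The main obstacle is the bookkeeping in the converse: unlike the equality case of Theorem~\ref{thm:main}, the $\epsilon$ slack forces one to isolate each information term and argue that the induced Markov defects $I(X,Y;U|\hat X,\hat Y)$ and $I(\hat X;\hat Y|U)$ are uniformly controlled by $\epsilon$. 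A secondary subtlety is the limiting step, where the usual cardinality bound on $\mathcal{U}$ and continuity of the mutual-information functionals are needed to guarantee that the per-$\epsilon$ sandwich survives as $\epsilon\to0$, which is precisely what the general setting requires when the Pangloss plane is not attained by any single density.
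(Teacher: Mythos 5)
Your proposal is correct and takes essentially the same route as the paper: the paper's own proof of this lemma is a one-line deferral to the argument of Theorem \ref{thm:main}, and your achievability via the corner point $\left(I(X,Y;U),\, I(X,Y;\hat{X}|U),\, I(X,Y;\hat{Y}|U,\hat{X})\right)$ of the Venkataramani et al.\ characterization, together with a converse that reruns the same inequality chain and absorbs the excess rate through $I(X,Y;U|\hat{X},\hat{Y})+I(\hat{X};\hat{Y}|U)\leq\epsilon$, is precisely that adaptation. Your explicit bookkeeping---that the sandwich $C_{W}^{LB}(D_{1},D_{2},\epsilon)\leq R_{0}^{min}(D_{1},D_{2},\epsilon)\leq C_{W}^{UB}(D_{1},D_{2},\epsilon)$ holds per $\epsilon$, with the lower bound on $C_{W}(X,Y;D_{1},D_{2})$ valid at each fixed $\epsilon$ and the upper bound obtained in the limit $\epsilon\rightarrow0$ (which is all the subsequent continuity lemma uses), and your tacit reading of the distortion constraints as $\leq D_{i}$ despite the sign typo in the statement---is consistent with the paper's intent.
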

\begin{proof}
The proof follows using very similar arguments to that in the proof
of Theorem \ref{thm:main}. Hence, we omit the details here to avoid
repetition. 
\end{proof}
Observe that the proof of Theorem \ref{thm:main} for the general
setting follows once we show that $C_{W}^{UB}(D_{1},D_{2},\epsilon)$
and $C_{W}^{LB}(D_{1},D_{2},\epsilon)$ are continuous at $\epsilon=0$.
The following Lemma sheds light on the the continuity of these quantities
at $\epsilon=0$.
\begin{lem}
Let $(D_{1},D_{2})$ be a pair of distortions at which there exists
at least one distribution $P(\hat{X},\hat{Y}|X,Y)$ that is RD-optimal
in achieving $R_{X,Y}(D_{1},D_{2})$. Then:
\begin{equation}
C_{W}(X,Y;D_{1},D_{2})=C_{W}^{LB}(D_{1},D_{2},0)=C_{W}^{UB}(D_{1},D_{2},0)
\end{equation}
However, if there exists no RD-optimal distribution, i.e., there only
exist distributions that can infinitesimally approach $R_{X,Y}(D_{1},D_{2})$,
then: 
\begin{eqnarray}
C_{W}(X,Y;D_{1},D_{2}) & = & \lim_{\epsilon\rightarrow0}C_{W}^{LB}(D_{1},D_{2},\epsilon)\nonumber \\
 & = & \lim_{\epsilon\rightarrow0}C_{W}^{UB}(D_{1},D_{2},\epsilon)
\end{eqnarray}
\end{lem}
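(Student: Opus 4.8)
The plan is to close the sandwich supplied by the previous lemma. For every $\epsilon>0$ we have $C_{W}^{LB}(D_{1},D_{2},\epsilon)\le C_{W}(X,Y;D_{1},D_{2})\le C_{W}^{UB}(D_{1},D_{2},\epsilon)$, and since increasing $\epsilon$ only relaxes the rate constraint $I(X,Y;\hat{X},\hat{Y})\le R_{X,Y}(D_{1},D_{2})+\epsilon$ (and, for the lower bound, the Markov slacks), both bounds are non-increasing in $\epsilon$; hence the right-limits $L^{LB}=\lim_{\epsilon\to0^{+}}C_{W}^{LB}(D_{1},D_{2},\epsilon)$ and $L^{UB}=\lim_{\epsilon\to0^{+}}C_{W}^{UB}(D_{1},D_{2},\epsilon)$ exist and obey $L^{LB}\le C_{W}(X,Y;D_{1},D_{2})\le L^{UB}$. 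The whole lemma therefore reduces to proving $L^{LB}=L^{UB}$. I would also record at the outset that \emph{at} $\epsilon=0$ the two programs have identical feasible sets: by non-negativity of mutual information the lower-bound constraints $I(X,Y;U|\hat{X},\hat{Y})\le0$ and $I(\hat{X};\hat{Y}|U)\le0$ are equivalent to the exact Markov conditions (\ref{eq:mt_2}) and (\ref{eq:mt_3}) demanded by the upper bound, so $C_{W}^{LB}(D_{1},D_{2},0)=C_{W}^{UB}(D_{1},D_{2},0)$.

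When some $P(\hat{X},\hat{Y}|X,Y)$ attains $R_{X,Y}(D_{1},D_{2})$, the $\epsilon=0$ feasible set is non-empty and I would prove that both value functions are continuous at $\epsilon=0$. The easy half is monotone: shrinking the feasible set as $\epsilon\downarrow0$ can only raise the infimum, so $\limsup_{\epsilon\to0^{+}}C_{W}^{UB}\le C_{W}^{UB}(D_{1},D_{2},0)$. For the reverse, I would take $\epsilon_{n}\to0$ with near-optimizers $P_{n}$; by compactness of the simplex of joint laws (finite alphabets) a subsequence converges to some $P_{*}$, which by continuity of the rate and Markov functionals is $\epsilon=0$-feasible, while continuity of the objective gives $\lim I_{P_{n}}(X,Y;U)=I_{P_{*}}(X,Y;U)\ge C_{W}^{UB}(D_{1},D_{2},0)$. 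This yields $L^{UB}=C_{W}^{UB}(D_{1},D_{2},0)$, and identically $L^{LB}=C_{W}^{LB}(D_{1},D_{2},0)$; with the coincidence above this is exactly the first displayed equality.

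When $R_{X,Y}(D_{1},D_{2})$ is only approached, there need be no $\epsilon=0$ optimizer and the compactness route is unavailable, so I would establish $L^{UB}\le L^{LB}$ (the reverse being trivial) quantitatively. Given a near-optimal lower-bound law $P$ at level $\epsilon$, I would first \emph{project} onto the first Markov chain by re-choosing the conditional of $U$, i.e.\ $P'(x,y,\hat{x},\hat{y},u)=P(x,y,\hat{x},\hat{y})\,P(u|\hat{x},\hat{y})$. This preserves the source marginal $P(X,Y)$, the rate $I(X,Y;\hat{X},\hat{Y})$, the distortions and the value of $I(\hat{X};\hat{Y}|U)$ exactly, enforces $(X,Y)\leftrightarrow(\hat{X},\hat{Y})\leftrightarrow U$, and perturbs $I(X,Y;U)$ by an amount controlled via $D(P\|P')=I(X,Y;U|\hat{X},\hat{Y})\le\epsilon$ and Pinsker's inequality. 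It then remains to enforce $\hat{X}\leftrightarrow U\leftrightarrow\hat{Y}$, for which I would augment $U$ by a small auxiliary $V$ decoupling $\hat{X}$ and $\hat{Y}$ given $(U,V)$; since $I(\hat{X};\hat{Y}|U)\le\epsilon$, such a $V$ can be chosen so that $I(X,Y;V|U)$ and the induced change in the rate constraint are $O(\delta(\epsilon))$ with $\delta(\epsilon)\to0$. Feeding the resulting exactly-feasible upper-bound law (at level $\epsilon+\delta(\epsilon)$) back through the definitions gives $C_{W}^{UB}(D_{1},D_{2},\epsilon+\delta(\epsilon))\le C_{W}^{LB}(D_{1},D_{2},\epsilon)+\delta(\epsilon)$, and letting $\epsilon\to0^{+}$ delivers $L^{UB}\le L^{LB}$, hence $C_{W}(X,Y;D_{1},D_{2})=L^{LB}=L^{UB}$.

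The main obstacle is precisely this last projection in the non-attained case: enforcing $\hat{X}\leftrightarrow U\leftrightarrow\hat{Y}$ exactly while simultaneously holding the fixed source marginal $P(X,Y)$. The first Markov chain is imposed by a harmless re-choice of $P(U|\hat{X},\hat{Y})$, but decoupling $\hat{X}$ and $\hat{Y}$ generically disturbs $P(\hat{X},\hat{Y})$ and hence the source statistics, so the auxiliary $V$ must absorb the residual conditional dependence without moving $P(X,Y)$; bounding the cost of this construction uniformly is delicate for general, in particular continuous, alphabets, where mutual information is only lower semicontinuous and the convenient compactness of the attained case is lost.
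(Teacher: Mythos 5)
Your overall architecture coincides with the paper's: sandwich the operational quantity between $C_{W}^{LB}$ and $C_{W}^{UB}$, observe the two programs have the same feasible set at $\epsilon=0$, and reduce everything to continuity of the value functions at the origin. Your first repair step is literally the paper's construction: the paper also forms $Q(X,Y,\hat{X},\hat{Y},U)=P(\hat{X},\hat{Y})P(X,Y|\hat{X},\hat{Y})P(U|\hat{X},\hat{Y})$, notes $I^{Q}(X,Y;U|\hat{X},\hat{Y})=0$, and controls $|I^{P}(X,Y;U)-I^{Q}(X,Y;U)|$ through Pinsker's inequality applied to $I^{P}(X,Y;U|\hat{X},\hat{Y})\leq\epsilon$. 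Where you differ harmlessly: you use monotonicity of the bounds in $\epsilon$ to get existence of the limits, while the paper instead proves convexity of the three-parameter relaxation $C_{W}^{LB}(D_{1},D_{2},\epsilon_{1},\epsilon_{2},\epsilon_{3})$ in each slack (via a time-sharing mixture) to get continuity for $\epsilon>0$; and your compactness argument for the attained case is a legitimate alternative to the paper's treatment, though it silently needs a cardinality bound on $\mathcal{U}$ (a support-lemma step) before compactness of the simplex of joint laws is available.

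The genuine gap is your second repair step, for the constraint $I(\hat{X};\hat{Y}|U)\leq\epsilon$. You assert that one can append a small auxiliary $V$ enforcing $\hat{X}\leftrightarrow(U,V)\leftrightarrow\hat{Y}$ exactly, with $I(X,Y;V|U)=O(\delta(\epsilon))$, but this does not follow from smallness of the conditional mutual information: it is in effect a pointwise continuity claim for (conditional) Wyner common information, and $C_{W}$ can vastly exceed mutual information, so $I(\hat{X};\hat{Y}|U)\leq\epsilon$ is entirely compatible with every decoupling variable being high-rate. Your quantitative bound $C_{W}^{UB}(D_{1},D_{2},\epsilon+\delta(\epsilon))\leq C_{W}^{LB}(D_{1},D_{2},\epsilon)+\delta(\epsilon)$, on which the whole non-attained case rests, therefore has no justification as sketched. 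The paper does not attempt such a pointwise repair of an arbitrary near-feasible law: it proves continuity at the origin of the \emph{value function} in the slack $\epsilon_{3}$ by importing the (nontrivial, asymptotic) argument from the proof of Theorem 4.4 of \cite{Wyner_CI}, which is precisely the statement that the infimum of $I(X,Y;U)$ under near-conditional-independence converges to its exact-Markov value. You correctly flag this step as the main obstacle, but flagging it does not close it; to complete the proof you must either reproduce Wyner's continuity argument (as the paper does by reference) or replace it with a proof that operates on near-optimizers rather than on arbitrary feasible points.
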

\begin{proof}
To prove this Lemma, we employ techniques very similar to the ones
used by Wyner in \cite{Wyner_CI}. We further restrict ourselves to
discrete random variables for simplicity. However, the arguments can
be easily extended to well-behaved continuous random variables and
distortion measures using standard techniques. 

We first show that the quantity $C_{W}^{LB}(D_{1},D_{2},\epsilon)$
is a convex function of $\epsilon$ for all $\epsilon>0$. Towards
proving this, define a generalized version of $C_{W}^{LB}(D_{1},D_{2},\epsilon)$
as:
\begin{equation}
C_{W}^{LB}(D_{1},D_{2},\epsilon_{1},\epsilon_{2},\epsilon_{3})=\inf I(X,Y;U)\label{eq:inf_prob_all_3}
\end{equation}
where the infimum is over all joint densities $P(X,Y,\hat{X},\hat{Y},U)$
satisfying:
\begin{eqnarray}
I(X,Y;\hat{X},\hat{Y}) & \leq & R_{X,Y}(D_{1},D_{2})+\epsilon_{1}\nonumber \\
I(X,Y;U|\hat{X},\hat{Y}) & \leq & \epsilon_{2}\nonumber \\
I(\hat{X};\hat{Y}|U) & \leq & \epsilon_{3}\nonumber \\
E(d_{X}(X,\hat{X})) & \geq & D_{1}\nonumber \\
E(d_{Y}(Y,\hat{Y})) & \geq & D_{2}\label{eq:consr_all_3}
\end{eqnarray}
Particularly, we show that $C_{W}^{LB}(D_{1},D_{2},\epsilon_{1},\epsilon_{2},\epsilon_{3})$
is convex with respect to $\epsilon_{i}$ for any fixed values of
$\epsilon_{j}$ and $\epsilon_{k}$, $i,j,k\in\{1,2,3\}$. 

Let $\epsilon_{2}>0$ and $\epsilon_{3}>0$ be fixed. Let $\epsilon_{11}>0$
and $\epsilon_{12}>0$ be two values for $\epsilon_{1}$ and let the
corresponding optimizing distributions for (\ref{eq:inf_prob_all_3})
be $P_{1}(X,Y,\hat{X},\hat{Y},U)$ and $P_{2}(X,Y,\hat{X},\hat{Y},U)$,
respectively. Now consider $\epsilon_{1}=\theta\epsilon_{11}+(1-\theta)\epsilon_{12}$,
for some $0<\theta<1$. It is easy to check that the joint distribution
that takes value $P_{1}$ with probability $\theta$ and value $P_{2}$
with probability $1-\theta$, denoted hereafter by $P_{\theta}$,
satisfies all the constraints in (\ref{eq:consr_all_3}) for $\epsilon_{1},$$\epsilon_{2}$
and $\epsilon_{3}$. Next, consider the following series of inequalities:
\begin{eqnarray*}
 &  & C_{W}^{LB}(D_{1},D_{2},\epsilon_{1},\epsilon_{2},\epsilon_{3})\\
 &  & \leq I^{P_{\theta}}(X,Y;U)\\
 &  & =\theta I^{P_{1}}(X,Y;U)+(1-\theta)I^{P_{2}}(X,Y;U)\\
 &  & =\theta C_{W}^{LB}(D_{1},D_{2},\epsilon_{11},\epsilon_{2},\epsilon_{3})\\
 &  & \,\,\,\,\,\,\,+(1-\theta)C_{W}^{LB}(D_{1},D_{2},\epsilon_{12},\epsilon_{2},\epsilon_{3})
\end{eqnarray*}
where $I^{P}(\cdot,\cdot)$ denotes the mutual information with respect
to joint density $P$. This proves convexity of $C_{W}^{LB}(D_{1},D_{2},\epsilon_{1},\epsilon_{2},\epsilon_{3})$
with respect to $\epsilon_{1}$ for fixed $\epsilon_{2}$ and $\epsilon_{3}$.
Similar arguments lead to the conclusion that $C_{W}^{LB}(D_{1},D_{2},\epsilon_{1},\epsilon_{2},\epsilon_{3})$
is convex with respect to $(\epsilon_{1},\epsilon_{2},\epsilon_{3})$
when $\epsilon_{1}>0$, $\epsilon_{2}>0$ and $\epsilon_{3}>0$. Hence,
$C_{W}^{LB}(D_{1},D_{2},\epsilon)$ and $C_{W}^{UB}(D_{1},D_{2},\epsilon)$
are convex and continuous for all $\epsilon>0$. 

To prove that $C_{W}^{LB}(D_{1},D_{2},\epsilon_{1},\epsilon_{2},\epsilon_{3})$
is continuous at the origin, we first consider continuity with respect
to $\epsilon_{2}$ for fixed $\epsilon_{1}$ and $\epsilon_{3}$.
Let $\epsilon_{1},\epsilon_{2},\epsilon_{3}>0$ and let $P(X,Y,\hat{X},\hat{Y},U)$
be the joint density that achieves the optimum for $C_{W}^{LB}(D_{1},D_{2},\epsilon_{1},\epsilon_{2},\epsilon_{3})$.
It is sufficient for us to prove that there exists a joint density
$Q(X,Y,\hat{X},\hat{Y},U)$ that satisfies (\ref{eq:consr_all_3})
with $\epsilon_{2}=0$, and for which $I^{Q}(X,Y;U)$ is within $\delta(\epsilon_{2})$
from $C_{W}^{LB}(D_{1},D_{2},\epsilon_{1},\epsilon_{2},\epsilon_{3})$,
for some $\delta(\epsilon_{2})\rightarrow0$ as $\epsilon_{2}\rightarrow0$.
We construct the joint density $Q(\cdot)$ as follows:
\[
Q(X,Y,\hat{X},\hat{Y},U)=P(\hat{X},\hat{Y})P(X,Y|\hat{X},\hat{Y})P(U|\hat{X},\hat{Y})
\]
Observe that all the conditions in (\ref{eq:consr_all_3}) with $\epsilon_{2}=0$
are satisfied by $Q(\cdot)$, as $I^{Q}(X,Y;U|\hat{X},\hat{Y})=0$.
We need to show that $|I^{P}(X,Y;U)-I^{Q}(X,Y;U)|\leq\delta(\epsilon_{2})$
for some $\delta(\epsilon_{2})\rightarrow0$ as $\epsilon_{2}\rightarrow0$.
Towards proving this result, we have:
\begin{eqnarray*}
\epsilon_{2} & = & I^{P}(X,Y;U|\hat{X},\hat{Y})\\
 & \geq^{(a)} & \sup|P(X,Y,U|\hat{X},\hat{Y})-Q(X,Y,U|\hat{X},\hat{Y})|
\end{eqnarray*}
where $(a)$ follows from Pinsker's inequality \cite{Cover-book}
and the supremum is over all possible subsets of alphabets of $(X,Y,U,\hat{X},\hat{Y})$.
The above inequalities state that the joint densities $P(X,Y,\hat{X},\hat{Y},U)$
and $Q(X,Y,\hat{X},\hat{Y},U)$ have a total variation smaller than
$\epsilon_{2}$. Therefore, as $\epsilon_{2}\rightarrow0$, $P(\cdot)\rightarrow Q(\cdot)$.
As conditional entropy is continuous in the total variation distance
of the corresponding random variables, there exists a $\delta(\epsilon_{2})$
such that $|I^{P}(X,Y;U)-I^{Q}(X,Y;U)|\leq\delta(\epsilon_{2})$ and
$\delta(\epsilon_{2})\rightarrow0$ as $\epsilon_{2}\rightarrow0$.
This proves that $\lim_{\epsilon_{2}\rightarrow0}C_{W}^{LB}(D_{1},D_{2},\epsilon_{1},\epsilon_{2},\epsilon_{3})$
$=C_{W}^{LB}(D_{1},D_{2},\epsilon_{1},0,\epsilon_{3})$. The proof
for continuity with respect to $\epsilon_{3}$ at origin follows in
very similar lines to the proof of Theorem 4.4 in \cite{Wyner_CI}.
Hence, we omit the details here. This leads to the conclusion that:
\begin{eqnarray*}
C_{W}(X,Y;D_{1},D_{2}) & = & \lim_{\epsilon\rightarrow0}C_{W}^{LB}(D_{1},D_{2},\epsilon)\\
 & = & \lim_{\epsilon\rightarrow0}C_{W}^{UB}(D_{1},D_{2},\epsilon)
\end{eqnarray*}
The proof of this Lemma follows directly by observing that, if there
exists a joint density $P(\hat{X},\hat{Y}|X,Y)$ that achieves $R_{X,Y}(D_{1},D_{2})$,
then both the above limits converge to $C_{W}^{UB}(D_{1},D_{2},0)$,
which is same as the information theoretic characterization in (\ref{eq:mt_1}). 
\end{proof}

\section*{Appendix B: Proof of Corollary 1\label{app:Proof_Lemma_1}}
\begin{proof}
Our objective is to prove that the optimizing distribution in Theorem
\ref{thm:main} satisfies (\ref{eq:Lem_add_Mark_Prop}). We begin
with an auxiliary property of the RD-optimal conditional distribution
$P(\hat{X}^{*},\hat{Y}^{*}|X,Y)$. Recall that the RD optimal conditional
distribution minimizes $I(X,Y;\hat{X}^{*},\hat{Y}^{*})$ over all
joint distributions that satisfy the distortion constraints, $E(d_{X}(X,\hat{X}^{*}))\leq D_{1}$
and $E(d_{Y}(Y,\hat{Y}^{*}))\leq D_{2}$. Hence, it follows using
standard arguments \cite{Berger_book} that, for every distortion
pair $(D_{1},D_{2})$, the RD-optimal conditional distribution, $P(\hat{X}^{*},\hat{Y}^{*}|X,Y)$,
also minimizes the following Lagrangian for some positive constants
$\mu_{1},\mu_{2}$ and positive valued function $\lambda(x,y)$ defined
for all $x\in\mathcal{X}$,$y\in\mathcal{Y}$:
\begin{eqnarray}
L & = & I(X,Y;\hat{X}^{*},\hat{Y}^{*})\nonumber \\
 &  & -\mu_{1}E(d_{X}(X,\hat{X}^{*}))-\mu_{2}E(d_{Y}(Y,\hat{Y}^{*}))\\
 &  & -\int\lambda(x,y)dxdy\int d\hat{x}^{*}d\hat{y}^{*}P(\hat{x}^{*},\hat{y}^{*}|x,y)\nonumber 
\end{eqnarray}
Upon differentiating the above Lagrangian with respect to $P(\hat{x}^{*},\hat{y}^{*}|x,y)$
and setting it to zero leads to the necessary conditions for optimality
of the joint distribution. Some routine steps and simplifications
lead to the conclusion that the joint distribution $P(X,Y,\hat{X}^{*},\hat{Y}^{*})$
must satisfy the following:
\begin{eqnarray}
P(\hat{X}^{*},\hat{Y}^{*}|X,Y) & = & \lambda(X,Y)P(\hat{X}^{*},\hat{Y}^{*})\\
 &  & \times\exp(d_{X}(X,\hat{X}^{*}))\exp(d_{Y}(Y,\hat{Y}^{*}))\nonumber 
\end{eqnarray}
It follows that the conditional density $P(X,Y|\hat{X}^{*},\hat{Y}^{*})$
satisfies:

\begin{equation}
P(X,Y|\hat{X}^{*},\hat{Y}^{*})=\phi(X,Y)\exp(d_{X}(X,\hat{X}^{*}))\exp(d_{Y}(Y,\hat{Y}^{*}))
\end{equation}
where $\phi(X,Y)=\lambda(X,Y)P(X,Y)$. 

Next recall that the optimizing distribution in Theorem \ref{thm:main}
satisfies the following two Markov conditions:
\begin{eqnarray*}
(X,Y) & \leftrightarrow(\hat{X}^{*},\hat{Y}^{*})\leftrightarrow & U
\end{eqnarray*}
\begin{equation}
\hat{X}^{*}\leftrightarrow U\leftrightarrow\hat{Y}^{*}
\end{equation}
Hence the optimizing joint distribution can be rewritten as:
\begin{eqnarray}
P(X,Y,\hat{X}^{*},\hat{Y}^{*},U) & = & P(U)P(\hat{X}^{*}|U)P(\hat{Y}^{*}|U)\nonumber \\
 &  & \times P(X,Y|\hat{X}^{*},\hat{Y}^{*})\\
 & = & \phi_{1}(X,\hat{X}^{*},U)\phi_{2}(Y,\hat{Y}^{*},U)\phi(X,Y)\nonumber 
\end{eqnarray}
where $\phi_{1}(X,\hat{X}^{*},U)=P(U)P(\hat{X}^{*}|U)\exp(d_{X}(X,\hat{X}^{*}))$
and $\phi_{2}(Y,\hat{Y}^{*},U)=P(\hat{Y}^{*}|U)\exp(d_{Y}(Y,\hat{Y}^{*}))$.
Hence, it follows that:
\begin{equation}
P(\hat{X}^{*},\hat{Y}^{*},U|X,Y)=\phi_{1}(X,\hat{X}^{*},U)\phi_{2}(Y,\hat{Y}^{*},U)\lambda(X,Y)
\end{equation}
which implies that the Markov conditions in (\ref{eq:Lem_add_Mark_Prop})
must be satisfied, proving the Lemma. 
\end{proof}

\section*{Appendix C: Shannon Lower Bound for Vectors}

In this appendix, we review some of the definitions and results pertinent
to Shannon lower bounds for vectors of random variables. We refer
to \cite{Berger_book} (section 4.3.1) for further details on Shannon
lower bound and its properties. 

Let $\mathbf{X}$ be an n-dimensional random variable distributed
according to $p(\mathbf{X})$, and let $\rho_{i}(x_{i},\hat{x_{i}})$
$\forall i\in\{1,\ldots,N\}$ be any well defined difference distortion
measures, i.e., $\rho_{i}(x,\hat{x})=\rho_{i}(x-\hat{x})$. Let $R_{\mathbf{X}}(\mathbf{D})$
be the rate-distortion function of $\mathbf{X}$, with respect to
the given distortion measures, i.e.:
\begin{equation}
R_{\mathbf{X}}(\mathbf{D})=\inf_{P(\hat{\mathbf{X}}|\mathbf{X}):E\left[\rho_{i}(x_{i},\hat{x_{i}})\right]\leq D_{i}\,\,\forall i}\,\, I(\mathbf{X},\mathbf{\hat{X}})
\end{equation}
Then the Shannon lower bound to $R_{\mathbf{X}}(\mathbf{D})$, denoted
by $R_{\mathbf{X}}^{L}(\mathbf{D})$, is given by:
\begin{eqnarray}
R_{\mathbf{X}}^{L}(\mathbf{D}) & = & H(\mathbf{X})\nonumber \\
 &  & -\sup_{s_{1},\ldots,s_{n}<0}\sum_{i=1}^{N}\Bigl\{ s_{i}D_{i}-\log\int e^{s_{i}\rho_{i}(z_{i})}dz_{i}\Bigr\}\nonumber \\
 & = & H(\mathbf{X})-\sum_{i=1}^{N}\max_{g_{i}\in G_{i}(D_{i})}H(g_{i})\label{eq:Shannon_LB_Defn}
\end{eqnarray}
where, $G_{i}(D_{i})$ denotes the set of all joint distributions
such that:
\begin{equation}
\int\rho_{i}(z_{i})g(z_{i})dz_{i}\leq D_{i}\label{eq:Shannon_LB_Cond}
\end{equation}
The above derivation is a direct extension of the derivation in Section
4.3.1 in \cite{Berger_book}, to vectors of random variables. It is
easy to verify that the distribution $g_{i}$ that achieves the maximum
in (\ref{eq:Shannon_LB_Defn}) is given by:
\begin{equation}
g_{i}(z)=\frac{e^{s_{i}\rho(z)}}{\int e^{s_{i}\rho(z)}dz}\label{eq:gi_degn}
\end{equation}
where, $s_{i}$ is such that:
\begin{equation}
\int g_{i}(z)\rho_{i}(z)dz=D_{i}\label{eq:di_cond}
\end{equation}

Shannon showed that $R_{\mathbf{X}}^{L}(\mathbf{D})\leq R_{\mathbf{X}}(\mathbf{D})$
always holds (see \cite{Berger_book} for details). The following
lemma states the necessary and sufficient conditions for $R_{\mathbf{X}}^{L}(\mathbf{D})=R_{\mathbf{X}}(\mathbf{D})$.
\begin{lem}
$R_{\mathbf{X}}^{L}(\mathbf{D})=R_{\mathbf{X}}(\mathbf{D})$ iff the
distribution of $\mathbf{X}$ can be expressed as: 
\begin{equation}
p(\mathbf{x})=\int q(\mathbf{\hat{x}})\prod_{i}^{n}g_{i}(x_{i}-\hat{x}_{i})d\mathbf{\hat{x}}\label{eq:Shannon_LB_Eq_Cond}
\end{equation}
i.e., $\mathbf{X}$ can be expressed as the sum of two statistically
independent random vectors, $\hat{\mathbf{X}}$ and $\mathbf{Z}$,
where $\mathbf{Z}$ is distributed according to:
\begin{equation}
g(\mathbf{Z})=\prod_{i}^{n}g_{i}(z_{i})\label{eq:gz_defn}
\end{equation}
where \textup{$g_{i}(z)$ is given by (\ref{eq:gi_degn}).}\end{lem}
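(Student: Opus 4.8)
The plan is to establish the tightness criterion through the chain of inequalities that underlies the Shannon lower bound, tracking the equality condition at each step, and then to read off both directions of the ``iff''. First I would fix any test channel $P(\hat{\mathbf{X}}|\mathbf{X})$ meeting the componentwise constraints $E[\rho_i(X_i-\hat{X}_i)]\leq D_i$, set $\mathbf{Z}=\mathbf{X}-\hat{\mathbf{X}}$, note that conditioned on $\hat{\mathbf{X}}$ the variables $\mathbf{X}$ and $\mathbf{Z}$ differ by a constant so $H(\mathbf{X}|\hat{\mathbf{X}})=H(\mathbf{Z}|\hat{\mathbf{X}})$, and write
\begin{align*}
I(\mathbf{X};\hat{\mathbf{X}}) &= H(\mathbf{X})-H(\mathbf{Z}|\hat{\mathbf{X}})\\
&\geq H(\mathbf{X})-H(\mathbf{Z})\\
&\geq H(\mathbf{X})-\sum_{i=1}^{n}H(Z_i).
\end{align*}
The first inequality is conditioning reducing entropy and the second is subadditivity of (differential) entropy. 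Since each $Z_i$ satisfies $E[\rho_i(Z_i)]\leq D_i$, its law lies in $G_i(D_i)$, so the maximum-entropy characterization gives $H(Z_i)\leq\max_{g_i\in G_i(D_i)}H(g_i)$, closing the chain at $R_{\mathbf{X}}^{L}(\mathbf{D})$ and re-deriving the universal bound $R_{\mathbf{X}}(\mathbf{D})\geq R_{\mathbf{X}}^{L}(\mathbf{D})$.

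I would dispatch the converse direction (decomposition implies equality) first, since it is immediate. If $p(\mathbf{x})=\int q(\hat{\mathbf{x}})\prod_i g_i(x_i-\hat{x}_i)\,d\hat{\mathbf{x}}$, this joint law realizes $\hat{\mathbf{X}}\sim q$ and $\mathbf{X}=\hat{\mathbf{X}}+\mathbf{Z}$ with $\mathbf{Z}\perp\hat{\mathbf{X}}$ and $\mathbf{Z}\sim\prod_i g_i$, which is a legitimate test channel for the correct source marginal. Because each $g_i$ saturates its constraint (\ref{eq:di_cond}), the per-coordinate distortions equal $D_i$, and $I(\mathbf{X};\hat{\mathbf{X}})=H(\mathbf{X})-H(\mathbf{Z})=H(\mathbf{X})-\sum_i H(g_i)=R_{\mathbf{X}}^{L}(\mathbf{D})$. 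Combined with the universal lower bound, this forces $R_{\mathbf{X}}(\mathbf{D})=R_{\mathbf{X}}^{L}(\mathbf{D})$.

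For the forward direction (equality implies decomposition) I would invoke a test channel attaining $R_{\mathbf{X}}(\mathbf{D})=R_{\mathbf{X}}^{L}(\mathbf{D})$ and demand that every inequality in the chain hold with equality. Equality in the conditioning step forces $\mathbf{Z}$ to be independent of $\hat{\mathbf{X}}$; equality in subadditivity forces the coordinates $Z_1,\dots,Z_n$ to be mutually independent; and equality in the maximum-entropy step forces each $Z_i$ to have exactly the distribution $g_i$ of (\ref{eq:gi_degn}). Writing $q$ for the marginal law of $\hat{\mathbf{X}}$, these three facts reassemble precisely into the claimed convolution form $p(\mathbf{x})=\int q(\hat{\mathbf{x}})\prod_i g_i(x_i-\hat{x}_i)\,d\hat{\mathbf{x}}$, completing the equivalence.

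The main obstacle I anticipate is the forward direction's reliance on an attained minimizer: if the infimum defining $R_{\mathbf{X}}(\mathbf{D})$ is only approached and not achieved, the equality conditions hold only asymptotically and I would need a compactness argument along a minimizing sequence---passing to a weak limit and using lower semicontinuity of mutual information together with continuity of the distortion functionals---to recover an exact decomposition. The remaining points are technical: I must ensure the differential entropies are finite and the exponential-family densities $g_i$ are genuinely well-defined for the given difference distortion measures, so that the equality cases of conditioning, subadditivity, and the maximum-entropy bound are equivalent to the stated independence and marginal statements rather than holding vacuously.
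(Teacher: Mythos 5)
Your proposal is correct and follows essentially the same route as the paper, which proves this lemma simply by citing it as a direct extension of Theorem 4.3.1 in Berger's book: that theorem's proof is exactly your chain $I(\mathbf{X};\hat{\mathbf{X}})=H(\mathbf{X})-H(\mathbf{Z}|\hat{\mathbf{X}})\geq H(\mathbf{X})-\sum_i H(Z_i)\geq R_{\mathbf{X}}^{L}(\mathbf{D})$ with the equality conditions (independence of $\mathbf{Z}$ and $\hat{\mathbf{X}}$, independence across coordinates, and each $Z_i\sim g_i$) read off step by step. Your explicit flagging of the attainment-of-the-infimum issue and the regularity of the $g_i$ is a careful touch beyond the paper's one-line citation, but it does not change the substance of the argument.
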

\begin{proof}
Direct extension of Theorem 4.3.1 in \cite{Berger_book}.
\end{proof}
It follows from the above Lemma that, if Shannon lower bound is tight,
$\mathbf{\hat{X}}$ is the RD-optimal reconstruction and the RD-optimal
backward channel from $\mathbf{\hat{X}}$ to $\mathbf{X}$ is additive
and can be written as:
\begin{equation}
\mathbf{X}=\hat{\mathbf{X}}+\mathbf{Z}
\end{equation}
where $\mathbf{Z}\sim\prod_{i}^{n}g_{i}(Z_{i})$. Therefore, if Shannon
lower bound is tight, the components of $\mathbf{X}$ are independent
given $\hat{\mathbf{X}}$, i.e., the joint density $p(\mathbf{X},\hat{\mathbf{X}})$
is of the form:
\begin{equation}
p(\mathbf{X},\hat{\mathbf{X}})=q(\hat{\mathbf{X}})\prod_{i}^{n}p_{i}(X_{i}|\hat{X}_{i})\label{eq:Shannon_LB_Lemma_Cond}
\end{equation}

\bibliographystyle{ieeetr}
\bibliography{Journal_Bibtex}

\newcommand{\noopsort}[1]{}
\begin{thebibliography}{10}

\bibitem{Korner_CI}
P.~G$\mathrm{\acute{a}}$cs and J.~K$\mathrm{\ddot{o}}$rner, ``Common
  information is far less than mutual information,'' {\em Problems of Control
  and Information Theory}, 1973.

\bibitem{Wyner_CI}
A.~Wyner, ``The common information of two dependent random variables,'' {\em
  IEEE Trans. on Information Theory}, vol.~21, pp.~163 -- 179, Mar 1975.

\bibitem{GW}
R.~Gray and A.~Wyner, ``Source coding for a simple network,'' {\em Bell Systems
  Tech. Journal}, vol.~53, pp.~1681--1721, Nov. 1974.

\bibitem{ITW_CI}
K.~Viswanatha, E.~Akyol, and K.~Rose, ``An optimal transmit-receive rate
  tradeoff in {G}ray-{W}yner network and its relation to common information,''
  in {\em Proceedings of IEEE Information Theory Workshop (ITW)}, pp.~105
  --109, oct. 2011.

\bibitem{dual_CI}
S.~Kamath and V.~Anantharam, ``A new dual to the
  {G}$\mathrm{\acute{a}}$cs-{K}$\mathrm{\ddot{o}}$rner common information
  defined via the {G}ray-{W}yner system,'' in {\em Communication, Control, and
  Computing (Allerton), 48th Annual Allerton Conference on}, pp.~1340 --1346,
  Oct 2010.

\bibitem{Yamamoto_CI}
H.~Yamamoto, ``Coding theorems for {S}hannon's cipher system with correlated
  source outputs, and common information,'' {\em IEEE Trans. on Information
  Theory}, vol.~40, pp.~85 --95, jan 1994.

\bibitem{Lossy_CI_Xu}
G.~Xu, W.~Liu, and B.~Chen, ``Wyner's common information for continuous random
  variables - a lossy source coding interpretation.,'' in {\em CISS}, pp.~1--6,
  IEEE, 2011.

\bibitem{Lossy_CI}
K.~Viswanatha, E.~Akyol, and K.~Rose, ``Lossy common information of two
  dependent random variables,'' in {\em Information Theory, IEEE International
  Symposium on (ISIT)}, Jul 2012.

\bibitem{Ahlswede74oncommon}
R.~Ahlswede and J.~K$\mathrm{\ddot{o}}$rner, ``On common information and
  related characteristics of correlated information sources,'' in {\em
  Preprint. Presented at the 7th Prague Conference on Information Theory},
  1974.

\bibitem{VKG}
R.~Venkataramani, G.~Kramer, and V.~Goyal, ``Multiple descriptions coding with
  many channels,'' {\em IEEE Trans. on Information Theory}, vol.~49,
  pp.~2106--2114, Sep 2003.

\bibitem{Berger_book}
T.~Berger, {\em Rate distortion theory}.
\newblock Prentice-Hall, Englewood New Jersey, 1971.

\bibitem{cond_rd}
R.~Gray, ``A new class of lower bounds to information rates of stationary
  sources via conditional rate-distortion functions,'' {\em Information Theory,
  IEEE Trans. on}, vol.~19, pp.~480 -- 489, jul 1973.

\bibitem{Jayanth_Vector_SR}
J.~Nayak, E.~Tuncel, D.~Gunduz, and E.~Erkip, ``Successive refinement of vector
  sources under individual distortion criteria,'' {\em IEEE Trans. on
  Information Theory}, vol.~56, pp.~1769 --1781, Apr 2010.

\bibitem{verdu_MMSE}
Y.~Wu and S.~Verdu, ``Functional properties of minimum mean-square error and
  mutual information,'' {\em IEEE Trans. on Information Theory}, vol.~58,
  pp.~1289 --1301, Mar 2012.

\bibitem{Zamir_MMSE}
R.~Zamir, ``A proof of the {F}isher information inequality via a data
  processing argument,'' {\em IEEE Trans. on Information Theory}, vol.~44,
  pp.~1246 --1250, May 1998.

\bibitem{CGK_SR}
K.~Viswanatha, E.~Akyol, T.~Nanjundaswamy, and K.~Rose, ``On common information
  and the encoding of sources that are not successively refinable,'' in {\em
  Proceedings of IEEE Information Theory Workshop (ITW)}, Sep 2012.

\bibitem{effros_CI}
D.~Marco and M.~Effros, ``On lossless coding with coded side information,''
  {\em Information Theory, IEEE Trans. on}, vol.~55, pp.~3284 --3296, july
  2009.

\bibitem{Paul_Cuff}
P.~Cuff, ``Communication requirements for generating correlated random
  variables,'' in {\em IEEE International Symposium on Information Theory},
  pp.~1393--1397, 2008.

\bibitem{fusion}
J.~Nayak, S.~Ramaswamy, and K.~Rose, ``Correlated source coding for fusion
  storage and selective retrieval,'' in {\em Proceedings of IEEE International
  Symposium on Information Theory}, pp.~92--96, Sep 2005.

\bibitem{DIR}
K.~Viswanatha, E.~Akyol, and K.~Rose, ``On optimum communication cost for joint
  compression and dispersive information routing,'' in {\em Proceedings of IEEE
  Information Theory Workshop (ITW)}, pp.~1--5, Sep 2010.

\bibitem{Cover-book}
T.~Cover and J.~Thomas, {\em Elements of information theory}.
\newblock Wiley-{I}nterscience, 1991.

\end{thebibliography}

\end{document}